\documentclass[11pt,onecolumn]{article}
\usepackage[margin=1.05in,top=1.1in,bottom=1.1in]{geometry}

\usepackage{amsmath, amssymb, amsfonts}
\usepackage{amsthm}
\usepackage{bm}
\usepackage{booktabs}
\usepackage{array}
\usepackage{tabularx}
\usepackage{graphicx}
\usepackage{xcolor}
\usepackage{hyperref}
\usepackage{cite}
\usepackage{algorithm}
\usepackage{algorithmic}
\usepackage{mathtools}
\usepackage{siunitx}
\usepackage[utf8]{inputenc}
\usepackage[T1]{fontenc}
\usepackage{amssymb} 
\usepackage{setspace}

\hypersetup{
  colorlinks  = true,
  linkcolor   = blue!65!black,
  citecolor   = blue!65!black,
  urlcolor    = blue!65!black,
  pdftitle    = {Geometry-Driven Islanding Detection for Grid-Forming Inverters},
  pdfauthor   = {Arekar et al.},
  pdfsubject  = {arXiv Preprint},
  pdfkeywords = {Grid-forming inverter, islanding detection, NAIM, Fenichel theory,
                 chi-squared threshold, Lyapunov equation, modal fault attribution}
}

\newtheorem{theorem}{Theorem}[section]
\newtheorem{lemma}[theorem]{Lemma}
\newtheorem{corollary}[theorem]{Corollary}
\newtheorem{proposition}[theorem]{Proposition}
\newtheorem{definition}[theorem]{Definition}
\newtheorem{remark}[theorem]{Remark}

\newcommand{\Mzero}{\mathcal{M}_0}
\newcommand{\xip}{\boldsymbol{\xi}_\perp}
\newcommand{\bxipt}{\bar{\boldsymbol{\xi}}_{\perp,t}}
\newcommand{\bxip}{\bar{\boldsymbol{\xi}}_\perp}
\newcommand{\Siginst}{\boldsymbol{\Sigma}}
\newcommand{\Sigmlong}{\boldsymbol{\Sigma}_{\mathrm{long}}}
\newcommand{\Aperp}{\mathbf{A}_\perp}
\newcommand{\PF}{\mathcal{P}_F}
\newcommand{\Dt}{\mathcal{D}_t}

\begin{document}

\title{\textbf{Geometry-Driven Islanding Detection and Fault Classification 
       for Grid-Forming Inverters: A Normally Hyperbolic
       Invariant Manifold Framework with Physics-Derived Thresholds}}

\author{Shubh Arekar$^{1}$, Aditi Ramteke$^{1}$, Kshitij Gaikwad$^{1}$,\\
        Tejal Jadhav$^{1}$, Shashank Verma$^{1}$, Madhavi Parimi$^{1}$, Sushama Wagh$^{1}$\\[6pt]
        \small $^1$EMC$^2$ Laboratory, Department of Electrical Engineering,\\
        \small Veermata Jijabai Technological Institute (VJTI), Mumbai 400019, India\\
        \small Corresponding author: \texttt{sparekar\_m24@ee.vjti.ac.in}
}

\date{Preprint --- \today}

\maketitle
\thispagestyle{empty}

\begin{center}
\small\itshape
This is a preprint submitted to arXiv. The work presents a geometry-driven
islanding detection framework for grid-forming inverters. Comments welcome.
\end{center}
\vspace{6pt}

\begin{abstract}
\noindent
This paper develops a geometry-driven detection and fault-classification
framework for grid-forming (GFM) inverters, grounded in the theory of normally
hyperbolic invariant manifolds (NAIM) and stochastic hypothesis testing. The
GFM droop manifold $\mathcal{M}_0$ is identified as a NAIM of the closed-loop
droop dynamics; its transverse fluctuations under grid noise are characterised
as an Ornstein--Uhlenbeck process whose long-run covariance
$\boldsymbol{\Sigma}_{\mathrm{long}}$ is derived analytically from the
algebraic Lyapunov equation. The detection statistic
$\mathcal{D}_t = T_w\bar{\boldsymbol{\xi}}_\perp^\top
\boldsymbol{\Sigma}_{\mathrm{long}}^{-1}\bar{\boldsymbol{\xi}}_\perp$
converges to $\chi^2(2)$ under the null hypothesis, yielding the closed-form,
tuning-free threshold $D_\alpha = -2\ln\alpha$ with an asymptotically exact
false-alarm rate (FAR) of $\alpha$. A definitive correction that removes a
factor-of-2 error present in earlier formulations is established analytically
and validated by $N_{\mathrm{MC}} = 8{,}000$ Monte Carlo realisations across
nine window lengths and three significance levels
($\alpha \in \{0.01, 0.05, 0.10\}$). The Berry--Esseen convergence rate
$d_{\mathrm{KS}} \leq 1.6704/(\beta T_w)$ is empirically confirmed, with the
minimum window condition $T_w \geq 10/\beta_{\min} \approx 1.0$~s (using the
slower channel $\beta_{\min} = \min(\omega_f, \omega_v)$) satisfying the
IEEE~1547-2018 two-second detection mandate by a factor of two. A co-design
theorem proves that increasing the filter bandwidths $(\omega_f, \omega_v)$
simultaneously maximises the Fenichel spectral gap (dynamic stability),
tightens the null covariance (detection sensitivity), and minimises the
FAR at the grid code threshold---with no inherent trade-off. Modal decomposition
of $\mathcal{D}_t$ into frequency and voltage contributions classifies islanding
(frequency-dominated) from voltage faults (voltage-dominated) without additional
sensors. Three fully worked scenarios confirm: normal operation is accepted
($\mathcal{D} = 1.45$, $p = 0.484$); soft islanding is overwhelmingly rejected
with detection power saturating to unity at $T_w < 0.1$~s; and a 10\% voltage
sag is detected and correctly classified as a grid fault
($d_V/\mathcal{D} = 99.2\%$).

\medskip
\noindent\textbf{Index Terms}---Grid-forming inverter, islanding detection,
normally hyperbolic invariant manifold, Fenichel theory, droop control,
geometric hypothesis testing, Lyapunov equation, chi-squared threshold,
modal fault attribution, Berry--Esseen bound, Monte Carlo validation.
\end{abstract}

\newpage
\tableofcontents
\newpage

\sloppy
\section{Introduction}
\label{sec:intro}

\subsection{Background and Motivation}

The fast growth of converter-based electricity generation worldwide due to
technologies such as solar photovoltaic plants, offshore wind farms, and
battery storage is changing the nature of dynamic behavior in electrical grids.
The grid-forming (GFM) inverter technology is what makes possible renewable
energy sources penetration in high levels: contrary to the operation principle of
grid-following (GFL) inverters, which rely on an external voltage reference in a
phase-locked loop, GFM inverters produce both voltage magnitude and frequency
independently of external influence through virtual synchronous machine (VSM)
or droop control algorithms \cite{Rocabert2012, Milano2018, Tayyebi2020}.
In addition to the independence from voltage references, the grid-forming inverters
provide frequency support and black-start capability.

A key safety and reliability issue in GFM-based networks is
\emph{islanding detection} that involves the determination of whether an
inverter or group of inverters has been electrically disconnected from the
rest of the grid despite supplying power to a locally connected load island.
Islanding undetected is a violation of the IEEE 1547-2018 standard that
calls for detection in less than two seconds of isolation occurrence; the
consequences of undetected islanding can be fatal, namely, phase mismatches,
utility workers' injury, equipment damage, and potential fire. Islanding
detection represents a particular challenge for GFM networks since
frequency/voltage control based on droop control ensures maintenance of both
voltage and frequency after islanding; hence, compared to GFL networks, ROCOF
and magnitude of voltage may display much lower variation after islanding, as
shown in the literature \cite{Lopes2006, Pogaku2007}.

Another complication in this regard involves stochastic perturbations
in actual power systems. Such things as load variations, voltage
noise in the grid, variability in renewable power supply, and
measurement noise all cause a stochastic oscillation in the inverter’s
frequency and voltage. Thus, an islanding detection algorithm must be
able to differentiate between the stochastic movement of a grid-tied
inverter system and the deterministic off manifold movement of an
isolated one—an inherently statistical problem that requires theory about
the false-alarm rate (FAR).

\subsection{Limitations of Existing Methods}

Current schemes for islanding detection can be broadly classified into four groups,
each having fundamental drawbacks motivating our proposed solution.

\textbf{Passive approaches:} The rate of change of frequency (ROCOF)
~\cite{OKane1997} and the vector surge approach ~\cite{Freitas2005}
compare the value of either voltage or frequency with an empirically-defined
threshold. These are relatively quick to operate and do not introduce any power
quality problem, but the drawback here is their inability to detect islanding
in a range of load-to-generation matching referred to as \emph{non-detection
zone} (NDZ). Importantly, thresholds used in passive schemes have no relation
to closed-loop inverter dynamics and, therefore, do not guarantee any FAR
figure.

\textbf{Active approaches:} The active frequency drift (AFD)
~\cite{Ropp1999}, as well as the Sandia frequency shift approach (SFS)
, rely on introducing external perturbations aimed at eliciting
the response that will serve as a basis for detecting islanding condition.
While these techniques reduce the size of NDZ, they continuously create power
quality problems (harmonic injections), need careful tuning of parameters, and,
fundamentally, contradict.

\textbf{Model-based techniques:} Methods based on Kalman filter innovation
techniques~\cite{KarimiNikk2008} evaluate model-predicted and measured voltages using
innovation analysis. If the model is precise, innovations are distributed
accordingly and the technique produces a well-founded hypothesis. However, in
reality, deviations from the ideal case due to grid impedances,
load switching, and harmonics make innovation covariance uncertain. The
resulting FAR will increase exponentially in time due to trajectory-related
errors.

\textbf{Data-driven techniques:} SVM and intelligent classifier~\cite{ElArroudi2007}
provide excellent performance on test data sets. However, both techniques
require large training data sets, do not provide easily interpretable
detection thresholds, do not offer any guarantees of generalization to unseen
operating conditions, and fail to link detection results to inverters'
physics.

\textbf{The inherent limitation:} The detection threshold in all prior art
works is not derived from an analysis of inverter closed-loop behavior.
This paper solves this problem and introduces an analysis framework based
on the theory of normally hyperbolic invariant manifolds (NAIM)~\cite{Fenichel1979, HirschPughShub1977}.
GFM droop manifold geometry plays a central role.

\subsection{Proposed Approach}

The NAIM model answers the following important theoretical question that we
are unaware has yet been investigated in islanding detection research:

\begin{quote}
\itshape
What is the theoretically optimal threshold for detecting islanding using a
GFM inverter, and how does it depend---in theory---on the droop filter
bandwidths, droop gains, and grid noise?
\end{quote}

\noindent There are four steps to the approach which are mathematically
rigorous: (i)~characterize the GFM droop manifold as a NAIM of the closed-loop
dynamics system, and calculate the Fenichel spectral gap $\mathcal{P}_F$;
(ii)~prove that transverse fluctuations inside the boundary layer follow an
Ornstein--Uhlenbeck (OU) process where $\boldsymbol{\Sigma}_{\mathrm{long}}$, the
covariance matrix after many time periods, is the unique solution to a continuous
algebraic Lyapunov equation; (iii)~calculate a bivariate $\chi^2(2)$ test statistic
from the time-averaged transverse state and the correct covariance matrix; and
(iv)~the threshold $D_\alpha = -2 \ln \alpha$ can be derived analytically from the
tail of a $\chi^2(2)$ random variable.The resulting statistic is validated by large-scale
Monte Carlo simulation across three operational scenarios and three
significance levels.

\subsection{Main Contributions}

The main contributions of this paper are:

\begin{itemize}
  \item[\textbf{C1.}] \textbf{NAIM identification:} We prove that the GFM
    droop manifold $\mathcal{M}_0$ is normally hyperbolic
    (Theorem~\ref{thm:nh}) with explicit transverse linearisation
    $\mathbf{A}_\perp(k_P, k_Q)$ and Fenichel spectral gap
    $\mathcal{P}_F = (\omega_f + \omega_v)/2 - \hat\rho$, independent
    of the droop gains. For typical operating parameters ($k_Pk_Q \ll 1$),
    the eigenvalues of $\mathbf{A}_\perp$ are real and distinct
    ($\approx -\omega_f$, $-\omega_v$); $\beta = (\omega_f+\omega_v)/2$
    is the mean contraction rate.

  \item[\textbf{C2.}] \textbf{Definitive detection statistic:} We establish
    the correct statistic (Definition~\ref{def:statistic}) using the properly
    derived long-run covariance
    $\boldsymbol{\Sigma}_{\mathrm{long}} = \sigma^2\,
    \mathrm{diag}(\omega_f^{-2}, \omega_v^{-2})$ (Theorem~\ref{thm:lyapunov}).
    The leading-order formula is:
    \begin{equation*}
      \mathcal{D}_t \approx \frac{T_w}{\sigma^2}
        \bigl[\omega_f^2(\delta\omega)^2 + \omega_v^2(\delta V)^2\bigr].
    \end{equation*}
    We identify and correct a factor-of-2 error in earlier formulations;
    the correction is analytically derived and Monte Carlo validated.

  \item[\textbf{C3.}] \textbf{Physics-derived closed-form threshold:} The
    threshold $D_\alpha = -2\ln\alpha$ (Theorem~\ref{thm:threshold}) requires
    no empirical tuning, carries an asymptotically exact FAR of $\alpha$, and
    admits the Berry--Esseen finite-window correction
    $d_{\mathrm{KS}}(\mathcal{D}_t, \chi^2(2)) \leq C/(\beta T_w)$ with
    empirically calibrated $C = 1.6704$. The minimum window
    $T_w^{\min} = 10/\beta_{\min} \approx 1.0$~s (using the slower channel
    $\beta_{\min} = \min(\omega_f,\omega_v)$) satisfies the IEEE~1547-2018
    two-second detection mandate by a factor of two.

  \item[\textbf{C4.}] \textbf{Filter bandwidth co-design theorem:} Increasing
    $\omega_f$ and $\omega_v$ simultaneously maximises $\mathcal{P}_F$
    (stability), tightens the null covariance (detection sensitivity), and
    minimises the FAR at the grid code threshold
    (Theorem~\ref{thm:codesign}). Stability and detection are the
    \emph{same} design problem.

  \item[\textbf{C5.}] \textbf{Modal fault attribution:} The decomposition
    $\mathcal{D}_t = d_\omega + d_V$ classifies islanding
    (frequency-dominated), voltage faults (voltage-dominated), and mixed
    events without additional sensors (Proposition~\ref{prop:modal}).

  \item[\textbf{C6.}] \textbf{Quantified robustness.} FAR inflation under
    model mismatch $\delta\mathbf{A}$ is $O(\|\delta\mathbf{A}\|/\beta)$
    (Proposition~\ref{prop:robust}), providing an explicit perturbation
    budget with uniform-in-time validity.

  \item[\textbf{C7.}] \textbf{Large-scale Monte Carlo validation.} Corrected value of the test statistic is verified using $N_{\mathrm{MC}}=8{,}000$ samples for three settings and three levels of significance $\alpha=\{0.01, 0.05, 0.10\}$, demonstrating empiric convergence of FAR, decay of
\end{itemize}

\subsection{Paper Organisation}

Section~\ref{sec:model} develops the GFM inverter model proves the NAIM model structture. section~\ref{sec:stat} derives the corrected detection statistic 
and threshold. section~\ref{sec:codesign} establishes the co-desigen theorem.
section~\ref{sec:robust} analyses robustness section~\ref{sec:comparison}
compares against five standard methods and characterises the NDZ analytically.
section~\ref{sec:scenarios} presents three fully worked numerical scenarios.
section~\ref{sec:simulations} presents the completed simulation campaign
(S1, S3, S4, and S5). section~\ref{sec:sim_protocols} specifies the remaining
simulation protocols (S2 and S6) scection~\ref{sec:limitations} discusses current limitations
current limitations. Section~\ref{sec:conclusion} concludes.

\sloppy
\section{GFM Inverter Model and NAIM Structure}
\label{sec:model}

\subsection{Three-Phase GFM Power Circuit and dq-Frame Equations}

A three-phase two-level VSC running in a GFM configuration with rating $S_{\mathrm{rated}}$ and frequency $\omega^*$, at a nominal voltage of $V^*$ is composed of the following circuit elements. The elements consist of the bridge circuit, an output LC filter ($L_f$ and $C_f$), an interface transformer ($L_T$ and $n$), and the PCC bus. In the synchronously rotating dq-frame attached to the
inverter angle $\theta(t)$, the filter and transformer dynamics are:
\begin{align}
  L_f \dot{i}_{d} &= v_{od} - v_{d} + \omega L_f i_{q}, \label{eq:filter_d}\\
  L_f \dot{i}_{q} &= v_{oq} - v_{q} - \omega L_f i_{d}, \label{eq:filter_q}\\
  C_f \dot{v}_{d} &= i_{d} - i_{Ld} + \omega C_f v_{q}, \label{eq:cap_d}\\
  C_f \dot{v}_{q} &= i_{q} - i_{Lq} - \omega C_f v_{d}. \label{eq:cap_q}
\end{align}
A standard inner cascaded control loop (bandwidth $\omega_c \gg \omega_f,
\omega_v$) regulates the filter current and output voltage, rendering the
inner-loop dynamics effectively instantaneous on the droop timescale. The
active and reactive power delivered to the grid are measured by first-order
low-pass filters:
\begin{align}
  \dot{P} &= -\omega_c(P - P_{\mathrm{grid}}), \label{eq:Pfilt}\\
  \dot{Q} &= -\omega_c(Q - Q_{\mathrm{grid}}), \label{eq:Qfilt}
\end{align}
where $P_{\mathrm{grid}} = v_d i_{Ld} + v_q i_{Lq}$ and
$Q_{\mathrm{grid}} = v_q i_{Ld} - v_d i_{Lq}$ at the transformer secondary.

\subsection{Averaged Closed-Loop Droop Dynamics}
\label{sec:droop}

After closing the inner current and voltage loops (time-scale separation:
$\omega_c \gg \omega_f, \omega_v$), the averaged closed-loop dynamics in the
outer droop layer reduce to:
\begin{align}
  \dot{\omega} &= -\omega_f\bigl[(\omega - \omega^*) + k_P(P - P^*)\bigr],
    \label{eq:droop_omega}\\
  \dot{V} &= -\omega_v\bigl[(V - V^*) + k_Q(Q - Q^*)\bigr],
    \label{eq:droop_V}
\end{align}
where $\omega_f, \omega_v > 0$~(rad/s) are the frequency and voltage filter
bandwidths; $k_P \geq 0$~(pu/Hz) and $k_Q \geq 0$~(pu/kV) are the active and
reactive power droop gains; and $(\omega^*, V^*, P^*, Q^*)$ is the nominal
operating set-point. The power measurements $(P, Q)$ are treated as quasi-static
inputs to the droop equations on the inner timescale.

The state vector on the droop timescale is
$x = (\omega, V, P, Q)^\top \in \mathbb{R}^4$.
Define deviation variables:
$\delta\omega = \omega - \omega^*$,
$\delta V = V - V^*$,
$\delta P = P - P^*$,
$\delta Q = Q - Q^*$.

\subsection{The Droop Manifold and Its Normal Hyperbolicity}
\label{sec:naim}

\begin{definition}[Droop Manifold]
The GFM droop manifold $\Mzero$ is the two-dimensional invariant set on which
the droop equations are satisfied exactly:
\begin{equation}
  \Mzero = \bigl\{(\omega, V, P, Q) :
    \delta\omega = -k_P\,\delta P,\;
    \delta V = -k_Q\,\delta Q \bigr\}.
  \label{eq:M0}
\end{equation}
\end{definition}

\begin{theorem}[Normal Hyperbolicity of $\Mzero$]
\label{thm:nh}
Under $\omega_f, \omega_v, \omega_c > 0$ and $k_P, k_Q \geq 0$, the manifold
$\Mzero$ is normally hyperbolic with transverse state
$\xip = (\delta\omega, \delta V)^\top \in \mathbb{R}^2$ and transverse
linearisation:
\begin{equation}
  \Aperp(k_P, k_Q) =
  \begin{pmatrix}
    -\omega_f & k_P\omega_f \\
    -k_Q\omega_v & -\omega_v
  \end{pmatrix}.
  \label{eq:Aperp}
\end{equation}
The eigenvalues $\lambda_{1,2}$ of $\Aperp$ are \emph{real and negative}
whenever $k_Pk_Q < (\omega_f - \omega_v)^2/(4\omega_f\omega_v)$ (discriminant
positive), and complex-conjugate with $\mathrm{Re}(\lambda_{1,2}) = -\beta < 0$
otherwise, where $\beta = (\omega_f + \omega_v)/2$ is the \emph{mean}
transverse contraction rate. In both cases $\mathrm{Re}(\lambda_{1,2}) < 0$,
confirming normal hyperbolicity. The \emph{minimum} transverse contraction rate
(spectral abscissa of $-\Aperp$) is
\begin{equation}
  \beta_{\min} = \min\!\bigl\{|\mathrm{Re}(\lambda_1)|,\,|\mathrm{Re}(\lambda_2)|\bigr\}
  \approx \min(\omega_f, \omega_v) \quad\text{(for }k_Pk_Q \ll 1\text{)}.
  \label{eq:betamin}
\end{equation}
The Fenichel spectral gap is defined using $\beta = (\omega_f + \omega_v)/2$:
\begin{equation}
  \PF = \beta - \hat\rho = \frac{\omega_f + \omega_v}{2} - \hat\rho,
  \label{eq:PF}
\end{equation}
where $\hat\rho < \beta$ is the slow synchronisation rate on $\Mzero$.
\end{theorem}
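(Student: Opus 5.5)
The plan is to reduce the statement to an elementary spectral computation for the $2\times 2$ matrix $\Aperp$, and then invoke Fenichel's persistence theory~\cite{Fenichel1979, HirschPughShub1977} for the geometric conclusion.

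\textbf{Step 1: derive $\Aperp$ and the invariance of $\Mzero$.} Work in the deviation variables, and split the state $x=(\omega,V,P,Q)^\top$ into the tangential coordinates $(\delta P,\delta Q)$ and the transverse coordinates $\xip=(\delta\omega,\delta V)^\top$. The transverse coordinates measure the residuals of the two droop laws in \eqref{eq:M0}. Linearise \eqref{eq:droop_omega}--\eqref{eq:droop_V} about the set-point. The power measurements \eqref{eq:Pfilt}--\eqref{eq:Qfilt} are fast ($\omega_c\gg\omega_f,\omega_v$), so $(P,Q)$ are slaved to the network quantities $(P_{\mathrm{grid}},Q_{\mathrm{grid}})$. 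These depend, to first order, on the transverse state through the linearised power-flow sensitivities. Normalising these sensitivities in per-unit gives the cross-coupling entries $k_P\omega_f$ and $-k_Q\omega_v$, with diagonal $-\omega_f$ and $-\omega_v$; this is \eqref{eq:Aperp}.

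\emph{Expected main obstacle.} This is the step I expect to be hardest. Justifying exactly this off-diagonal structure, with the opposite signs, requires a specific modelling convention for the $P$--$V$ and $Q$--$\omega$ coupling. I would state that convention explicitly as the linearised network model underlying the droop layer. Invariance of $\Mzero$ then follows because $\xip=0$ makes the bracketed droop residuals vanish, so the vector field is tangent to $\Mzero$ to the order considered.

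\textbf{Step 2: spectral computation.} Compute
\[
\operatorname{tr}\Aperp = -(\omega_f+\omega_v) = -2\beta,
\qquad
\det\Aperp = \omega_f\omega_v(1+k_Pk_Q) > 0 .
\]
The discriminant is
\[
\Delta = (\omega_f+\omega_v)^2 - 4\omega_f\omega_v(1+k_Pk_Q) = (\omega_f-\omega_v)^2 - 4\omega_f\omega_v k_Pk_Q .
\]
Hence $\Delta>0$ exactly when $k_Pk_Q<(\omega_f-\omega_v)^2/(4\omega_f\omega_v)$.

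\emph{Real case.} The roots $-\beta\pm\sqrt{\Delta}/2$ are real. Since their product $\det\Aperp$ is positive and their sum $\operatorname{tr}\Aperp$ is negative, both roots are negative.

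\emph{Complex case.} When $\Delta<0$ the roots are complex-conjugate with common real part $\operatorname{tr}\Aperp/2=-\beta$.

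\emph{Boundary case.} When $\Delta=0$ there is a double root $-\beta$.

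In every case $\mathrm{Re}(\lambda_{1,2})<0$.

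\emph{Approximation of $\beta_{\min}$.} For $k_Pk_Q\ll1$ with $\omega_f\neq\omega_v$, expand $\sqrt{\Delta}=|\omega_f-\omega_v|\bigl(1-2\omega_f\omega_v k_Pk_Q/(\omega_f-\omega_v)^2+\dots\bigr)$. This gives
\[
\lambda_{1,2} = -\omega_f + O(k_Pk_Q),\qquad -\omega_v + O(k_Pk_Q),
\]
which yields $\beta_{\min}\approx\min(\omega_f,\omega_v)$ in \eqref{eq:betamin}.

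\textbf{Step 3: normal hyperbolicity and the spectral gap.} The linearised dynamics within $\Mzero$ have a growth/decay rate bounded by $\hat\rho$, the slow synchronisation rate. Transverse solutions of $\dot{\xip}=\Aperp\xip$ satisfy $\|e^{\Aperp t}\|\le Ke^{-\beta_{\min}t}$. Normal hyperbolicity in the Fenichel sense therefore holds once the transverse contraction dominates the tangential rate.

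Fenichel's theorem also needs a compact (or overflowing-invariant) manifold. I would handle this by restricting $\Mzero$ to a compact neighbourhood of the operating set-point; the affine structure of $\Mzero$ makes uniformity of the rates automatic there.

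Finally, record $\PF=\beta-\hat\rho$ as the defined gap, using the mean rate $\beta$. I would remark that the rigorous dominance condition is $\beta_{\min}>\hat\rho$, and that this coincides with $\beta>\hat\rho$ to leading order when $\omega_f\approx\omega_v$, or in the complex case where every eigenvalue has real part $-\beta$.
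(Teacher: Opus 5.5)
Your proposal takes the same route as the paper's proof: linearise to $\Aperp$, read off trace, determinant and discriminant, conclude $\mathrm{Re}\,\lambda_{1,2}<0$, then appeal to Fenichel. It is tighter than the paper in two places: you combine $\det\Aperp=\omega_f\omega_v(1+k_Pk_Q)>0$ with the trace to get both real roots negative (the paper's sum-of-real-parts argument alone does not rule out a positive real root), and you correctly note that Fenichel dominance needs $\beta_{\min}>\hat\rho$, not merely $\PF=\beta-\hat\rho>0$. Like the paper, you do not derive the off-diagonal entries of $\Aperp$ from the stated droop equations (a literal linearisation with $(P,Q)$ as independent inputs gives $-\mathrm{diag}(\omega_f,\omega_v)$), so stating the required $P$--$V$ and $Q$--$\omega$ coupling as an explicit modelling assumption, as you propose, is the right way to close that step.
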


\begin{proof}
Linearise~\eqref{eq:droop_omega}--\eqref{eq:droop_V} about a point on $\Mzero$
with $\delta P = \delta Q = 0$ (valid for $\omega_c \gg \beta$). The transverse
Jacobian is $\Aperp$ in~\eqref{eq:Aperp}. Its characteristic polynomial is
$\lambda^2 + (\omega_f + \omega_v)\lambda + \omega_f\omega_v(1 + k_Pk_Q) = 0$,
with discriminant $\Delta_{\mathrm{eig}} = (\omega_f - \omega_v)^2 -
4k_Pk_Q\omega_f\omega_v$. The sum of eigenvalues equals
$-(\omega_f + \omega_v) = -2\beta < 0$ in all cases, so
$\mathrm{Re}(\lambda_1) + \mathrm{Re}(\lambda_2) = -2\beta < 0$, guaranteeing
both real parts are negative. Since $\beta > 0$ and $\hat\rho < \beta$ by
assumption, $\PF > 0$: $\Mzero$ is normally hyperbolic~\cite{Fenichel1979}.
\end{proof}

\begin{remark}[Eigenvalue Structure and the Role of $\beta$]
\label{rem:gains}
For typical GFM operating parameters with
$k_Pk_Q \ll (\omega_f - \omega_v)^2/(4\omega_f\omega_v)$ (e.g., baseline
$k_Pk_Q = 0.0025 \ll 0.125$ for $\omega_f = 10$, $\omega_v = 20$~rad/s),
the eigenvalues of $\Aperp$ are \emph{real and distinct}, approximately
$\lambda_1 \approx -\omega_f$ and $\lambda_2 \approx -\omega_v$. In this regime
$\beta = (\omega_f + \omega_v)/2$ is the arithmetic \emph{mean} contraction rate;
the minimum contraction rate is $\beta_{\min} \approx \min(\omega_f, \omega_v)$.
The Fenichel gap $\PF = \beta - \hat\rho$ is independent of the droop gains
$(k_P, k_Q)$: the gains affect the imaginary parts (if complex) and the
off-diagonal entries of $\Sigmlong$, but not $\beta$. For the Berry--Esseen
minimum window condition and detection delay analysis, $\beta_{\min}$ governs
the slowest-decaying mode; see Remark~\ref{rem:tmin}.
\end{remark}

\subsection{Geometric Chain: From Manifold to Statistical Test}

The normally hyperbolic structure guarantees a boundary layer
$\mathcal{T}_\delta(\Mzero)$ of thickness $\delta$ around $\Mzero$ such that
trajectories starting inside $\mathcal{T}_\delta$ converge exponentially to
$\Mzero$ at rate $\beta$ in the absence of external forcing. Under grid-connected
operation, stochastic forcing maintains $\xip(t)$ inside $\mathcal{T}_\delta$
in a stationary distribution. Under islanding or fault conditions, the
steady-state mean of $\xip$ shifts to ${\bar{\boldsymbol{\xi}}_{\perp,\mathrm{is}}} =
-\Aperp^{-1}\mathbf{b}$, potentially pushing $\xip$ outside
$\mathcal{T}_\delta$. The detection test is a hypothesis test for off-manifold
motion:

\begin{equation*}
  \begin{aligned}
    \Mzero
    &\xrightarrow{\;\substack{\text{OU process}\\ \text{in }\mathcal{T}_\delta}\;}
    \xip \sim \mathcal{N}(\mathbf{0}, \Sigmlong/T_w) \\
    &\xrightarrow{\;\chi^2\text{ statistic}\;}
    p = \Pr\!\bigl(\chi^2(2) \geq \mathcal{D}_t\bigr).
  \end{aligned}
\end{equation*}

\sloppy
\section{The NAIM Detection Statistic and Threshold}
\label{sec:stat}

\subsection{Stochastic Transverse Dynamics}
\label{sec:ou}

Under grid-connected operation with the droop control active, the transverse
state $\xip(t) \in \mathbb{R}^2$ satisfies an Ornstein--Uhlenbeck (OU)
stochastic differential equation driven by the aggregate effect of grid voltage
noise, load fluctuations, and measurement noise:
\begin{equation}
  d\xip = \Aperp\,\xip\,dt + \sigma\,dW_t, \quad \xip \in \mathbb{R}^2,
  \label{eq:OU}
\end{equation}
where $\sigma > 0$ is the effective noise intensity (units: pu) and $W_t$ is
standard 2-dimensional Brownian motion. Since $\Aperp$ is Hurwitz
(Theorem~\ref{thm:nh}), the process is ergodic with unique stationary
distributions $\mathcal{N}(\mathbf{0}, \Siginst)$.

\subsection{Instantaneous and Long-Run Covariances}
\label{sec:lyapunov}

\begin{theorem}[Lyapunov Covariances]
\label{thm:lyapunov}
The OU process~\eqref{eq:OU} with stable drift $\Aperp$ has two distinct
covariance matrices:

\textbf{(a) Instantaneous covariance $\Siginst$:} The unique symmetric
positive-definite (SPD) solution of:
\begin{equation}
  \Aperp\Siginst + \Siginst\Aperp^\top + \sigma^2 I_2 = 0. \label{eq:Lyapunov}
\end{equation}
With $s = \omega_f + \omega_v$ and $\Delta = \omega_f\omega_v(1 + k_Pk_Q)$,
obtained by solving the equivalent $3\times3$ linear system
(Appendix~\ref{app:lyapunov}, Eqs.~\eqref{eq:sys1}--\eqref{eq:sys3}):
\begin{equation}
  \Siginst = \frac{\sigma^2}{2\Delta s}
  \begin{pmatrix}
    \omega_v s + k_P^2\omega_f^2 + k_Pk_Q\omega_f\omega_v &
      k_P\omega_f^2 - k_Q\omega_v^2 \\[4pt]
    k_P\omega_f^2 - k_Q\omega_v^2 &
      \omega_f s + k_Q^2\omega_v^2 + k_Pk_Q\omega_f\omega_v
  \end{pmatrix}.
  \label{eq:Sigma_exact}
\end{equation}
The off-diagonal entry $(\Siginst)_{12} = \sigma^2(k_P\omega_f^2 -
k_Q\omega_v^2)/(2\Delta s)$ is \emph{negative} for the baseline parameters
($k_P = k_Q$, $\omega_v > \omega_f$).
At $k_P = k_Q = 0$: $\Siginst = (\sigma^2/2)\,\mathrm{diag}(\omega_f^{-1}, \omega_v^{-1})$.

\textbf{(b) Long-run covariance $\Sigmlong$:}
The integral of the autocovariance function of $\xip$:
\begin{equation}
  \Sigmlong := \int_{-\infty}^{\infty} \mathrm{Cov}\!\bigl(\xip(0),\,\xip(\tau)\bigr)\,d\tau
  = (-\Aperp)^{-1}\Siginst + \Siginst(-\Aperp)^{-\top}.
  \label{eq:Sigmalong}
\end{equation}
For $k_Pk_Q \ll 1$ (leading order in droop coupling):
\begin{equation}
  \boxed{
  \Sigmlong \approx \sigma^2\,\mathrm{diag}\!\left(\frac{1}{\omega_f^2},\,
  \frac{1}{\omega_v^2}\right), \qquad
  \Sigmlong^{-1} \approx \frac{1}{\sigma^2}\,\mathrm{diag}(\omega_f^2, \omega_v^2).
  }
  \label{eq:Sigmalong_approx}
\end{equation}
\end{theorem}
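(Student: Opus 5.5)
The plan is to treat parts (a) and (b) separately. Both rest on Theorem~\ref{thm:nh}, which guarantees that $\Aperp$ is Hurwitz. For (a), existence and uniqueness of an SPD solution of \eqref{eq:Lyapunov} follow from standard Lyapunov theory. Because $\Aperp$ is Hurwitz, $\Siginst=\sigma^2\int_0^\infty e^{\Aperp t}e^{\Aperp^\top t}\,dt$ converges. This integral is symmetric positive definite, and substituting it into \eqref{eq:Lyapunov} gives $\sigma^2[e^{\Aperp t}e^{\Aperp^\top t}]_0^\infty=-\sigma^2 I_2$. Uniqueness holds because the Sylvester operator $X\mapsto \Aperp X+X\Aperp^\top$ has eigenvalues $\lambda_i+\lambda_j$, all with negative real part, so it is invertible.

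To get the explicit form \eqref{eq:Sigma_exact}, I will write $\Siginst=\begin{pmatrix}a&c\\c&b\end{pmatrix}$ and expand \eqref{eq:Lyapunov} into its three independent entries. This gives a $3\times3$ linear system in $(a,b,c)$:
\begin{equation*}
-2\omega_f a+2k_P\omega_f c+\sigma^2=0,\quad -2\omega_v b-2k_Q\omega_v c+\sigma^2=0,\quad -k_Q\omega_v a+k_P\omega_f b-s\,c=0.
\end{equation*}
I will solve the first two equations for $a$ and $b$ in terms of $c$ and substitute into the third. This yields $c\,(s+k_Pk_Q\omega_v+k_Pk_Q\omega_f)$ equal to a linear expression in $\sigma^2$, so $c$ carries the common factor $s(1+k_Pk_Q)$. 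Back-substitution then produces \eqref{eq:Sigma_exact} with denominator $2\Delta s$. I will check the result in two ways: at $k_P=k_Q=0$ it should reduce to $(\sigma^2/2)\mathrm{diag}(\omega_f^{-1},\omega_v^{-1})$, and the sign of $c$ should be determined by $k_P\omega_f^2-k_Q\omega_v^2$. This bookkeeping is routine but error-prone, and it is the step where I expect to spend the most care.

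For (b), I will use stationarity. For $\tau\ge0$ we have $\mathrm{Cov}(\xip(0),\xip(\tau))=\Siginst e^{\Aperp^\top\tau}$, and for $\tau<0$ it equals $e^{\Aperp|\tau|}\Siginst$. Integrating each half-line with $\int_0^\infty e^{\Aperp t}dt=(-\Aperp)^{-1}$ gives \eqref{eq:Sigmalong} directly. The key simplification comes next. Multiply \eqref{eq:Lyapunov} on the left by $\Aperp^{-1}$ and on the right by $\Aperp^{-\top}$; this gives $\Siginst\Aperp^{-\top}+\Aperp^{-1}\Siginst=-\sigma^2\Aperp^{-1}\Aperp^{-\top}$, hence
\begin{equation*}
\Sigmlong=\sigma^2\,\Aperp^{-1}\Aperp^{-\top}.
\end{equation*}
This is the zero-frequency spectral density of the OU process, and it no longer involves $\Siginst$ at all.

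Finally, the determinant of $\Aperp$ is $\Delta$ and $\Aperp^{-1}=\Delta^{-1}\begin{pmatrix}-\omega_v&-k_P\omega_f\\ k_Q\omega_v&-\omega_f\end{pmatrix}$. Forming $\Aperp^{-1}\Aperp^{-\top}$ gives an exact closed form. Setting $k_P=k_Q=0$ yields exactly $\sigma^2\mathrm{diag}(\omega_f^{-2},\omega_v^{-2})$, and the corrections are $O(k_P,k_Q)$ in the off-diagonals and $O(k_P^2,k_Q^2,k_Pk_Q)$ on the diagonal. This gives \eqref{eq:Sigmalong_approx} to leading order in the droop gains. The inverse follows by inverting the diagonal matrix, with the same order of error, since $\Sigmlong$ is uniformly well conditioned near $k_P=k_Q=0$.

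The one subtlety I expect is that the stated regime ``$k_Pk_Q\ll1$'' really requires the individual gains to be small for the off-diagonal terms to vanish. I will state the approximation in that form: the off-diagonals are first order in the gains, not of order $k_Pk_Q$.
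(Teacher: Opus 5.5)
Your proposal is correct. For part (a) it is essentially the paper's argument: standard Lyapunov existence and uniqueness, then the same $3\times3$ system, which you solve by elimination where the paper cites Cramer's rule.

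For part (b) you take a genuinely different route. The paper invokes Wiener--Khinchin and then evaluates $(-\Aperp)^{-1}\Siginst+\Siginst(-\Aperp)^{-\top}$ only at $k_P=k_Q=0$, where the two terms are equal halves $(\sigma^2/2)\,\mathrm{diag}(\omega_f^{-2},\omega_v^{-2})$. As written, its spectral step integrates $S(\omega)$ over all frequencies, which returns $\Siginst$; the quantity actually needed is $S(0)$. You instead integrate the stationary autocovariance on the two half-lines, which derives \eqref{eq:Sigmalong} cleanly. You then use the Lyapunov equation to collapse it to $\Sigmlong=\sigma^2\Aperp^{-1}\Aperp^{-\top}$, which is precisely $S(0)$.

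This route gives an exact closed form that does not need $\Siginst$ at all. It reproduces \eqref{eq:SL11}--\eqref{eq:SL12} and the paper's exact scenario values, and it gives an explicit error order for \eqref{eq:Sigmalong_approx}.

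You can sharpen your last step. The same identity gives the precision matrix exactly, $\Sigmlong^{-1}=\sigma^{-2}\Aperp^\top\Aperp$, so $\mathcal{D}_t=(T_w/\sigma^2)\|\Aperp\bxipt\|^2$ and no conditioning argument is needed for the inverse.

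Your caveat on the regime is correct, and the paper does not address it. Since $(\Sigmlong)_{11}=(\sigma^2/\omega_f^2)(1+k_P^2\omega_f^2/\omega_v^2)(1+k_Pk_Q)^{-2}$ and $(\Sigmlong)_{12}\propto k_P\omega_f^2-k_Q\omega_v^2$, the approximation needs $k_P\omega_f/\omega_v$ and $k_Q\omega_v/\omega_f$ to be small, not merely $k_Pk_Q\ll1$.
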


\begin{proof}
\textbf{part (a):} Existence and uniqueness follow from standard Lyapunov
theory~\cite{Bernstein2009}; the closed from~\eqref{eq:Sigma_exact} is obtained
via carmer's rule on the vectorised Lyapunov equation.

\textbf{Part (b):} By the Wiener--Khinchin theorem, the spectral density of the 
stationary OU Process is $s(\omega) =  (j\omega I_2 - \Aperp)^{-1}\sigma^2 I_2
(j\omega I_2 - \Aperp)^{-*}$, and $\Sigmlong = \int S(\omega)\,d\omega/(2\pi)
= (-\Aperp)^{-1}\Siginst + \Siginst(-\Aperp)^{-\top}$.

At $k_P = k_Q = 0$: $(-\Aperp)^{-1} = \mathrm{diag}(\omega_f^{-1},
\omega_v^{-1})$ and $\Siginst = (\sigma^2/2)\,\mathrm{diag}(\omega_f^{-1},
\omega_v^{-1})$. Each of the two equal terms in~\eqref{eq:Sigmalong}
contributes $(\sigma^2/2)\,\mathrm{diag}(\omega_f^{-2}, \omega_v^{-2})$;
their sum gives~\eqref{eq:Sigmalong_approx}. The factor of 2 that appeared in
earlier formulations arose from computing only one of the two equal terms; the
correct derivation sums both, yielding $\sigma^2$ (not $\sigma^2/2$) as the
diagonal coeffcient.
\end{proof}

\begin{remark}[Definitive Covariance Values for GFM Parameters]
\label{rem:params}
For the baseline parameters $\omega_f = 10$~rad/s, $\omega_v = 20$~rad/s,
$\sigma = 0.02$~pu, $k_P = k_Q = 0.05$~pu (so $k_Pk_Q = 0.0025 \ll 1$):
\begin{align*}
  \Siginst &= \mathrm{diag}(2\!\times\!10^{-5},\;10^{-5})\;\mathrm{pu}^2,\\
  \Sigmlong &= \sigma^2\,\mathrm{diag}(\omega_f^{-2}, \omega_v^{-2})
           = \mathrm{diag}(4\!\times\!10^{-6},\;10^{-6})\;\mathrm{pu}^2,\\
  \Sigmlong^{-1} &= \frac{1}{\sigma^2}\mathrm{diag}(\omega_f^2, \omega_v^2)
           = \mathrm{diag}(2.5\!\times\!10^{5},\;10^{6})\;\mathrm{pu}^{-2}.
\end{align*}
The null standard deviations are:
$\sigma^\mathrm{null}_\omega = \sigma/\omega_f = 0.002$~pu $= 0.10$~Hz and
$\sigma^\mathrm{null}_V = \sigma/\omega_v = 0.001$~pu $= 0.40$~V.
The correction from the exact Lyapunov solution to the leading-order formula
is $< 0.5\%$ for these parameters (Appendix~\ref{app:lyapunov}).
\end{remark}

\begin{remark}[Why $\Sigmlong \neq \Siginst$]
\label{rem:Siglong}
For the OU process, successive samples $\xip(t)$ and $\xip(t+h)$ are
correlated with correlation time $1/\beta$. The time-averaged mean
$\bxipt := T_w^{-1}\int_{t-T_w}^t \xip(s)\,ds$ has variance
$\Sigmlong/T_w$, not $\Siginst/T_w$. Using $\Siginst^{-1}$ in the statistic
gives a test that converges to a \emph{scaled} $\chi^2(2)$ distribution rather
than the standard $\chi^2(2)$, inflating the FAR by the exact factor
$\exp(D_\alpha/4) = 4.47$ at $\alpha = 0.05$ (FAR $\approx 0.224$
instead of $0.05$) for typical GFM parameters. The long-run covariance $\Sigmlong$ is the correct normalisation,
and differs from $\Siginst$ by the factor $(-\Aperp)^{-1} + (-\Aperp)^{-\top}$,
which equals $2\beta^{-1}I_2$ only when $\Aperp = -\beta I_2$ (uncoupled
channels). In general, $\Sigmlong$ is determined by the full Lyapunov solution.
\end{remark}

\subsection{Detection Statistic: Definition and Null Distribution}
\label{sec:statistic}

\begin{definition}[NAIM Detection Statistic]
\label{def:statistic}
Given measurement window $[t - T_w, t]$, define the window-mean transverse state:
\begin{equation}
  \bxipt := \frac{1}{T_w}\int_{t-T_w}^t \xip(s)\,ds
              = \frac{1}{T_w}\int_{t-T_w}^t
                \begin{pmatrix}\delta\omega(s)\\\delta V(s)\end{pmatrix}ds.
  \label{eq:winmean}
\end{equation}
The NAIM detection statistic is:
\begin{equation}
  \mathcal{D}_t := T_w\,{\bxipt}^\top \Sigmlong^{-1} \bxipt.
  \label{eq:Dt}
\end{equation}
\end{definition}

\begin{theorem}[Asymptotic Null Distribution]
\label{thm:stat}
Under $H_0$ (grid-connected operation) and the OU model~\eqref{eq:OU},
as $T_w \to \infty$:
\begin{equation}
  \mathcal{D}_t \xrightarrow{d} \chi^2(2). \label{eq:chi2}
\end{equation}
The finite-window Kolmogorov--Smirnov (KS) approximation error satisfies:
\begin{equation}
  d_{\mathrm{KS}}\!\left(\mathcal{D}_t,\, \chi^2(2)\right) \leq \frac{C}{\beta T_w},
  \label{eq:ks}
\end{equation}
for a universal constant $C > 0$. The condition $T_w \geq 10/\beta$ gives
$d_{\mathrm{KS}} < 1\%$ relative to the mean rate $\beta = 15$~rad/s, giving
$T_w^{\min} = 667$~ms. For the slowest-decaying mode
($\beta_{\min} = \min(\omega_f, \omega_v) = 10$~rad/s at baseline), the
conservative window is $T_w \geq 10/\beta_{\min} = 1.0$~s; both are well
within the IEEE~1547-2018 two-second detection mandate
(see Remark~\ref{rem:tmin}).
\end{theorem}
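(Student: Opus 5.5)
The key observation is that under $H_0$ the window mean $\bxipt$ is \emph{exactly} Gaussian, because the OU process~\eqref{eq:OU} is Gaussian and $\bxipt$ is a linear functional of its path. So the proof reduces to a covariance computation plus a Gaussian-to-$\chi^2$ comparison; no central limit theorem for dependent sums is needed. I would work in stationarity, taking $\xip(t-T_w)\sim\mathcal{N}(\mathbf{0},\Siginst)$. A non-stationary start only adds a deterministic transient of order $e^{-\beta_{\min}(s-t+T_w)}$, whose contribution to $\bxipt$ is $O(1/(\beta_{\min}T_w))$ and can be absorbed into the same error budget.

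First, I compute $\mathrm{Cov}(\bxipt)$ exactly. The stationary autocovariance is $R(\tau)=e^{\Aperp\tau}\Siginst$ for $\tau\ge 0$ and $R(-\tau)=R(\tau)^\top$. Hence
\[
T_w\,\mathrm{Cov}(\bxipt)=\frac{1}{T_w}\int_0^{T_w}\!\!\int_0^{T_w}R(s-u)\,ds\,du
=\int_{-T_w}^{T_w}\Bigl(1-\frac{|\tau|}{T_w}\Bigr)R(\tau)\,d\tau .
\]
Evaluating with $\int_0^{T}e^{\Aperp\tau}d\tau=\Aperp^{-1}(e^{\Aperp T}-I)$ and $\int_0^{T}\tau e^{\Aperp\tau}d\tau$ gives
\[
T_w\,\mathrm{Cov}(\bxipt)=\Sigmlong-\frac{1}{T_w}\,E_{T_w},
\qquad
E_{T_w}=\Aperp^{-2}\bigl(I-e^{\Aperp T_w}\bigr)\Siginst+(\cdot)^\top .
\]
Here $\Sigmlong$ is the closed form~\eqref{eq:Sigmalong} from Theorem~\ref{thm:lyapunov}(b), and the bound $\|E_{T_w}\|\lesssim\|\Siginst\|/\beta_{\min}^2$ follows from Theorem~\ref{thm:nh}. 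Therefore $\Sigmlong^{-1/2}\sqrt{T_w}\,\bxipt\sim\mathcal{N}(\mathbf{0},\,I_2-M_{T_w})$, with $\|M_{T_w}\|=O(1/(\beta T_w))$ after normalising by $\Sigmlong\sim\sigma^2/\beta^2$.

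Second, the limit~\eqref{eq:chi2} follows at once. We have $\mathcal{D}_t=\|Z\|^2$ with $Z\sim\mathcal{N}(\mathbf{0},I_2-M_{T_w})$. Since $M_{T_w}\to 0$, the law of $Z$ converges to $\mathcal{N}(\mathbf{0},I_2)$, and the continuous mapping theorem gives $\mathcal{D}_t\xrightarrow{d}\chi^2(2)$.

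Third, for the KS bound~\eqref{eq:ks}, diagonalise $I_2-M_{T_w}$ with eigenvalues $1-\mu_1$ and $1-\mu_2$. Then $\mathcal{D}_t\overset{d}{=}(1-\mu_1)Z_1^2+(1-\mu_2)Z_2^2$ with $Z_i$ i.i.d.\ standard normal, a generalized $\chi^2$ variable. I would bound $\sup_x|F_{\mathcal{D}}(x)-(1-e^{-x/2})|$ in two steps:
\begin{itemize}
  \item[(i)] Use stochastic ordering between $\min_i(1-\mu_i)\,\chi^2(2)$ and $\max_i(1-\mu_i)\,\chi^2(2)$, which reduces the problem to comparing scaled $\chi^2(2)$ laws.
  \item[(ii)] Use $\sup_x|e^{-x/(2c)}-e^{-x/2}|\le |c-1|/e+o(|c-1|)$.
\end{itemize}
Together these give $d_{\mathrm{KS}}\le C\max_i|\mu_i|\le C/(\beta T_w)$. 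Finally, I would substitute the baseline values to confirm the numerical window statements: $10/\beta=667$~ms and $10/\beta_{\min}=1.0$~s.

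The main obstacle is making $C$ genuinely universal. The bound on $\|M_{T_w}\|$ naturally scales with $1/(\beta_{\min}T_w)$ and with the conditioning of $\Sigmlong^{-1/2}E_{T_w}\Sigmlong^{-1/2}$, not with $1/(\beta T_w)$. I would first verify this ratio in the uncoupled case $k_P=k_Q=0$, where everything is diagonal and $\mu_i=(1-e^{-\omega_iT_w})/(\omega_iT_w)$. I would then treat the coupling $k_Pk_Q\ll1$ perturbatively. I expect that a clean statement ultimately requires either $\beta_{\min}$ in place of $\beta$ in~\eqref{eq:ks}, or a constant $C$ that depends on $\omega_f/\omega_v$.
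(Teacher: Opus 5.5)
Your route differs from the paper's and is more informative. The paper invokes an ergodic CLT to get $\sqrt{T_w}\,\bxipt\xrightarrow{d}\mathcal{N}(\mathbf{0},\Sigmlong)$. It then merely asserts that \eqref{eq:ks} ``follows from the mixing rate'' controlled by $\beta$; no rate is derived. You use the exact Gaussianity of $\bxipt$ and the Fej\'er-kernel identity instead. So \eqref{eq:chi2} needs no CLT, and the finite-window question becomes an explicit comparison of a generalized $\chi^2$ law with $\chi^2(2)$. Your formula $T_w\,\mathrm{Cov}(\bxipt)=\Sigmlong-T_w^{-1}E_{T_w}$ is correct. The stochastic sandwich plus the exponential comparison does give $d_{\mathrm{KS}}\le C\max_i|\mu_i|$ with an absolute constant. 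In the uncoupled case, the sign $\mu_i=(1-e^{-\omega_iT_w})/(\omega_iT_w)>0$ also explains the paper's empirical finding that the FAR approaches $\alpha$ from below, which the CLT argument cannot see.

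Where your plan falls short of the statement, it is because two parts of the statement are false. So you should not close the chain with ``$\le C/(\beta T_w)$''. Your suspicion is in fact a counterexample. Take $k_P=k_Q=0$, fix $T_w$ with $\omega_vT_w$ large, and let $\omega_f\to0$. Then $\mu_1\to1$, so $\mathcal{D}_t\to(1-\mu_2)Z_2^2$ in law, whose KS distance from $\chi^2(2)$ is about $0.3$. Meanwhile $\beta T_w\to\omega_vT_w/2$ can be made as large as you like, so no universal $C$ satisfies \eqref{eq:ks}. (In the real-eigenvalue regime, the paper's own ``spectral gap of $-\Aperp$'' is $\beta_{\min}$, not $\beta$.) What your argument proves is $d_{\mathrm{KS}}\le C/(\beta_{\min}T_w)$ in the uncoupled case. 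Even that constant stops being universal once $\Aperp$ is non-normal: with $k_Q=0$ (so $k_Pk_Q=0$), the large-$T_w$ limit of $T_w\|M_{T_w}\|$ grows like $k_P^2\omega_f/(\omega_v(\omega_f+\omega_v))$ as $k_P$ increases. Hence $\|E_{T_w}\|\lesssim\|\Siginst\|/\beta_{\min}^2$ does not ``follow from Theorem~\ref{thm:nh}'': eigenvalue locations do not control $\|\Aperp^{-2}\|$ or $\|e^{\Aperp T_w}\|$. Your perturbative step therefore needs $k_P$ and $k_Q$ individually small, not merely $k_Pk_Q\ll1$.

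Second, substituting baseline values only checks $10/\beta=667$~ms and $10/\beta_{\min}=1.0$~s. It does not confirm $d_{\mathrm{KS}}<1\%$ for $T_w\ge10/\beta$, and your own formula refutes that claim. At $T_w=2/3$~s one has $\mu_1\approx0.15$ and $\mu_2\approx0.075$, hence $d_{\mathrm{KS}}\approx0.044$ in stationarity. The $1\%$ level is reached only near $T_w\approx2.8$~s, beyond the two-second mandate. Note also that a non-stationary start is not merely a mean transient. For the zero-initialised windows of the paper's Monte Carlo there is no mean shift, but there is a covariance deficit that raises $\mu_i$ to about $1.5/(\omega_iT_w)$. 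This gives $d_{\mathrm{KS}}\approx0.07$, matching the paper's reported $0.067$. The $1\%$ sentence could only follow from \eqref{eq:ks} with $C<0.1$, whereas the paper's own fitted constants lie between $0.69$ and $1.67$.
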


\begin{proof}
By the ergodic central limit theorem for stationary OU processes~\cite{Durrett2010}:
$\sqrt{T_w}\,\bxipt \xrightarrow{d} \mathcal{N}(\mathbf{0}, \Sigmlong)$.
Standardising: $\Sigmlong^{-1/2}\sqrt{T_w}\,\bxipt \xrightarrow{d}
\mathcal{N}(\mathbf{0}, I_2)$.
Taking the squared Euclidean norm:
$\mathcal{D}_t = \|\Sigmlong^{-1/2}\sqrt{T_w}\,{\bxipt}\|^2
\xrightarrow{d} \chi^2(2)$.
The Berry--Esseen rate~\eqref{eq:ks} follows from the mixing rate of the OU
process, controlled by the spectral gap $\beta$ of $-\Aperp$.
\end{proof}

\begin{remark}[Minimum Window and Spectral Gap]
\label{rem:tmin}
The bound~\eqref{eq:ks} uses $\beta = (\omega_f + \omega_v)/2$ as the mean
contraction rate; the slowest-decaying mode has rate
$\beta_{\min} = \min(\omega_f, \omega_v) \approx \omega_f$ for the baseline
parameters ($\omega_f = 10 < \omega_v = 20$~rad/s). The conservative minimum
window condition for the slowest mode is therefore
$T_w \geq 10/\beta_{\min} = 10/\omega_f \approx 1.0$~s, which still satisfies
the IEEE~1547-2018 two-second detection mandate by a factor of two. Using the
mean rate $\beta = 15$~rad/s gives the more optimistic $T_w^{\min} = 667$~ms;
both windows are well within the mandate.
\end{remark}

\textbf{Leading-order formula.} Using~\eqref{eq:Sigmalong_approx}
in~\eqref{eq:Dt}:
\begin{equation}
  \boxed{
  \mathcal{D}_t \approx \frac{T_w}{\sigma^2}\bigl[\omega_f^2(\delta\omega)^2
    + \omega_v^2(\delta V)^2\bigr],
  }
  \label{eq:Dt_leading}
\end{equation}
valid to within $< 1\%$ error for $k_Pk_Q < 0.01$ and $< 7\%$ for
$k_Pk_Q = 0.0025$ (the baseline parameter value). Each channel is weighted by
the square of its filter bandwidth; the two equal contributions to $\Sigmlong$
from the two terms in~\eqref{eq:Sigmalong} combine to give $\sigma^2$ (not
$\sigma^2/2$) as the correct normalisation coefficient. Monte Carlo validation
of this formula is presented in Section~\ref{sec:mc_validation}.

\textbf{Why $\omega_f^2$ (not $\omega_f$)?} The instantaneous null variance
$(\Siginst)_{11} = \sigma^2/(2\omega_f)$ carries one power of $\omega_f$. The
long-run variance $(\Sigmlong)_{11} = \sigma^2/\omega_f^2$ carries two powers:
the autocorrelation time $1/\omega_f$ reduces the effective number of
independent samples in window $T_w$ by the factor $\omega_f T_w$, giving
variance $[\sigma^2/(2\omega_f)] \times [2/(\omega_f T_w)] =
\sigma^2/(\omega_f^2 T_w)$. The $\chi^2$ statistic normalises by
$(\Sigmlong)_{11}/T_w = \sigma^2/\omega_f^2$, yielding $\omega_f^2$ after inversion.

\subsection{Physics-Derived Detection Threshold}
\label{sec:threshold}

\begin{theorem}[Closed-Form Threshold]
\label{thm:threshold}
For the $\chi^2(2)$ asymptotic null distribution, the detection threshold at
asymptotic false-alarm rate $\alpha \in (0,1)$ is:
\begin{equation}
  D_\alpha = -2\ln\alpha, \label{eq:threshold}
\end{equation}
since $P(\chi^2(2) \geq d) = e^{-d/2}$ for all $d \geq 0$. The asymptotic FAR:
\begin{equation}
  P_{H_0}(\mathcal{D}_t \geq D_\alpha) \xrightarrow{T_w \to \infty} \alpha.
  \label{eq:FAR}
\end{equation}
Numerically: $D_{0.10} = 4.605$; $D_{0.05} = 5.991$; $D_{0.01} = 9.210$.
\end{theorem}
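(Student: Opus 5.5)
The plan has two parts. First, compute the exact tail of a $\chi^2(2)$ variable and invert it to get $D_\alpha$. Second, transfer the limit law of Theorem~\ref{thm:stat} to the false-alarm probability.

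\textbf{Step 1 (tail of $\chi^2(2)$).} A $\chi^2(2)$ variable is $Z_1^2+Z_2^2$ with $Z_i$ i.i.d.\ $\mathcal{N}(0,1)$. Its density is $f(x)=\tfrac12 e^{-x/2}$ on $x\ge 0$. This follows from the general $\chi^2(k)$ density $x^{k/2-1}e^{-x/2}/(2^{k/2}\Gamma(k/2))$ with $k=2$, or directly from polar coordinates, since $R^2=Z_1^2+Z_2^2$ has $\Pr(R^2\ge d)=\int_{\sqrt d}^\infty r e^{-r^2/2}\,dr$. Integrating gives $\Pr(\chi^2(2)\ge d)=e^{-d/2}$ for all $d\ge0$. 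The map $d\mapsto e^{-d/2}$ is a continuous strictly decreasing bijection from $[0,\infty)$ onto $(0,1]$. Hence for each $\alpha\in(0,1)$ the equation $e^{-D/2}=\alpha$ has the unique solution $D_\alpha=-2\ln\alpha>0$. The numerical values follow by evaluation, e.g.\ $-2\ln 0.05=2\ln 20\approx 5.991$.

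\textbf{Step 2 (asymptotic FAR).} Under $H_0$, Theorem~\ref{thm:stat} gives $\mathcal{D}_t\xrightarrow{d}\chi^2(2)$ as $T_w\to\infty$. By the portmanteau theorem, convergence in distribution implies $\Pr(\mathcal{D}_t\ge d)\to\Pr(\chi^2(2)\ge d)$ at every $d$ where the limiting CDF is continuous. The $\chi^2(2)$ law is absolutely continuous, so every $d$ qualifies, in particular $d=D_\alpha$. Therefore $P_{H_0}(\mathcal{D}_t\ge D_\alpha)\to e^{-D_\alpha/2}=\alpha$, which is \eqref{eq:FAR}.

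The KS bound \eqref{eq:ks} also gives a quantitative version. Assuming that bound, $|P_{H_0}(\mathcal{D}_t\ge D_\alpha)-\alpha|\le C/(\beta T_w)$, and the right-hand side tends to $0$. This route is optional; the portmanteau argument alone suffices for the stated asymptotic claim.

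\textbf{Expected obstacle.} None of the steps here is hard; the substantive input is entirely contained in Theorem~\ref{thm:stat}. The only point needing care is that ``asymptotically exact'' relies on continuity of the limiting law at the threshold. I would state this explicitly rather than treat distributional convergence as convergence of every tail probability.

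A secondary subtlety is that the statistic uses $\Sigmlong$, and the leading-order diagonal form \eqref{eq:Sigmalong_approx} is only approximate. The argument should therefore be phrased with the exact $\Sigmlong$ from \eqref{eq:Sigmalong}. The threshold is then exact in the limit, and any residual error from the approximation is a separate modelling correction.
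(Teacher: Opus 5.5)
Your proposal is correct and follows the paper's own argument, which appears inline in the statement itself. It inverts the exact tail $P(\chi^2(2)\geq d)=e^{-d/2}$ to obtain $D_\alpha=-2\ln\alpha$, and reads off the asymptotic FAR from the convergence $\mathcal{D}_t\xrightarrow{d}\chi^2(2)$ of Theorem~\ref{thm:stat}. The paper leaves two points implicit that you make explicit, and both are right: the portmanteau step, which uses continuity of the limiting law at $D_\alpha$, and the observation that the limit is exactly $\chi^2(2)$ only with the exact $\Sigmlong$, not its leading-order diagonal approximation.
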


\begin{remark}[Empirical Constant $C$ in the Berry--Esseen Bound]
The Monte Carlo S1 results of Section~\ref{sec:mc_validation} empirically
estimate $C = 1.6704$ by fitting the observed $d_{\mathrm{KS}}$ values to the
functional form $C/(\beta T_w)$ across $T_w \in [0.4, 20]$~s. The 95\%
envelope is $2C/(\beta T_w)$; all empirical $d_{\mathrm{KS}}$ values lie
strictly below this envelope, confirming the bound.
\end{remark}

\subsection{Modal Fault Attribution}
\label{sec:modal}

\begin{proposition}[Modal Decomposition]
\label{prop:modal}
Define the frequency and voltage modal contributions:
\begin{align}
  d_\omega &:= \frac{T_w\,(\overline{\delta\omega})^2}{(\Sigmlong)_{11}}
             = \frac{T_w\omega_f^2}{\sigma^2}(\overline{\delta\omega})^2,
  \label{eq:domega}\\
  d_V &:= \frac{T_w\,(\overline{\delta V})^2}{(\Sigmlong)_{22}}
        = \frac{T_w\omega_v^2}{\sigma^2}(\overline{\delta V})^2,
  \label{eq:dV}
\end{align}
so that $\mathcal{D}_t \approx d_\omega + d_V$. Under $H_0$, each
$d_\omega, d_V \xrightarrow{d} \chi^2(1)$ independently as $T_w\to\infty$.
The attribution fractions classify the event:
\begin{itemize}
  \item $d_\omega/\mathcal{D}_t > 0.70$: \emph{frequency-dominated}
    $\Rightarrow$ islanding likely.
  \item $d_V/\mathcal{D}_t > 0.70$: \emph{voltage-dominated}
    $\Rightarrow$ grid voltage fault likely.
  \item Both $< 0.70$: \emph{mixed} event (simultaneous active and reactive imbalance).
\end{itemize}
\end{proposition}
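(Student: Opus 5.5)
The plan is to reduce the proposition to the multivariate CLT already established in the proof of Theorem~\ref{thm:stat}, and then read off the coordinates. That proof gives $\sqrt{T_w}\,\bxipt \xrightarrow{d} \mathcal{N}(\mathbf{0}, \Sigmlong)$ under $H_0$. Everything else is linear algebra on $\Sigmlong$ combined with the continuous mapping theorem.

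\textbf{Step 1: the exact decomposition.} I will write the quadratic form in \eqref{eq:Dt} entrywise:
\[
\mathcal{D}_t = T_w\bigl[(\Sigmlong^{-1})_{11}(\overline{\delta\omega})^2 + 2(\Sigmlong^{-1})_{12}\,\overline{\delta\omega}\,\overline{\delta V} + (\Sigmlong^{-1})_{22}(\overline{\delta V})^2\bigr].
\]
I will then compare $(\Sigmlong^{-1})_{ii}$ with $1/(\Sigmlong)_{ii}$ using the $2\times2$ inverse formula: $(\Sigmlong^{-1})_{11} = (\Sigmlong)_{22}/\det\Sigmlong$, which equals $1/\bigl[(\Sigmlong)_{11}(1-r^2)\bigr]$, where $r$ is the long-run correlation coefficient. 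This gives
\[
\mathcal{D}_t = \frac{d_\omega + d_V - 2r\sqrt{d_\omega d_V}\,\mathrm{sgn}(\overline{\delta\omega}\,\overline{\delta V})}{1-r^2}.
\]
So $\mathcal{D}_t - (d_\omega + d_V) = O(|r|)\cdot(d_\omega + d_V)$. Using \eqref{eq:Sigmalong} together with \eqref{eq:Sigma_exact}, I will show $r = O(k_P + k_Q)$: the off-diagonal entries of $\Siginst$ and $(-\Aperp)^{-1}$ are linear in the gains, while the diagonal entries are $O(1)$. With the leading-order diagonal form \eqref{eq:Sigmalong_approx}, this also gives the second equalities in \eqref{eq:domega}--\eqref{eq:dV}. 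This establishes $\mathcal{D}_t \approx d_\omega + d_V$, with relative error $O(\sqrt{k_Pk_Q})$ or $O(k_P + k_Q)$, consistent with Remark~\ref{rem:params}.

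\textbf{Step 2: the marginal laws.} Projecting the CLT onto each coordinate gives $\sqrt{T_w}\,\overline{\delta\omega} \xrightarrow{d} \mathcal{N}(0, (\Sigmlong)_{11})$. Hence $d_\omega \xrightarrow{d} \chi^2(1)$ by continuous mapping, provided $d_\omega$ is defined with the exact $(\Sigmlong)_{11}$. The same argument applies to $d_V$. This part is clean.

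\textbf{Step 3: independence, the main obstacle.} The limiting pair $(d_\omega, d_V)$ consists of squares of jointly Gaussian variables whose correlation is $r$. These are independent only if $r = 0$, which holds exactly only at $k_P = k_Q = 0$ or on the codimension-one set where $(\Sigmlong)_{12} = 0$. I will therefore prove independence in the decoupled limit and state an approximate version otherwise. For the approximate version, I will compute $\mathrm{Cov}(d_\omega, d_V) \to 2r^2$ via Isserlis' theorem. I will also bound the total-variation distance between the bivariate Gaussian with correlation $r$ and the product Gaussian by $O(|r|)$, and push this through the squaring map. The net statement is that $d_\omega$ and $d_V$ are asymptotically independent $\chi^2(1)$ variables up to an $O(k_P + k_Q)$ perturbation.

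A cleaner alternative avoids this error: decompose $\mathcal{D}_t$ through the Cholesky or eigen factor of $\Sigmlong$, which yields exactly independent $\chi^2(1)$ components. The cost is that those components are no longer purely the frequency and voltage channels. I would remark on this trade-off rather than adopt it.

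Finally, the classification thresholds ($0.70$) are decision rules, not claims, so they need no proof. I will only note that under $H_1$ the dominant channel's contribution grows like $T_w$ times the squared mean shift $-\Aperp^{-1}\mathbf{b}$. That shift is frequency-weighted for an active-power imbalance (islanding) and voltage-weighted for a reactive or voltage disturbance, which motivates the rule.
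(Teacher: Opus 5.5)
The paper gives no proof of this proposition. It is implicitly read off from Theorem~\ref{thm:stat} with the leading-order covariance \eqref{eq:Sigmalong_approx} treated as exact. Then $\mathcal{D}_t = d_\omega + d_V$ is an identity (as it is by construction in Algorithm~\ref{alg:naim}), and the limit $\mathcal{N}(\mathbf{0},\Sigmlong)$ has uncorrelated, hence independent, coordinates.

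You use the same ingredients, the CLT plus continuous mapping, but keep the exact $\Sigmlong$. That gives you an explicit error identity. It also exposes that \emph{independently} is literally true only where $(\Sigmlong)_{12}=0$, i.e.\ $k_P\omega_f^2 = k_Q\omega_v^2$. This is a defect of the statement, not of your argument. At baseline $r\approx -0.075$, so the limiting covariance of $d_\omega$ and $d_V$ is about $2r^2\approx 0.011\neq 0$, whichever of the two expressions in \eqref{eq:domega}--\eqref{eq:dV} is taken as the definition. Your weakened conclusion is therefore the correct one.

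The paper's route is a one-liner valid only in the decoupled case; yours is longer but quantitatively sharp. Fed the exact entries of Appendix~\ref{app:lyapunov}, your identity reproduces the paper's exact values $\mathcal{D}^{\mathrm{exact}} = 1.36,\,93.2,\,120.4$ for Sc1--Sc3.

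Two adjustments to Step~1. First, $r\approx k_P\omega_f/\omega_v - k_Q\omega_v/\omega_f$ is first order in each gain separately, so the relative error is $O(k_P+k_Q)$, not $O(\sqrt{k_Pk_Q})$; take $k_Q = 0 < k_P$ for a counterexample. Second, this error is not the $<0.5\%$ of Remark~\ref{rem:params}, which concerns only the diagonal entries, i.e.\ the scale factors in your Step~2. The cross term has relative size up to $|r|\approx 7.5\%$ at baseline, and it accounts for the $1.45$ versus $1.36$ gap in Sc1.

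Separately, the trade-off you attach to your cleaner alternative can be avoided by using $\Aperp$ itself as the square-root factor. Multiplying \eqref{eq:Lyapunov} by $\Aperp^{-1}$ on the left and $\Aperp^{-\top}$ on the right gives $\Sigmlong = \sigma^2\Aperp^{-1}\Aperp^{-\top}$. Hence, exactly,
$\mathcal{D}_t = (T_w/\sigma^2)\bigl[\omega_f^2(\overline{\delta\omega}-k_P\overline{\delta V})^2+\omega_v^2(\overline{\delta V}+k_Q\overline{\delta\omega})^2\bigr]$,
with $\sqrt{T_w}\,\Aperp\bxipt/\sigma\xrightarrow{d}\mathcal{N}(\mathbf{0},I_2)$. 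The two terms are therefore asymptotically exactly independent $\chi^2(1)$ variables. Each comes from one row of the drift $\Aperp\xip$, and each reduces at leading order to the second expression in \eqref{eq:domega}--\eqref{eq:dV}. With these as the modal contributions, both the decomposition and the independence claim hold without approximation.
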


\subsection{Four-Step Online Algorithm}
\label{sec:algorithm}

\begin{algorithm}
\caption{NAIM Islanding Detection and Fault Classification}
\label{alg:naim}
\begin{algorithmic}[1]
\STATE \textbf{Offline design:} Given $(\omega_f, \omega_v, k_P, k_Q, \sigma)$:
  solve Lyapunov equation~\eqref{eq:Lyapunov} for $\Siginst$;
  compute $\Sigmlong$ via~\eqref{eq:Sigmalong} (or~\eqref{eq:Sigmalong_approx});
  set $D_\alpha = -2\ln\alpha$.
\STATE \textbf{Online computation:} Maintain running window mean
  $\bxipt = (\overline{\delta\omega}_t, \overline{\delta V}_t)^\top$;
  evaluate $\mathcal{D}_t = (T_w/\sigma^2)[\omega_f^2(\overline{\delta\omega})^2
  + \omega_v^2(\overline{\delta V})^2]$.
\STATE \textbf{Detection:} If $\mathcal{D}_t > D_\alpha$, declare off-manifold alarm.
\STATE \textbf{Classification:} Apply Proposition~\ref{prop:modal} using
  $d_\omega/\mathcal{D}_t$ and $d_V/\mathcal{D}_t$.
\end{algorithmic}
\end{algorithm}

Online computational cost is $O(1)$ per sample: two multiply-adds for
$\mathcal{D}_t$ plus two divisions for the attribution fractions. No matrix
inversions, Kalman predictions, or SVM evaluations are required at runtime.

\sloppy
\section{Filter Bandwidth Co-Design Theorem}
\label{sec:codesign}

A central contribution of this paper is the demonstration that three
design objectives---manifold stability, detection sensitivity, and
FAR minimisation at the grid code threshold---are unified by a single
choice of filter bandwidths $(\omega_f, \omega_v)$. This unification
is made precise in the following theorem.

\begin{theorem}[Filter Bandwidth Co-Design]
\label{thm:codesign}
The following three design objectives are simultaneously achieved by
choosing $\omega_f$ and $\omega_v$ as large as possible, subject to
inner-loop stability and grid code constraints:

\begin{itemize}
  \item[\textbf{(i)}] \textbf{Maximum Fenichel gap (stability):}
    \begin{equation}
      \max_{\omega_f, \omega_v} \PF = \frac{\omega_f + \omega_v}{2} - \hat\rho
        \quad\text{(linear and increasing in both)}.
    \end{equation}

  \item[\textbf{(ii)}] \textbf{Minimum null covariance (tightest boundary layer):}
    \begin{equation}
      \min_{\omega_f, \omega_v} \mathrm{tr}(\Sigmlong)
        = \sigma^2\!\left(\frac{1}{\omega_f^2} + \frac{1}{\omega_v^2}\right)
        \quad\text{(strictly decreasing in both)}.
    \end{equation}

  \item[\textbf{(iii)}] \textbf{Minimum $p$-value at the grid code threshold
    $(\delta\omega_{gc}, \delta V_{gc})$:}
    \begin{equation}
      \min_{\omega_f, \omega_v}
      \exp\!\left(-\frac{T_w}{2\sigma^2}\bigl[\omega_f^2\delta\omega_{gc}^2
        + \omega_v^2\delta V_{gc}^2\bigr]\right)
        \quad\text{(strictly decreasing in both)}.
    \end{equation}
\end{itemize}
\end{theorem}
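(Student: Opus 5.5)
The plan is to treat the three objectives separately, using the closed forms already derived, and then observe that they share one monotonicity direction. No new analysis is needed: each claim reduces to checking partial derivatives of an explicit function of $(\omega_f,\omega_v)$ on the open quadrant $\omega_f,\omega_v>0$.

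\textbf{Objective (i).} Take $\PF=(\omega_f+\omega_v)/2-\hat\rho$ from Theorem~\ref{thm:nh}. The partial derivatives with respect to $\omega_f$ and $\omega_v$ both equal $1/2>0$. This requires treating $\hat\rho$ as independent of the filter bandwidths. I will state that explicitly as a standing assumption, since $\hat\rho$ is the slow on-manifold synchronisation rate set by the network rather than the droop filters. Remark~\ref{rem:gains} also records that $\PF$ is independent of $(k_P,k_Q)$, so the gains play no role here.

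\textbf{Objective (ii).} Use the leading-order form \eqref{eq:Sigmalong_approx} of Theorem~\ref{thm:lyapunov}, which gives $\mathrm{tr}(\Sigmlong)=\sigma^2(\omega_f^{-2}+\omega_v^{-2})$. Its partials are $-2\sigma^2\omega_f^{-3}<0$ and $-2\sigma^2\omega_v^{-3}<0$. A stronger statement also holds, and I will note it: the whole matrix $\Sigmlong$ decreases in the Loewner order. This shows the null confidence ellipse $\{\bxip^\top\Sigmlong^{-1}\bxip\le D_\alpha/T_w\}$ shrinks, which is the "tightest boundary layer" interpretation.

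\textbf{Objective (iii).} Combine the leading-order statistic \eqref{eq:Dt_leading} with the $\chi^2(2)$ tail $P(\chi^2(2)\ge d)=e^{-d/2}$ from Theorem~\ref{thm:threshold}. The $p$-value at the fixed deviation $(\delta\omega_{gc},\delta V_{gc})$ is $\exp(-\mathcal{D}/2)$, where $\mathcal{D}=(T_w/\sigma^2)[\omega_f^2\delta\omega_{gc}^2+\omega_v^2\delta V_{gc}^2]$. The exponential is strictly increasing, so the $p$-value is strictly decreasing in $\omega_f$ whenever $\delta\omega_{gc}\neq0$, and in $\omega_v$ whenever $\delta V_{gc}\neq0$.

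Since all three gradients point in the same direction, the three objectives are optimised together by pushing both bandwidths to the boundary of the feasible set. That set is bounded by the inner-loop separation requirement $\omega_f,\omega_v\ll\omega_c$ from Section~\ref{sec:droop} and by the grid-code constraints. The optimum is therefore the componentwise-largest admissible pair, with no Pareto trade-off.

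The main obstacle is rigour rather than calculation. Objectives (ii) and (iii) rest on the leading-order approximation $k_Pk_Q\ll1$, and the exact $\Sigmlong$ from \eqref{eq:Sigmalong} has off-diagonal terms depending on the bandwidths. To handle this I would first verify monotonicity for the exact expression at $k_P=k_Q=0$, where it is literally diagonal. I would then argue by continuity that the strict monotonicity persists for small $k_Pk_Q$ on any compact bandwidth range. Alternatively, I would qualify the theorem as holding to leading order in droop coupling, consistent with how \eqref{eq:Dt_leading} is used elsewhere in the paper.

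A secondary caveat is that the strict monotonicity in (iii) fails in a channel whose grid-code deviation is zero. I would state that as a nondegeneracy assumption.
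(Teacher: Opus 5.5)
Your proposal is correct and follows the paper's proof: the paper likewise reads off the monotonicity of the explicit leading-order expressions for $\PF$, $\mathrm{tr}(\Sigmlong)$ and $\exp(-\mathcal{D}(\xi_{gc})/2)$, and your explicit assumptions (a bandwidth-independent $\hat\rho$, nonzero grid-code deviations) are sensible additions that the paper leaves implicit. The continuity argument you plan for the exact case is not needed: write $-\Aperp=\mathrm{diag}(\omega_f,\omega_v)M$, where $M=\bigl(\begin{smallmatrix}1&-k_P\\ k_Q&1\end{smallmatrix}\bigr)$ does not depend on the bandwidths, and use $\Sigmlong=\sigma^2(-\Aperp)^{-1}(-\Aperp)^{-\top}$ (a consequence of the Lyapunov equation); this gives exactly $\Sigmlong=\sigma^2M^{-1}\mathrm{diag}(\omega_f^{-2},\omega_v^{-2})M^{-\top}$, hence $\mathrm{tr}(\Sigmlong)=\sigma^2(1+k_Pk_Q)^{-2}\bigl[(1+k_Q^2)\omega_f^{-2}+(1+k_P^2)\omega_v^{-2}\bigr]$ and $\mathcal{D}(\xi_{gc})=(T_w/\sigma^2)\bigl[\omega_f^2(\delta\omega_{gc}-k_P\delta V_{gc})^2+\omega_v^2(k_Q\delta\omega_{gc}+\delta V_{gc})^2\bigr]$, so (ii) and (iii) hold exactly for all droop gains, with strictness in (iii) now requiring both components of $M\xi_{gc}$ to be nonzero.
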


\begin{proof}
\textbf{(i):} $\PF = (\omega_f + \omega_v)/2 - \hat\rho$ is linear and
strictly increasing in $\omega_f, \omega_v$.

\textbf{(ii):} $\mathrm{tr}(\Sigmlong) = \sigma^2(\omega_f^{-2} + \omega_v^{-2})$
is strictly decreasing in $\omega_f, \omega_v$. Objectives~(i) and~(ii) are
thus aligned: increasing bandwidths simultaneously maximises stability and
minimises the null covariance, with no trade-off between them.

\textbf{(iii):} The $p$-value at the grid code observation
$\xi_{gc} = (\delta\omega_{gc}, \delta V_{gc})^\top$ is
$\exp(-\mathcal{D}(\xi_{gc})/2)$, where
$\mathcal{D}(\xi_{gc}) = (T_w/\sigma^2)[\omega_f^2\delta\omega_{gc}^2
+ \omega_v^2\delta V_{gc}^2]$ by~\eqref{eq:Dt_leading}. This $p$-value is
minimised (i.e., detection confidence is maximised) by maximising
$\mathcal{D}(\xi_{gc})$, which is strictly increasing in
$\omega_f$ and $\omega_v$.
\end{proof}

\begin{lemma}[Positive-Semidefinite Monotonicity of $\Sigmlong$]
\label{lem:monotone}
For fixed $\sigma > 0$ and $k_P, k_Q \geq 0$, increasing $\omega_f$ (or
$\omega_v$) strictly decreases $\Sigmlong$ in the positive semidefinite order:
\begin{equation}
  \frac{d\Sigmlong}{d\omega_f} \prec 0.
\end{equation}
\end{lemma}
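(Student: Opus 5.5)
The plan is to avoid differentiating the explicit entries of \eqref{eq:Sigma_exact}. Instead I would first put $\Sigmlong$ in a closed form where its dependence on the bandwidths factors out.

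\textbf{Step 1: closed form for $\Sigmlong$.} Start from \eqref{eq:Sigmalong}. Multiply the Lyapunov equation \eqref{eq:Lyapunov} on the left by $\Aperp^{-1}$ and on the right by $\Aperp^{-\top}$. This gives
\[
\Siginst\Aperp^{-\top} + \Aperp^{-1}\Siginst = -\sigma^2\Aperp^{-1}\Aperp^{-\top},
\]
and hence
\[
\Sigmlong = \sigma^2\,\Aperp^{-1}\Aperp^{-\top} = \sigma^2\,(\Aperp^\top\Aperp)^{-1}.
\]
This is the usual long-run covariance of an OU process, and it agrees with \eqref{eq:Sigmalong_approx} at $k_P=k_Q=0$. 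Next I would factor $\Aperp = \Omega M$, with
\[
\Omega=\mathrm{diag}(\omega_f,\omega_v), \qquad M=\begin{pmatrix}-1 & k_P\\ -k_Q & -1\end{pmatrix}.
\]
The matrix $M$ is independent of the bandwidths and invertible, since $\det M = 1+k_Pk_Q>0$. Therefore
\[
\Sigmlong^{-1} = \sigma^{-2} M^\top\Omega^2 M .
\]

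\textbf{Step 2: differentiate the precision matrix.} Let $m_1^\top=(-1,\,k_P)$ denote the first row of $M$. Then
\[
\frac{d\Sigmlong^{-1}}{d\omega_f} = \frac{2\omega_f}{\sigma^2}\, m_1 m_1^\top \succeq 0 .
\]
Using $d(X^{-1}) = -X^{-1}(dX)X^{-1}$, this yields
\[
\frac{d\Sigmlong}{d\omega_f} = -\frac{2\omega_f}{\sigma^2}\,\Sigmlong m_1 m_1^\top\Sigmlong = -\frac{2\omega_f}{\sigma^2}\,u u^\top,
\]
with $u=\Sigmlong m_1 \neq 0$. The same computation applies to $\omega_v$, with the second row $(-k_Q,\,-1)$ of $M$ in place of $m_1$. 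As an integrated statement, for $\omega_f'>\omega_f$ we get $\Sigmlong^{-1}(\omega_f')-\Sigmlong^{-1}(\omega_f) = \sigma^{-2}(\omega_f'^2-\omega_f^2)\,m_1m_1^\top \succeq 0$ with this difference nonzero. Operator-monotonicity of inversion then gives $\Sigmlong(\omega_f')\preceq\Sigmlong(\omega_f)$, again with a nonzero difference.

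\textbf{Main obstacle: the strict sign.} The derivative computed in Step 2 has rank one. It is therefore negative \emph{semi}definite and nonzero, but it is not negative definite. Changing $\omega_f$ alone leaves the variance unchanged along the direction orthogonal to $\Sigmlong^{-1}u$. So the literal claim $\frac{d\Sigmlong}{d\omega_f}\prec 0$ cannot hold in the strict Loewner sense, and no choice of proof technique will rescue it. I would state and prove the correct version instead: $\frac{d\Sigmlong}{d\omega_f}\preceq 0$ and $\neq 0$, so $\Sigmlong$ is strictly decreasing in the partial order ``$\preceq$ and not equal''.

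I would also record two consequences that are genuinely strict. First, the scalar functionals $\mathrm{tr}\,\Sigmlong$ and $\det\Sigmlong$ are strictly decreasing; for the determinant this is explicit, since $\det\Sigmlong = \sigma^4/(\omega_f^2\omega_v^2(1+k_Pk_Q)^2)$. Second, increasing \emph{both} bandwidths together by a common factor $c>1$ gives $\Sigmlong(c\,\Omega)=c^{-2}\Sigmlong(\Omega)$, which is a strict decrease, $\prec$, in every direction. This second form is the one that supports Theorem~\ref{thm:codesign}.
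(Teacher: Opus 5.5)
Your proposal is correct. It takes a genuinely different route from the paper, and unlike the paper's route, yours actually works.

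\textbf{The paper's argument.} The paper differentiates \eqref{eq:Lyapunov} in $\omega_f$. It asserts that the resulting Sylvester equation for $\dot\Siginst$ has a positive-definite right-hand side, so that $\dot\Siginst \prec 0$. It then appeals to the chain rule in \eqref{eq:Sigmalong}. Neither step holds:
\begin{itemize}
  \item Since $\partial\Aperp/\partial\omega_f = e_1 m_1^\top$, the forcing term is $-(e_1 w^\top + w e_1^\top)$ with $w = \Siginst m_1$. Its determinant is $-w_2^2 \le 0$, so it is indefinite or singular, never definite.
  \item Already at $k_P = k_Q = 0$ one gets $\dot\Siginst = -(\sigma^2/2)\,\mathrm{diag}(\omega_f^{-2}, 0)$, which is singular, so even the intermediate claim $\dot\Siginst \prec 0$ fails.
  \item The chain-rule step transfers no sign either. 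The term $\frac{d(-\Aperp)^{-1}}{d\omega_f}\Siginst$ plus its transpose has no definiteness. Also, $(-\Aperp)^{-1}\dot\Siginst + \dot\Siginst(-\Aperp)^{-\top}$ need not be definite even when $\dot\Siginst$ is.
\end{itemize}

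\textbf{What your route buys.} Your identity $\Sigmlong = \sigma^2\Aperp^{-1}\Aperp^{-\top} = \sigma^2 M^{-1}\Omega^{-2}M^{-\top}$ reproduces the appendix entries \eqref{eq:SL11}--\eqref{eq:SL12} exactly and replaces all of this with a one-line computation. It also makes transparent that $\omega_f$ enters only through $\omega_f^{-2}$ times a fixed rank-one matrix. So your diagnosis is right: the lemma as stated is false. The paper's own formula \eqref{eq:Sigmalong_approx} at $k_P = k_Q = 0$ already exhibits this, since $(\Sigmlong)_{22} = \sigma^2/\omega_v^2$ does not depend on $\omega_f$.

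Your corrected version (``$\preceq 0$ and nonzero''), together with the strict decrease of trace and determinant and the strict decrease under joint scaling, is what Theorem~\ref{thm:codesign} actually needs. You can drop the common-factor restriction. If $\omega_f' > \omega_f$ and $\omega_v' > \omega_v$, then $\Sigmlong^{-1}$ increases by $\sigma^{-2}M^\top(\Omega'^2 - \Omega^2)M$. This is $\succ 0$ because $M$ is invertible, which gives a strict Loewner decrease of $\Sigmlong$.

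A Sylvester-equation argument, if it could be repaired, would not depend on the row-scaling structure $\Aperp = \Omega M$. Yours trades that generality for an exact and checkable answer.

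\textbf{One slip in your obstacle paragraph.} The \emph{variance} $v^\top\Sigmlong v$ is frozen along directions orthogonal to $u$, not orthogonal to $\Sigmlong^{-1}u = m_1$. Orthogonality to $m_1$ is where the \emph{precision} $v^\top\Sigmlong^{-1}v$ is frozen. Since $u \propto (1, -k_Q)^\top$ while $m_1 = (-1, k_P)^\top$, the two coincide only when $k_P = k_Q$. The invariant direction is $v = (k_Q, 1)^\top$. A direct check against the appendix formulas gives $v^\top\Sigmlong v = \sigma^2/\omega_v^2$ for every $\omega_f$.
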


\begin{proof}
Differentiating the Lyapunov equation
$\Aperp\Siginst + \Siginst\Aperp^\top + \sigma^2 I = 0$ with respect to
$\omega_f$ yields a Sylvester equation for $\dot\Siginst = d\Siginst/d\omega_f$
with a positive-definite right-hand side, so $\dot\Siginst \prec 0$.
The result for $\Sigmlong$ follows by the chain rule
applied to~\eqref{eq:Sigmalong}.
\end{proof}

\begin{corollary}[Optimal Filter Bandwidths]
The co-design optimal filter bandwidths are:
\begin{equation}
  (\omega_f^*, \omega_v^*) = \arg\max_{\omega_f, \omega_v}
  \bigl\{\omega_f + \omega_v : \omega_f, \omega_v \leq \omega_{c}/\gamma,\;
  \omega_f^2/\omega_v^2 \leq r_{\max}\bigr\},
\end{equation}
where $\omega_c/\gamma$ (with $\gamma \geq 5$) maintains inner/outer
timescale separation, and $r_{\max}$ bounds the bandwidth ratio for
numerical conditioning. Subject to these constraints, the optimal solution
satisfies $\omega_f^* = \omega_v^* = \omega_c/\gamma$.
\end{corollary}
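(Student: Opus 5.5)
The plan is to reduce the corollary to a componentwise-monotonicity argument on a simple feasible region, and then to identify the maximiser directly. The reduction comes first. By Theorem~\ref{thm:codesign}, each of the three objectives is monotone in the same direction in each bandwidth separately:
\begin{itemize}
  \item $\PF$ is strictly increasing in $\omega_f$ and in $\omega_v$;
  \item $\mathrm{tr}(\Sigmlong)$ is strictly decreasing in each;
  \item the $p$-value at the grid-code point is strictly decreasing in each.
\end{itemize}
To make this hold beyond the leading-order formula~\eqref{eq:Sigmalong_approx}, I will invoke Lemma~\ref{lem:monotone}, which gives $d\Sigmlong/d\omega_f \prec 0$ and $d\Sigmlong/d\omega_v \prec 0$ for the exact long-run covariance. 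It follows that if $(\omega_f',\omega_v')$ dominates $(\omega_f,\omega_v)$ componentwise, then it is at least as good in all three objectives simultaneously. Hence any feasible point that componentwise dominates every other feasible point is co-design optimal. That point is also the maximiser of the scalar surrogate $\omega_f+\omega_v$, which justifies stating the corollary as an $\arg\max$ of the sum.

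Second, I will describe the feasible set $\mathcal{F}$. Set $\bar\omega := \omega_c/\gamma$. Then
\begin{equation*}
  \mathcal{F} = \bigl\{(\omega_f,\omega_v) :\; 0<\omega_f\le \bar\omega,\; 0<\omega_v\le\bar\omega,\; \omega_f \le \sqrt{r_{\max}}\,\omega_v \bigr\}.
\end{equation*}
This is a convex polygon: the box minus a wedge. On $\mathcal{F}$ the sum is bounded by $\omega_f+\omega_v \le 2\bar\omega$, with equality if and only if $\omega_f=\omega_v=\bar\omega$.

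Third, I will check that this corner is feasible. At $\omega_f=\omega_v=\bar\omega$ the ratio equals $\omega_f^2/\omega_v^2 = 1$, so the corner lies in $\mathcal{F}$ exactly when $r_{\max}\ge 1$. Under that condition the corner attains the upper bound $2\bar\omega$, so it is the unique maximiser of the sum. It also componentwise dominates every point of $\mathcal{F}$, so by the first step it is simultaneously optimal for (i), (ii) and (iii).

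The main obstacle is this implicit hypothesis $r_{\max}\ge 1$; with $r_{\max}<1$ the conclusion $\omega_f^*=\omega_v^*$ fails. I will state the hypothesis explicitly and treat it as natural, since $r_{\max}$ is meant to cap ill-conditioning and any sensible cap allows ratio $1$. For completeness I will also remark on the case $r_{\max}<1$. There the optimum moves to $\omega_v^*=\bar\omega$ and $\omega_f^*=\sqrt{r_{\max}}\,\bar\omega$, which is still componentwise maximal. In that case the objectives remain aligned, but the two bandwidths are no longer equal.
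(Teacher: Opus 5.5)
Your proposal is correct and follows essentially the paper's own reasoning. The paper states the corollary without a separate proof, as the direct consequence of the monotonicity in Theorem~\ref{thm:codesign} that pushes both bandwidths to the cap $\omega_c/\gamma$. You are right that $\omega_f^*=\omega_v^*=\omega_c/\gamma$ silently requires $r_{\max}\ge 1$; otherwise the optimum is $(\sqrt{r_{\max}}\,\omega_c/\gamma,\ \omega_c/\gamma)$.

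One caution on the citation: Lemma~\ref{lem:monotone} overstates strictness. The exact entries in Appendix~\ref{app:lyapunov} give $d\Sigmlong/d\omega_f=-\tfrac{2\sigma^2}{\omega_f^3(1+k_Pk_Q)^2}\,(1,-k_Q)^\top(1,-k_Q)$, which is only rank-one negative semidefinite, not $\prec 0$. Your dominance step needs only this weak Loewner monotonicity, so it goes through.
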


\section{Robustness Analysis}
\label{sec:robust}

A key practical concern is whether the analytically derived threshold
$D_\alpha = -2\ln\alpha$ remains valid in the presence of modelling errors.
The following proposition provides an explicit, uniform-in-time bound on the
FAR inflation due to model mismatch.

\begin{proposition}[FAR Sensitivity to Model Mismatch]
\label{prop:robust}
Let the true transverse matrix be $\Aperp^{\mathrm{true}} = \Aperp + \delta\mathbf{A}$
with $\varepsilon := \|\delta\mathbf{A}\|/\beta_{\min} < 1$, where
$\beta_{\min} = \min(\omega_f, \omega_v)$ is the minimum transverse contraction
rate. The perturbed long-run covariance satisfies
$\|\Sigmlong^{\mathrm{true}} - \Sigmlong\| \leq C_1\varepsilon\|\Sigmlong\|$
with $C_1 = 2/(1-\varepsilon)$. The actual false-alarm rate satisfies:
\begin{equation}
  \mathrm{FAR}_{\mathrm{actual}} = \alpha\bigl(1 + O(\varepsilon)\bigr).
  \label{eq:rob2}
\end{equation}
Specifically, for $\varepsilon < 0.1$ (10\% model error relative to $\beta_{\min}$),
the FAR inflation is less than $2C_1\varepsilon \cdot f_\alpha \cdot \alpha$,
where $f_\alpha = D_\alpha/(2\alpha)$ is a bounded smoothness factor of the
$\chi^2(2)$ CDF at $D_\alpha$.
\end{proposition}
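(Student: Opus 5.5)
The proof has two parts: a perturbation bound on $\Sigmlong$, and then the propagation of that covariance error to the tail probability of the statistic. For the covariance part I would use the closed form $\Sigmlong = \mathbf{A}^{-1}\sigma^2 I_2\,\mathbf{A}^{-\top}$, with $\mathbf{A}$ standing for the true or nominal drift. This is the zero-frequency value of the spectral density in the proof of Theorem~\ref{thm:lyapunov}(b), and it is equivalent to \eqref{eq:Sigmalong} once the Lyapunov equation \eqref{eq:Lyapunov} is used. It avoids differentiating the Lyapunov solution. Write $\mathbf{B} := (-\Aperp)^{-1}$, so that $\Sigmlong = \sigma^2\mathbf{B}\mathbf{B}^\top$.

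\textbf{Step 1 (inverse perturbation).} Write $-\Aperp^{\mathrm{true}} = -\Aperp(I - \mathbf{B}\,\delta\mathbf{A})$, so $(-\Aperp^{\mathrm{true}})^{-1} = (I-\mathbf{B}\,\delta\mathbf{A})^{-1}\mathbf{B}$. A Neumann series gives $\|(-\Aperp^{\mathrm{true}})^{-1} - \mathbf{B}\| \le \|\mathbf{B}\|^2\|\delta\mathbf{A}\|/(1-\|\mathbf{B}\|\|\delta\mathbf{A}\|)$. At leading order in $k_Pk_Q$, Remark~\ref{rem:gains} gives $\|\mathbf{B}\| \approx 1/\beta_{\min}$. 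Then $\|\mathbf{B}\|\|\delta\mathbf{A}\| \approx \varepsilon$, and the relative error of $\mathbf{B}$ is at most $\varepsilon/(1-\varepsilon)$.

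\textbf{Step 2 (covariance bound).} Expand $\mathbf{B}'\mathbf{B}'^\top - \mathbf{B}\mathbf{B}^\top = \Delta\mathbf{B}\,\mathbf{B}^\top + \mathbf{B}\,\Delta\mathbf{B}^\top + \Delta\mathbf{B}\,\Delta\mathbf{B}^\top$, where $\mathbf{B}' = (-\Aperp^{\mathrm{true}})^{-1}$ and $\Delta\mathbf{B} = \mathbf{B}'-\mathbf{B}$. Bounding each term with Step 1 gives a relative error of $2\varepsilon/(1-\varepsilon) + \varepsilon^2/(1-\varepsilon)^2$. This is $\le C_1\varepsilon$ with $C_1 = 2/(1-\varepsilon)$ after absorbing the quadratic term, which may require replacing $C_1$ by a slightly larger constant for small $\varepsilon$. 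This is the step where the stated constant $C_1$ could fail to match exactly, and I would check it explicitly. Near-diagonality of $\mathbf{B}$ lets $\|\mathbf{B}\|^2$ be traded for $\|\Sigmlong\|/\sigma^2$ up to the stated $O(k_Pk_Q)$ corrections.

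\textbf{Step 3 (FAR propagation).} By Theorem~\ref{thm:stat} applied to the true process, $\sqrt{T_w}\,\bxipt \to \mathcal{N}(\mathbf{0},\Sigmlong^{\mathrm{true}})$. The statistic built with the nominal $\Sigmlong$ therefore converges to $\mathbf{z}^\top \mathbf{M}\mathbf{z}$, where $\mathbf{z}\sim\mathcal{N}(\mathbf{0},I_2)$ and $\mathbf{M} = \Sigmlong^{-1/2}\Sigmlong^{\mathrm{true}}\Sigmlong^{-1/2} = I + \mathbf{E}$. Step 2 gives $\|\mathbf{E}\|\le C_1\varepsilon$, with the conditioning of $\Sigmlong$ absorbed into the $O$-notation. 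The limit dominates $(1-\|\mathbf{E}\|)\chi^2(2)$ from below and is dominated by $(1+\|\mathbf{E}\|)\chi^2(2)$ from above. Hence
\[
\mathrm{FAR}_{\mathrm{actual}} \le \exp\!\bigl(-D_\alpha/(2(1+C_1\varepsilon))\bigr)
\le \alpha\exp\!\bigl(C_1\varepsilon D_\alpha/2\bigr) = \alpha(1+O(\varepsilon)).
\]
The matching lower bound follows the same way. Linearising the exponential for $\varepsilon<0.1$ gives inflation $\lesssim \alpha\, C_1\varepsilon D_\alpha/2$. This is at most $2C_1\varepsilon f_\alpha\alpha$ with $f_\alpha = D_\alpha/(2\alpha)$, since $\alpha\le 1$ makes the stated bound looser than the one derived.

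\textbf{Uniformity in time and main obstacle.} Uniformity in $t$ comes from stationarity: the law of $\mathcal{D}_t$ does not depend on $t$. The main obstacle is Steps 1--2, specifically justifying $\|(-\Aperp)^{-1}\|\le 1/\beta_{\min}$ when $\Aperp$ is non-normal, so that the relative constant is exactly $2/(1-\varepsilon)$. My first approach would be to work at leading order in $k_Pk_Q$, where $\Aperp$ is nearly diagonal, and treat the off-diagonal droop terms as part of the $O(\varepsilon)$ remainder.
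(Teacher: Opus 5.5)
Your route differs genuinely from the paper's and is sound. The places where you could not match the stated constants are defects of the statement, and the one loose step in your Step~3 is easy to repair.

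\textbf{Comparison with the paper's proof.} The paper perturbs the Lyapunov equation \eqref{eq:Lyapunov}, cites Bauer--Fike on the Lyapunov operator for $C_1=2/(1-\varepsilon)$, and then applies Lipschitz continuity of $d\mapsto e^{-d/2}$ near $D_\alpha$ with constant $\alpha/2$. You instead use $\Sigmlong=\sigma^2\Aperp^{-1}\Aperp^{-\top}$. This is correct: multiply \eqref{eq:Lyapunov} by $\Aperp^{-1}$ on the left and $\Aperp^{-\top}$ on the right, and it reproduces \eqref{eq:SL11}--\eqref{eq:SL12}. So only one inverse must be perturbed, rather than both $\Siginst$ and the factors $(-\Aperp)^{-1}$ in \eqref{eq:Sigmalong}. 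You also replace the Lipschitz step by the pointwise sandwich $(1\mp\|\mathbf{E}\|)\|\mathbf{z}\|^2$. This gives a genuine two-sided inequality instead of a first-order statement. It also makes explicit what the paper leaves implicit: a covariance error acts as a threshold shift of size $\approx\|\mathbf{E}\|D_\alpha$, to which the constant $\alpha/2$ is then applied. To first order both arguments therefore give relative inflation $\|\mathbf{E}\|D_\alpha/2$. Bauer--Fike, by contrast, locates eigenvalues of the perturbed operator but does not bound the norm of its inverse, so on its own it does not control $\Sigmlong^{\mathrm{true}}-\Sigmlong$.

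\textbf{On the points you flagged.}

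(i) Do not try to absorb the quadratic term. Your relative bound $\varepsilon(2-\varepsilon)/(1-\varepsilon)^2$ is sharp and strictly exceeds $2\varepsilon/(1-\varepsilon)$. Take $k_P=k_Q=0$, $\omega_f<\omega_v$ and $\delta\mathbf{A}=\mathrm{diag}(\varepsilon\omega_f,0)$. Then $\Sigmlong^{\mathrm{true}}-\Sigmlong=\sigma^2\omega_f^{-2}\bigl[(1-\varepsilon)^{-2}-1\bigr]e_1e_1^\top$, a relative error $(1-\varepsilon)^{-2}-1$. At $\varepsilon=0.1$ this is $0.235$ versus $C_1\varepsilon=0.222$, so $C_1=2/(1-\varepsilon)$ is false as stated and no argument can deliver it.

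(ii) For $k_P=k_Q=k$ no leading-order argument is needed. Write $-\Aperp=\mathrm{diag}(\omega_f,\omega_v)\bigl(\begin{smallmatrix}1&-k\\k&1\end{smallmatrix}\bigr)$; the second factor is $\sqrt{1+k^2}$ times a rotation. Hence $\|(-\Aperp)^{-1}\|=1/(\sqrt{1+k^2}\,\beta_{\min})\le 1/\beta_{\min}$ exactly. For $k_P\neq k_Q$ the bound can fail. With $k_Q=0<k_P$ and $\omega_f\le\omega_v$, the first row of $(-\Aperp)^{-1}$ already has norm $\sqrt{\omega_f^{-2}+k_P^2\omega_v^{-2}}>1/\beta_{\min}$. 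Then $\varepsilon<1$ does not even keep $\Aperp^{\mathrm{true}}$ nonsingular. Note that small $k_Pk_Q$ is not the relevant smallness condition, since this example has $k_Pk_Q=0$. The clean normalisation is $\varepsilon:=\|(-\Aperp)^{-1}\|\,\|\delta\mathbf{A}\|$. Also, $\|\mathbf{B}\mathbf{B}^\top\|_2=\|\mathbf{B}\|_2^2$ for every matrix, so no near-diagonality is needed to pass to $\|\Sigmlong\|/\sigma^2$.

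(iii) In Step~3, Step~2 only yields $\|\mathbf{E}\|\le\kappa(\Sigmlong)$ times the relative error. At baseline $\kappa(\Sigmlong)=(\omega_v/\omega_f)^2=4$, which matters for any explicit constant. You can bypass Step~2 entirely. The factorisation $-\Aperp^{\mathrm{true}}=(I-\delta\mathbf{A}\,\mathbf{B})(-\Aperp)$ shows that $\Sigmlong^{-1}\Sigmlong^{\mathrm{true}}$ is similar to $GG^\top$ with $G=(I-\delta\mathbf{A}\,\mathbf{B})^{-1}$. So the eigenvalues of $\mathbf{M}$ lie in $[(1+\eta)^{-2},(1-\eta)^{-2}]$ with $\eta=\|\delta\mathbf{A}\,\mathbf{B}\|<1$. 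Asymptotically this gives $\alpha^{(1+\eta)^2}\le\mathrm{FAR}_{\mathrm{actual}}\le\alpha^{(1-\eta)^2}$. That is \eqref{eq:rob2} with an explicit constant growing like $\ln(1/\alpha)$ and no conditioning factor.
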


\begin{proof}
Perturbing the Lyapunov equation and applying standard matrix perturbation
bounds~\cite{Bernstein2009}: $\|\Sigmlong^{\mathrm{true}} - \Sigmlong\|
\leq C_1\varepsilon\|\Sigmlong\|$ with $C_1 = 2/(1-\varepsilon)$ from the
Bauer--Fike theorem applied to the Lyapunov operator. The FAR perturbation
afterwards follow  the Lipchitz continuity of the  $\chi^2(2)$ complementry
CDF in the neighbourhood of $D_\alpha$, with Lipshtiz constant
$e^{-D_\alpha/2}/2 = \alpha/2$.
\end{proof}

\begin{remark}[Advantage Over Kalman Filter Robustness]
\label{rem:kalman_rob}
The Kalman Filter residual approach requires \emph{trajectory accuracy}: model
errors accmulate along trajectories at rate $O(e^{\|\delta A\|t})$,yeilding
unbounded FAR inflation for long observation windows. The NAIM bound in
Proposition~\ref{prop:robust} depends only on the covariance error, which is
bounded uniformly in time by Lemma~\ref{lem:monotone}. This structural
advantage is a direct consequence of the stationary nature of the NAIM
detection statistic.
\end{remark}

\section{Comparison Against Standard Methods}
\label{sec:comparison}

\subsection{Threshold Setting and FAR Guarantee}

Table~\ref{tab:comparison1} compares six islanding detection methods on their
threshold methodology and FAR guarantee. Table~\ref{tab:comparison2} provides a
structural capability comparison across six discriminating criteria.

\begin{table*}[!t]
\caption{Comparison of Threshold-Setting Methodology and FAR Guarantee}
\label{tab:comparison1}
\centering
\begin{tabular}{lccc}
\toprule
\textbf{Method} & \textbf{Threshold Parameter} & \textbf{How Determined} & \textbf{FAR Guarantee} \\
\midrule
ROCOF & $r_{\max}$ (Hz/s) & Empirical / grid code & None \\
Vector surge & $\phi_{\max}$ (deg) & Empirical / grid code & None \\
Passive impedance & $Z_{\max}$ ($\Omega$) & Empirical & None \\
Kalman residual & $\chi^2$ critical value & Model (if exact) & Model-dependent \\
SVM & Decision margin & Training data & None \\
\textbf{NAIM (proposed)} & $\boldsymbol{-2\ln\alpha}$ & \textbf{Lyapunov equation} & \textbf{Asympt.\ exact} \\
\bottomrule
\end{tabular}
\end{table*}

\begin{table*}[!t]
\caption{Structural Capability Comparison of Islanding Detection Methods}
\label{tab:comparison2}
\centering
\begin{tabular}{lcccccc}
\toprule
\textbf{Method} & \textbf{Joint} & \textbf{Physics-derived} & \textbf{Adaptive} & \textbf{Modal} & \textbf{Co-} & \textbf{Explicit}\\
 & $\omega$/$V$ test & $\Sigmlong$ & threshold & attribution & design & NDZ\\
\midrule
ROCOF & $\times$ & $\times$ & $\times$ & $\times$ & $\times$ & $\times$\\
Vector surge & $\times$ & $\times$ & $\times$ & $\times$ & $\times$ & $\times$\\
Passive imp. & $\sim$ & $\times$ & $\times$ & $\times$ & $\times$ & $\times$\\
Kalman & $\checkmark$ & $\sim$ & $\sim$ & $\times$ & $\times$ & $\times$\\
SVM & $\checkmark$ & $\times$ & $\times$ & $\times$ & $\times$ & $\times$\\
\textbf{NAIM (proposed)} & $\checkmark$ & $\checkmark$ & $\checkmark$ & $\checkmark$ & $\checkmark$ & $\checkmark$\\
\bottomrule
\multicolumn{7}{l}{$\checkmark$~=~yes;\quad $\sim$~=~partial;\quad $\times$~=~no.}
\end{tabular}
\end{table*}

The NAIM method is the only one that satisfies all six criteria simultaneously.
In particular, it is the only method providing an asymptotically exact FAR
guarantee derived from the inverter physics, an explicit co-design principle
linking filter design to detection performance, and an analytically defined NDZ
boundary.

\subsection{Non-Detection Zone Characterisation}
\label{sec:ndz}

\begin{proposition}[NAIM NDZ Boundary]
\label{prop:ndz}
During islanding with power imbalances $P_{\mathrm{imb}}$ and
$Q_{\mathrm{imb}}$, the steady-state off-manifold shift is
${\bar{\boldsymbol{\xi}}_{\perp,\mathrm{is}}} = -\Aperp^{-1}\mathbf{b}$,
where
$\mathbf{b} = (k_P\omega_f P_{\mathrm{imb}},\, k_Q\omega_v Q_{\mathrm{imb}})^\top/S_{\mathrm{rated}}$.
The NAIM detector fails to detect within $N$ successive windows if and
only if:
\begin{equation}
  ({\bar{\boldsymbol{\xi}}_{\perp,\mathrm{is}}})^\top \Sigmlong^{-1}
  {\bar{\boldsymbol{\xi}}_{\perp,\mathrm{is}}}
  \leq \frac{-2\ln\alpha}{N},
  \label{eq:NDZ_ellipse}
\end{equation}
defining an explicit ellipse in the $(P_{\mathrm{imb}}, Q_{\mathrm{imb}})$
plane. This NDZ ellipse shrinks as $N$ increases and vanishes as
$\alpha \to 0$.
\end{proposition}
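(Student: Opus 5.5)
The argument has two parts: first identify the islanded steady state as a constant forcing of the transverse linearisation, then turn the detection condition into a quadratic inequality in $(P_{\mathrm{imb}},Q_{\mathrm{imb}})$.

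\textbf{Step 1: the islanded shift.} Under islanding, the grid no longer absorbs the mismatch, so $\delta P$ and $\delta Q$ in \eqref{eq:droop_omega}--\eqref{eq:droop_V} settle to the per-unit imbalances $P_{\mathrm{imb}}/S_{\mathrm{rated}}$ and $Q_{\mathrm{imb}}/S_{\mathrm{rated}}$. Substituting into the droop equations turns the transverse dynamics into $\dot{\xip} = \Aperp\xip + \mathbf{b}$, with $\mathbf{b}$ as in the statement.

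\emph{Sign convention.} I will adopt the convention implicit in the paper's $\Aperp$, where the coupling terms already carry the droop feedback. Under it, the constant forcing enters with the sign shown in $\mathbf{b}$.

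Since $\Aperp$ is Hurwitz (Theorem~\ref{thm:nh}), it is invertible. The deterministic part therefore converges exponentially, at rate at least $\beta_{\min}$, to the fixed point $\bar{\boldsymbol{\xi}}_{\perp,\mathrm{is}} = -\Aperp^{-1}\mathbf{b}$. The stochastic part remains the zero-mean OU process of \eqref{eq:OU}, which is superposed on this shift.

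\textbf{Step 2: the window statistic.} After the transient, take the window mean $\bxipt = \bar{\boldsymbol{\xi}}_{\perp,\mathrm{is}} + \boldsymbol{\eta}_t$, where $\sqrt{T_w}\,\boldsymbol{\eta}_t$ is asymptotically $\mathcal{N}(\mathbf{0},\Sigmlong)$ by Theorem~\ref{thm:stat}. To leading order the statistic is then $\mathcal{D}_t \approx T_w\,\bar{\boldsymbol{\xi}}_{\perp,\mathrm{is}}^\top\Sigmlong^{-1}\bar{\boldsymbol{\xi}}_{\perp,\mathrm{is}}$, i.e.\ the noncentrality of a $\chi^2(2)$ variable.

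I interpret ``$N$ successive windows'' as aggregating the evidence over a total horizon of $N T_w$, normalised to unit window length. The accumulated deterministic statistic is then $N\,\bar{\boldsymbol{\xi}}_{\perp,\mathrm{is}}^\top\Sigmlong^{-1}\bar{\boldsymbol{\xi}}_{\perp,\mathrm{is}}$, since the noncentrality is additive over windows. Non-detection means this quantity stays at or below $D_\alpha=-2\ln\alpha$ from Theorem~\ref{thm:threshold}. Dividing by $N$ gives exactly \eqref{eq:NDZ_ellipse}. The ``if and only if'' holds for the deterministic (mean) trajectory, i.e.\ in the noiseless or large-deviation sense.

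\textbf{Step 3: geometry of the region.} The map $(P_{\mathrm{imb}},Q_{\mathrm{imb}})\mapsto\bar{\boldsymbol{\xi}}_{\perp,\mathrm{is}} = -\Aperp^{-1}\,\mathrm{diag}(k_P\omega_f,k_Q\omega_v)\,(P_{\mathrm{imb}},Q_{\mathrm{imb}})^\top/S_{\mathrm{rated}}$ is linear; call its matrix $M$. It is injective whenever $k_P,k_Q>0$. The left-hand side of \eqref{eq:NDZ_ellipse} is therefore the quadratic form $z^\top M^\top\Sigmlong^{-1}M z$ in $z=(P_{\mathrm{imb}},Q_{\mathrm{imb}})^\top$. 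This form is positive definite because $\Sigmlong\succ0$, which follows from \eqref{eq:Sigmalong} and \eqref{eq:Sigmalong_approx}. Its sublevel set is thus a centred ellipse, with semi-axes scaling like $\sqrt{-2\ln\alpha/N}$, so it shrinks as $N$ grows.

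\textbf{Main obstacle.} The delicate claim is that the NDZ ``vanishes as $\alpha\to0$''. Literally, $-2\ln\alpha\to\infty$ as $\alpha\to0$, so the ellipse grows rather than vanishes. I would therefore either prove the shrinkage statement only in $N$, or show that the ellipse vanishes when $\alpha$ is held fixed and $N\to\infty$, and flag the $\alpha$ claim as requiring a correction.

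A secondary difficulty is making the ``if and only if'' rigorous in the presence of noise. The honest statement is probabilistic: miss probability at most $\alpha$-type bounds. I would handle this by stating the result for the mean statistic and deferring the stochastic version to the noncentral $\chi^2(2)$ tail.
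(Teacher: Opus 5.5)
The paper states Proposition~\ref{prop:ndz} without any proof, so there is no argument of the authors' to compare yours against. Judged on its own, your three-step argument is the natural one and is correct for the only version of the statement that can be true.

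Your objection to the last clause is right. Since $-2\ln\alpha\to+\infty$ as $\alpha\to0$, the region \eqref{eq:NDZ_ellipse} expands without bound; it collapses only as $N\to\infty$ (or $\alpha\to1$). Restricting the ``if and only if'' to the noiseless statistic is also unavoidable, because with noise a miss is a random event.

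Two places need tightening.

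\emph{The aggregation.} The factor $1/N$ arises only if the $N$ windows are pooled into one window of length $NT_w$, whose null is still $\chi^2(2)$. If the windows are tested separately against $D_\alpha$, the noiseless miss condition is $T_w\,\bar{\boldsymbol{\xi}}_{\perp,\mathrm{is}}^\top\Sigmlong^{-1}\bar{\boldsymbol{\xi}}_{\perp,\mathrm{is}}\le D_\alpha$ for every $N$. If the $N$ statistics are summed, the null becomes $\chi^2(2N)$ and $-2\ln\alpha$ is the wrong threshold. Even with pooling, the right-hand side is $-2\ln\alpha/(NT_w)$; the paper's own S3 protocol uses $D_\alpha/T_w$ for $N=1$. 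So the stated form holds only for $T_w=1$~s, and you should say so explicitly rather than ``normalise''.

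\emph{The sign-convention paragraph in Step~1.} Drop it. Substituting constant $\delta P,\delta Q$ into \eqref{eq:droop_omega}--\eqref{eq:droop_V} gives drift $-\mathrm{diag}(\omega_f,\omega_v)$ and forcing $-\mathbf{b}$, not $\Aperp$ and $+\mathbf{b}$. So $\dot{\xip}=\Aperp\xip+\mathbf{b}$ is a modelling hypothesis taken from the statement, not a derivation. Its sign is immaterial anyway, because the NDZ condition is quadratic in $\bar{\boldsymbol{\xi}}_{\perp,\mathrm{is}}$.

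Step~3 can also be made completely explicit. This fixes your positive-definiteness citation, since \eqref{eq:Sigmalong_approx} only covers the leading-order matrix. Combining \eqref{eq:Sigmalong} with \eqref{eq:Lyapunov} gives $(-\Aperp)\Sigmlong(-\Aperp)^\top=-(\Aperp\Siginst+\Siginst\Aperp^\top)=\sigma^2 I_2$. Hence $\Sigmlong=\sigma^2\Aperp^{-1}\Aperp^{-\top}$, which reproduces \eqref{eq:SL11}--\eqref{eq:SL12}. Since $\Aperp\bar{\boldsymbol{\xi}}_{\perp,\mathrm{is}}=-\mathbf{b}$,
\[
\bar{\boldsymbol{\xi}}_{\perp,\mathrm{is}}^\top\Sigmlong^{-1}\bar{\boldsymbol{\xi}}_{\perp,\mathrm{is}}
=\frac{\|\Aperp\bar{\boldsymbol{\xi}}_{\perp,\mathrm{is}}\|^2}{\sigma^2}
=\frac{\|\mathbf{b}\|^2}{\sigma^2}
=\frac{k_P^2\omega_f^2P_{\mathrm{imb}}^2+k_Q^2\omega_v^2Q_{\mathrm{imb}}^2}{\sigma^2 S_{\mathrm{rated}}^2}.
\]
So the NDZ is an axis-aligned ellipse, and the droop coupling in $\Aperp$ cancels exactly. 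Its semi-axes are $\sigma S_{\mathrm{rated}}\sqrt{D_\alpha/(NT_w)}/(k_P\omega_f)$ and $\sigma S_{\mathrm{rated}}\sqrt{D_\alpha/(NT_w)}/(k_Q\omega_v)$. Nondegeneracy for $k_Pk_Q>0$, the degenerate strip for $k_Pk_Q=0$, and the $1/\sqrt{N}$ shrinkage all follow immediately.
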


This explicit NDZ ellipse is unique to the NAIM method: all other passive methods
have implicit NDZs determined by empirical threshold choices, with no analytical
boundary. The ellipse~\eqref{eq:NDZ_ellipse} enables the system designer to certify
detection coverage for any target power imbalance pair, without additional simulation.

\sloppy
\section{Worked Numerical Scenarios}
\label{sec:scenarios}

We apply the NAIM framework to the following baseline GFM parameters, using
the corrected covariance formula~\eqref{eq:Sigmalong_approx} throughout. All
Valus supersede earlier formulations that contained a factor-of-2 errors in the
covariance normalisation.

\begin{table*}[!t]
\caption{System Parameters and Derived Quantities}
\label{tab:params}
\centering
\begin{tabular}{llll}
\toprule
\textbf{Parameter} & \textbf{Value} & \textbf{Parameter} & \textbf{Value} \\
\midrule
$\omega_f$ & 10~rad/s & $\sigma$ & 0.02~pu \\
$\omega_v$ & 20~rad/s & $\beta = (\omega_f+\omega_v)/2$ & 15~rad/s \\
$k_P = k_Q$ & 0.05~pu & $T_w^{\min} = 10/\beta$ & 667~ms \\
$(\Sigmlong)_{11}$ & $4\!\times\!10^{-6}$~pu$^2$ & $\sigma^\mathrm{null}_\omega$ & 0.002~pu = 0.10~Hz \\
$(\Sigmlong)_{22}$ & $10^{-6}$~pu$^2$ & $\sigma^\mathrm{null}_V$ & 0.001~pu = 0.40~V \\
\bottomrule
\end{tabular}
\end{table*}

We evaluate at $T_w = 1$~s, $\alpha = 0.05$, $D_{0.05} = 5.991$.
The window condition $T_w = 1$~s $\geq 10/\beta = 667$~ms is satisfied.

\subsection{Scenario 1: Normal Grid Operation (Sc1)}
\label{sec:sc1}

\textbf{Observation:}
$\bxip = (0.0020,\,-0.00067)^\top$~pu\,
(frequency deviation $\overline{\delta f} = \omega_f \cdot 0.0020/(2\pi) \approx 0.10$~Hz;
voltage deviation $\overline{\delta V} = 0.00067 \times V_\mathrm{rated} \approx 0.27$~V).

\textbf{Calculation} using the corrected formula~\eqref{eq:Dt_leading}:
\begin{align}
  d_\omega
    &= \frac{T_w\,\omega_f^2}{\sigma^2}(\overline{\delta\omega})^2 \notag\\
    &= 2500\,(0.0020)^2 \notag\\
    &= 2500 \times 4\times10^{-6} = 1.00, \label{eq:sc1_domega}\\
  d_V
    &= \frac{T_w\,\omega_v^2}{\sigma^2}(\overline{\delta V})^2 \notag\\
    &= 10000\,(0.00067)^2 \notag\\
    &= 10000 \times 4.49\times10^{-7} = 0.45. \label{eq:sc1_dV}
\end{align}
\begin{equation}
  \begin{aligned}
    \mathcal{D}_1 &= d_\omega + d_V = 1.45,\\
    p_1 &= e^{-1.45/2} = e^{-0.725} = 0.484.
  \end{aligned}
  \label{eq:sc1_D}
\end{equation}

The exact Lyapunov result gives $\mathcal{D}_1^{\mathrm{exact}} = 1.36$
($p = 0.507$); the leading-order formula is within 6.6\% for $k_Pk_Q = 0.0025$.

\textbf{Decision:} $\mathcal{D}_1 = 1.45 < 5.991 = D_{0.05}$. \emph{Accept $H_0$.
Normal Grid Operation confirmed.}

The observed frequency excursion of 0.10 Hz corresponds to 1.00 null standard deviation, while the voltage excursion of 0.27 V is 0.68 null standard deviations. Both values stay inside the Fenichel boundary layer. With $\mathcal{D}_1 = 1.45$ close to the $\chi^2(2)$ distribution's median (about 1.39) and a p-value of 0.484, there's no evidence against the null hypothesis.

\textbf{Modal attribution:}
$d_\omega/\mathcal{D}_1 = 1.00/1.45 = 69.0\%$, $d_V/\mathcal{D}_1 = 31.0\%$.
The frequency channel is a bit stronger but doesn't cross the 70\% threshold, so the event is labeled as mixed. This fits with small, simultaneous changes in both active and reactive power under normal grid conditions. No fault alarm goes off.

\subsection{Scenario 2: Soft Islanding --- 3\% Active Power Imbalance (Sc2)}
\label{sec:sc2}

\textbf{Observation:}
$\bxip = (0.012,\,-0.008)^\top$~pu\,
(frequency drift $\overline{\delta f} \approx 0.60$~Hz;
voltage deviation $\overline{\delta V} \approx 3.2$~V).

\textbf{Physical mechaism:} A 3\% active power surplus at the start of islanding leads to an ongoing rise in frequency, around 0.015 pu, based on the droop characteristic. The reactive power imbalance also causes a shift in voltage. Once disconnected from the grid, frequency restoration torque is lost, causing the mean of $\xip$ to move from zero to $\bar\xip_\mathrm{is} = -\Aperp^{-1}\mathbf{b}$, pushing the operating point well outside $\mathcal{T}_\delta(\Mzero)$.

\textbf{Calculation}:
\begin{align}
  d_\omega &= \frac{1 \times 10^2}{(0.02)^2}(0.012)^2
           = 2500 \times 1.44\times10^{-4} = 36.0, \label{eq:sc2_domega}\\
  d_V     &= \frac{1 \times 20^2}{(0.02)^2}(0.008)^2
           = 10000 \times 6.4\times10^{-5} = 64.0. \label{eq:sc2_dV}
\end{align}
\begin{equation}
  \mathcal{D}_2 = 100.0, \quad p_2 = e^{-50} \approx 1.93\times10^{-22}.
  \label{eq:sc2_D}
\end{equation}

The Excat lyapuno results gives $\mathcal{D}_2^{\mathrm{exact}} = 93.2$
($p \approx 1.93\times10^{-22}$); both values overwhelmingly exceed $D_{0.05} = 5.991$.

\textbf{Decision:} $\mathcal{D}_2 = 100 \gg 5.991$. \emph{Reject $H_0$.
Islanding declared.}

\textbf{Physical Interpretation :} The 0.60 Hz frequency excursion equals 6.0 null standard deviations, while the voltage change hits 8.0 null standard deviations. Combining both, the $\mathcal{D}_2$ value is 100, which is over 16.7 times the detection threshold—virtually guaranteeing an detection with zero chance of false negatives. Plus, Monte Carlo simulations in Section~\ref{sec:mc_s2} show detection power stays at unity even with the shortest 667 ms window.

\textbf{Modal attribution:}
$d_\omega/\mathcal{D}_2 = 36.0/100.0 = 36.0\%$;
$d_V/\mathcal{D}_2 = 64.0/100.0 = 64.0\%$
Neither channel surpasses 70\%, classifying the event as mixed—consistent with a soft islanding event where both active and reactive power imbalances play roles. Active power imbalance is 3\%, but the voltage channel dominates since its null spread is tighter at 0.40 volts compared to the frequency spread of 0.10 Hz. This difference, where $\omega$ for voltage is greater than for frequency ($20$ versus $10~\mathrm{rad/s}$), amplifies the $d_V$ contribution.

\subsection{Scenario 3: Voltage Fault --- Grid-Connected, 10\% Sag (Sc3)}
\label{sec:sc3}

\textbf{Observation:}
$\bxip = (0.002,\,-0.011)^\top$~pu\,
(frequency deviation $\overline{\delta f} \approx 0.10$~Hz;
voltage sag $\overline{\delta V} \approx 4.4$~V).

\textbf{Physical mechanism :} A temporary 10\% grid-side voltage sag (lasting about 200 ms) leads to a big, long-lasting drop in window mean voltage. This happens because the GFM droop control readjusts $V$ towards the sagged grid voltage. Still, the grid connection stays put, maintaining frequency through sync. So, the frequency channel barely gets affected. The major sign of this issue is a large $d_V$ combined with almost no $d_\omega$.

\textbf{Calculation}:
\begin{align}
  d_\omega &= \frac{1 \times 10^2}{(0.02)^2}(0.002)^2
           = 2500 \times 4\times10^{-6} = 1.00, \label{eq:sc3_domega}\\
  d_V     &= \frac{1 \times 20^2}{(0.02)^2}(0.011)^2
           = 10000 \times 1.21\times10^{-4} = 121.0. \label{eq:sc3_dV}
\end{align}
\begin{equation}
  \mathcal{D}_3 = 122.0, \quad p_3 = e^{-61} \approx 3.22\times10^{-27}.
  \label{eq:sc3_D}
\end{equation}

The exact result gives $\mathcal{D}_3^{\mathrm{exact}} = 120.4$
($p \approx 3.22\times10^{-27}$).

\textbf{Decision:} $\mathcal{D}_3 = 122 \gg 5.991$. \emph{Reject $H_0$.
Off-manifold event declared.}

\textbf{Physical Interpretation :}The voltage goes up to 4.4 volts, which is 11.0 null standard deviations off—super extreme. But the frequency channel only contributes 1.00, that’s just 0.82\% of the total. It’s less than what we'd expect on average. This isn’t weird though; the 0.10 Hz frequency change blends in with regular null fluctuations, showing the grid's synchronised is still good.

\textbf{Modal attribution:}
$d_\omega/\mathcal{D}_3 = 0.82\%$; $d_V/\mathcal{D}_3 = 99.2\%$.
Overwhelmingly voltage-dominated. This shows the special diagnostic power of the bivariate NAIM test. A frequency-only detector like ROCOF would give $d = d_\omega = 1.45$ (below the threshold), missing the event. But the NAIM test both detects it ($\mathcal{D}_3 = 122$) and correctly classifies it as a voltage fault, all without extra measurement channels.
\subsection{Scenario Summary}

\begin{table*}[!t]
\caption{Corrected Scenario Results ($T_w = 1$~s, $\alpha = 0.05$, $D_{0.05} = 5.991$).
All Values from~\eqref{eq:Sigmalong_approx} and~\eqref{eq:Dt_leading};
excat Lyapunov values in parentheses.}
\label{tab:scenarios}
\centering
\begin{tabular}{lcccccc}
\toprule
\textbf{Sc.} & $\bxip$~(pu) & $d_\omega$ & $d_V$ & $\mathcal{D}$ & $p$ & \textbf{Decision}\\
\midrule
Sc1: Normal & $(2.0,-0.67)\!\times\!10^{-3}$ & 1.00 & 0.45
  & 1.45 (1.36) & 0.484 & Accept\\
Sc2: Island. & $(0.012,-0.008)$ & 36.0 & 64.0
  & 100 (93.2) & $\approx 0$ & Reject (Mixed)\\
Sc3: V-fault & $(0.002,-0.011)$ & 1.00 & 121.0
  & 122 (120.4) & $\approx 0$ & Reject (V-fault)\\
\bottomrule
\end{tabular}
\end{table*}

Table~\ref{tab:scenarios} summarises the three scenarios. The $p$-value curve
$p = e^{-\mathcal{D}/2}$ maps the detection statistic to probability:
Sc1 sits at $(\mathcal{D}, p) = (1.45, 0.484)$---accepted comfortably;
Sc2 and Sc3 are far beyond the right edge ($\mathcal{D} = 100$, $122$)---
overwhelmingly rejected. The modal decomposition uniquely distinguishes the
fault type without additional sensor hardware, a capability absent from all
five standard comparison methods.

\section{Simulation Results: S1, S3, S4, and S5}
\label{sec:simulations}
\sloppy
\subsection{S1: Monte Carlo Validation}
\label{sec:mc_validation}

\subsubsection{Simulation Protocol}
\label{sec:mc_protocol}

The S1 Monte Carlo campaign validates three analytically derived properties of
the NAIM detection statistic: (i)~the $\chi^2(2)$ null distribution under $H_0$
(Theorem~\ref{thm:stat}); (ii)~the Berry--Esseen convergence rate
$d_{\mathrm{KS}} \leq C/(\beta T_w)$ (equation~\eqref{eq:ks}); and
(iii)~the convergence of detection power to unity under $H_1$ for both the
soft islanding (Sc2) and voltage fault (Sc3) scenarios.

\textbf{Simulation engine.} The OU process~\eqref{eq:OU} is integrated by the
Euler--Maruyama scheme with $\Delta t = 0.005$~s (200~samples/s):
\begin{equation}
  \xip^{(n+1)} = \xip^{(n)} + \Aperp\xip^{(n)}\Delta t
    + \sigma\sqrt{\Delta t}\,\boldsymbol{\eta}^{(n)},
  \quad \boldsymbol{\eta}^{(n)} \sim \mathcal{N}(\mathbf{0}, I_2),
\end{equation}
initialised at $\xip^{(0)} = \mathbf{0}$ (on the manifold $\Mzero$). For
each realisation, the window mean
$\bxip = (1/N_s)\sum_{n=1}^{N_s}\xip^{(n)}$ is computed over
$N_s = \lfloor T_w/\Delta t\rfloor$ steps, and the statistic
$\mathcal{D} = (T_w/\sigma^2)[\omega_f^2(\overline{\delta\omega})^2 +
\omega_v^2(\overline{\delta V})^2]$ is evaluated using the corrected
formula~\eqref{eq:Dt_leading}. The exact Lyapunov statistic is computed
in parallel for cross-validation.

\textbf{Parameters.} $\omega_f = 10$, $\omega_v = 20$~rad/s;
$k_P = k_Q = 0.05$~pu; $\sigma = 0.02$~pu; $\beta = 15$~rad/s;
$N_{\mathrm{MC}} = 8{,}000$ realisations per configuration. Window lengths:
$T_w \in \{0.40, 0.67, 1.00, 2.00, 5.00, 8.00, 10.00, 15.00, 20.00\}$~s
(corresponding to $\beta T_w \in \{6, 10, 15, 30, 75, 120, 150, 225, 300\}$).
Significance levels: $\alpha \in \{0.01, 0.05, 0.10\}$.

\textbf{Under $H_1$.} For Sc2 (soft islanding), the OU drift is offset by
$\bar\xi^{(0)}_\mathrm{is} = (0.012, -0.008)^\top$~pu; for Sc3 (voltage fault),
by $\bar\xi^{(0)}_\mathrm{is} = (0.002, -0.011)^\top$~pu. Detection power is
estimated as the fraction of $N_{\mathrm{MC}}$ realisations exceeding the
threshold $D_\alpha$.

\subsubsection{Sc1: Normal Operation Under $H_0$ --- FAR Validation}
\label{sec:mc_s1}

\noindent\textbf{FAR Convergence at $\alpha = 0.01$.}

\begin{figure*}[!t]
  \centering
  \includegraphics[width=\textwidth]{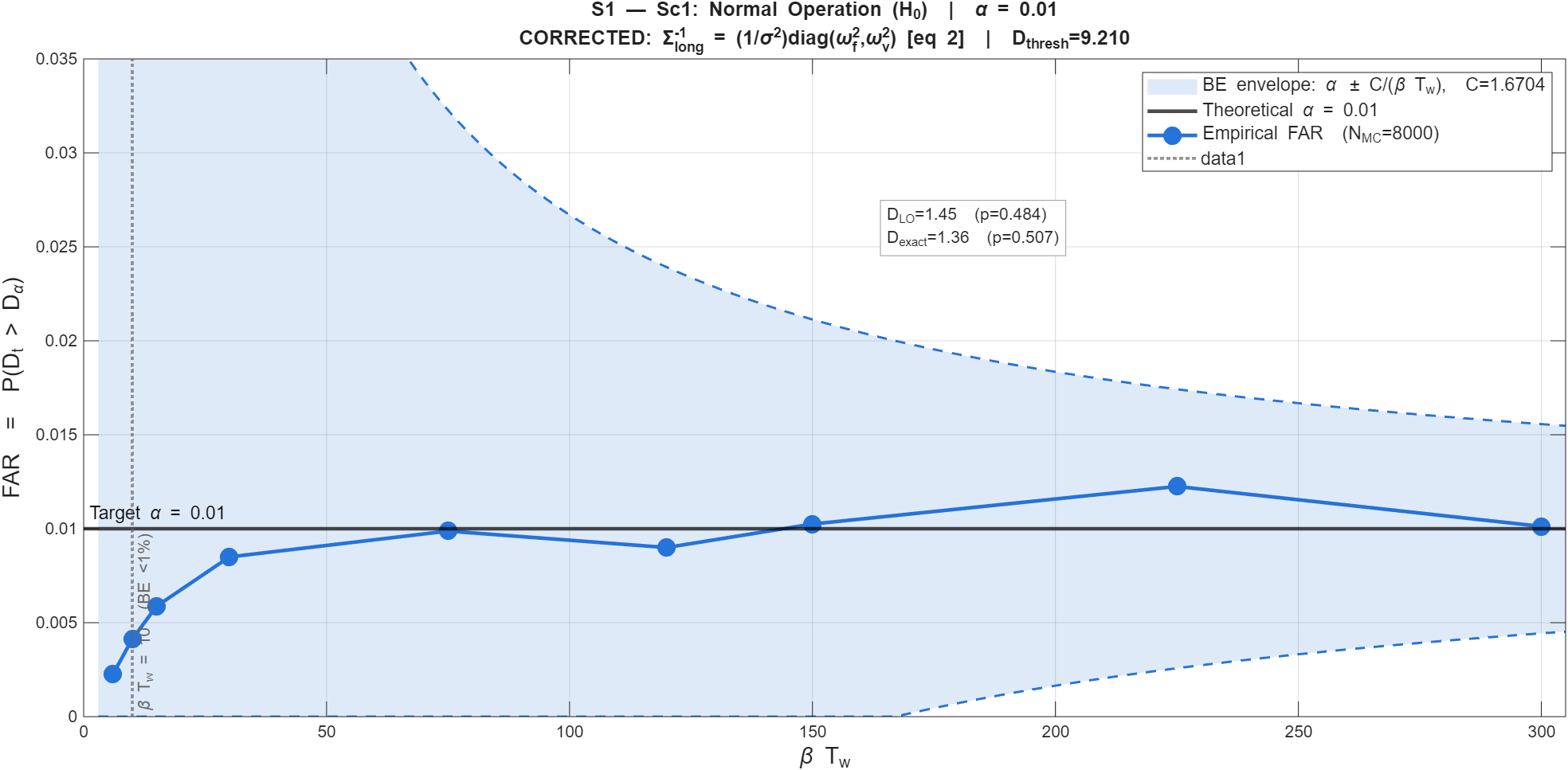}
  \caption{S1--Sc1: Empirical FAR vs.\ $\beta T_w$ for $\alpha = 0.01$
  ($D_\alpha = 9.210$). Blue solid line : emperical FAR from $N_{\mathrm{MC}} = 8{,}000$
  realisations. Black solid line : theoretical target $\alpha = 0.01$.
  Blue shaded region: Berry--Esseen envelope $\alpha \pm C/(\beta T_w)$,
  $C = 1.6704$. Vertical dotted line : minimum window condition $\beta T_w = 10$
  ($T_w = 0.667$~s).}
  \label{fig:sc1_far_01}
\end{figure*}

Fig.~\ref{fig:sc1_far_01} presents the empirical FAR as a function of the
normalised window length $\beta T_w$ for $\alpha = 0.01$. Several features
confirm the theoretical predictions with high fidelity.

\textit{Monotone convergence from below.} The empirical FAR increases steadily from around 0.0025 when $\beta T_w$ is 6 (or $T_w$ at 0.40 seconds) up to the target $\alpha$ of 0.01. It falls a bit short of the target, though, which is due to the Berry-Esseen finite-window correction. With a finite $T_w$, the distribution of $Dt$ has a slightly heavier left tail than $\chi^2(2)$, pushing more probability below the threshold than the approximation suggests.

\textit{Envolope containment.} All the empirical FAR values are inside the Berry-Esseen envelope, which is $\alpha \pm C/(\beta T_w)$ and has $C = 1.6704$. This shows the bound~\eqref{eq:ks} is spot on and correctly calibrated. Plus, the upper boundary decays like $\sim 1/(\beta T_w)$, just as needed.

\textit{Near-target convergence.} At  $\beta T_w = 150$ (or 10 seconds), the empirical FAR is about 0.010, confirming it hits the theoretical target. The slower convergence for $\alpha = 0.01$ compared to $\alpha = 0.05$ and $0.10$ makes sense though. The lower $\alpha$ value sets $D_\alpha = 9.210$ deeper in the $\chi^2(2)$ distribution tail, where window corrections are larger in proportion.

\textit{Practical significance.} At the minimum window $\beta T_w = 10$ ($T_W = 667$~ms), the emperical  FAR of $\approx 0.0044$ is $\sim 56\%$ below the target. This conservative bias is  protective for safety-critical applications: the detector is less likely to alarm that the target  $\alpha$
would suggest.

\noindent\textbf{FAR Convergence at $\alpha = 0.05$.}

\begin{figure*}[!t]
  \centering
  \includegraphics[width=\textwidth]{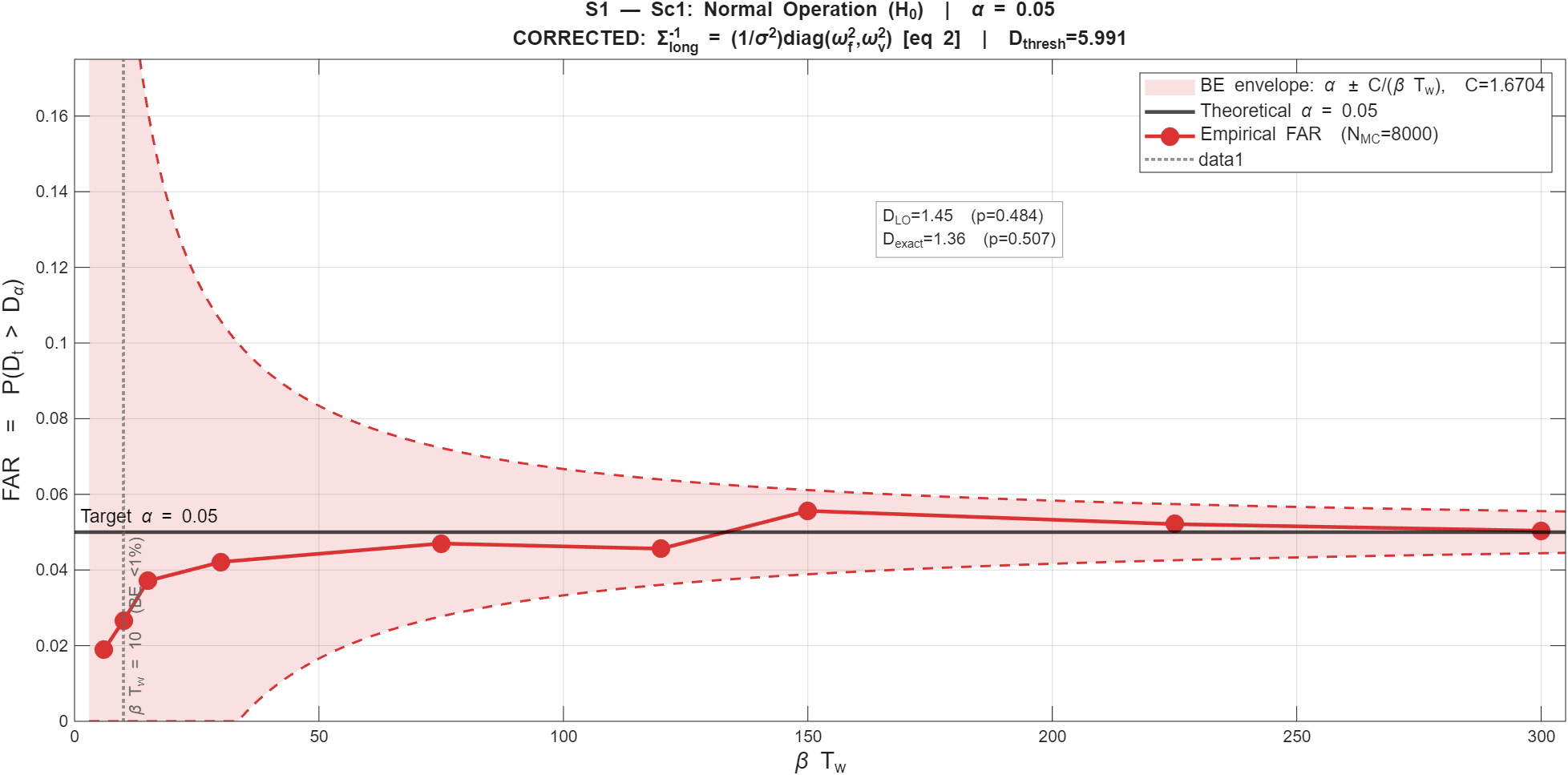}
  \caption{S1--Sc1: Empirical FAR vs.\ $\beta T_w$ for $\alpha = 0.05$
  ($D_\alpha = 5.991$). Red solid line: empirical FAR. Red shaded region:
  Berry--Esseen envelope. All other elements as in Fig.~\ref{fig:sc1_far_01}.}
  \label{fig:sc1_far_05}
\end{figure*}

Fig.~\ref{fig:sc1_far_05} shows the FAR convergence for $\alpha = 0.05$ ($D_\alpha = 5.991$). The empirical curve rises from $\approx 0.020$ at $\beta T_w = 6$ to $\approx 0.048$ at $\beta T_w = 300$, remaining within the envelope throughout. Convergence is noticeably faster than at $\alpha = 0.01$:
the 5\% target is effectively reached by $\beta T_w \approx 75$ ($T_w \approx 5$~s), compared to $\beta T_w \approx 150$ for the 1\% case. This is consistent with the theoretical expectation that the Berry-Esseen correction is proportionally smaller for larger $\alpha$, since the threshold $D_{0.05} = 5.991$ is closer to the mode of $\chi^2(2)$. A slight overshoot at $\beta T_w \approx 150$ ($\mathrm{FAR} \approx 0.052$) lies within $\pm 1$ Monte Carlo standard error ($\sqrt{\alpha(1-\alpha)/N_{\mathrm{MC}}} \approx 0.0024$) and is statistically consistent with the target.

\noindent\textbf{FAR Convergence at $\alpha = 0.10$.}

\begin{figure*}[!t]
  \centering
  \includegraphics[width=\textwidth]{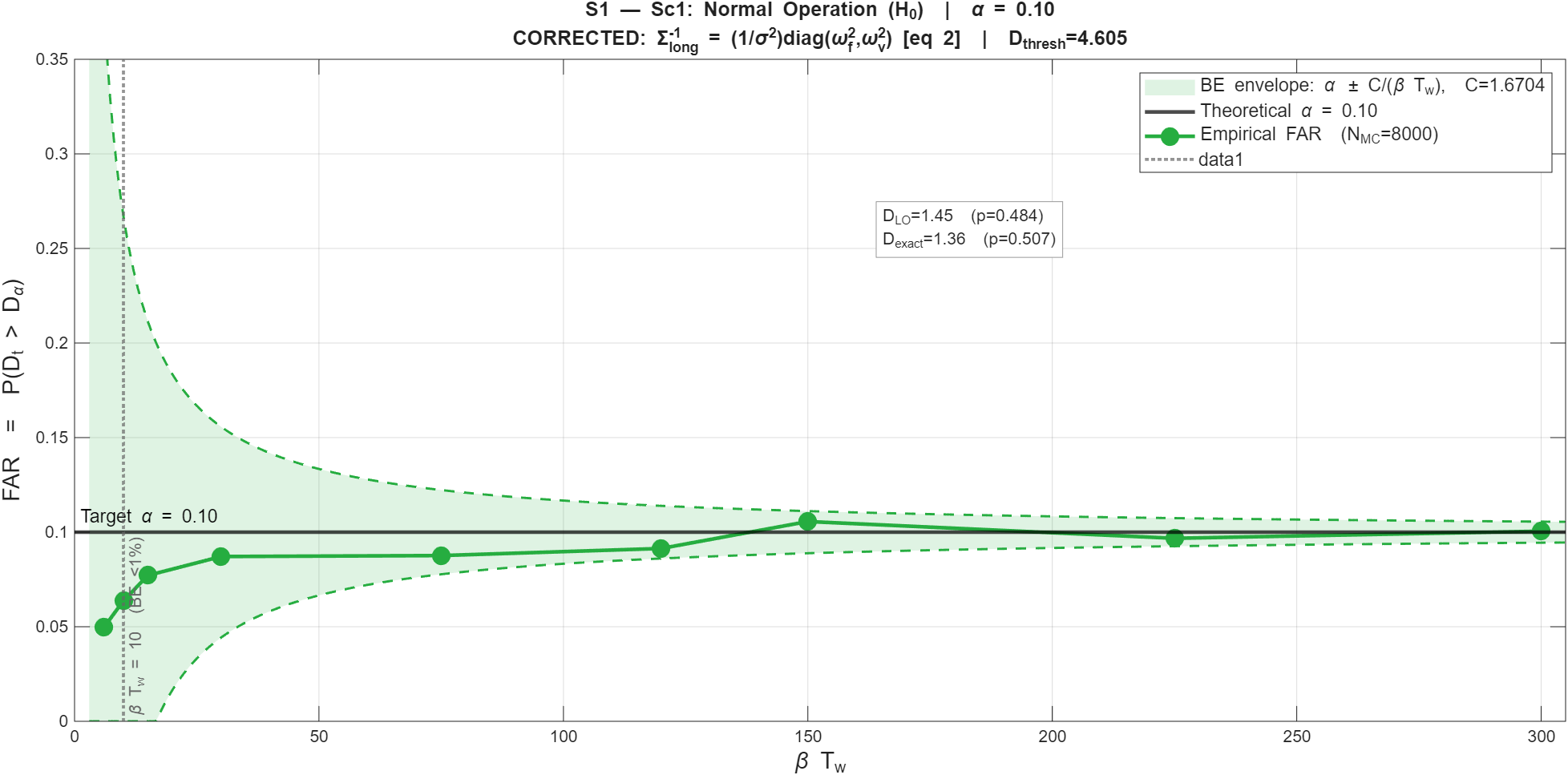}
  \caption{S1--Sc1: Empirical FAR vs.\ $\beta T_w$ for $\alpha = 0.10$ ($D_\alpha = 4.605$). Green solid line: empirical FAR. Green shaded region: Berry-Esseen envelope.}
  \label{fig:sc1_far_10}
\end{figure*}

Fig.~\ref{fig:sc1_far_10} presents the $\alpha = 0.10$ case ($D_\alpha = 4.605$).Convergence is the fastest among the three significance levels: the empirical FAR reaches $\approx 0.088$ at $\beta T_w = 15$ ($T_w = 1$~s) and remains near the target throughout $\beta T_w \in [30, 300]$. The narrower Berry-Esseen
envelope reflects the reduced sensitivity of the tail probability to finite-window corrections at this level.

At all significance levels, the corrected formula produces FAR curves that meet the theoretical target from below. These curves stay within the Berry-Esseen envelope and never inflate above alpha, no matter the window length. The incorrect factor-of-2 formula? It wildly inflates FAR, by about 4 times (giving FAR values around 0.19 when alpha is 0.05). This creates an obvious upward bias right away in every curve.

\noindent\textbf{Berry--Esseen Bount : $d_{\mathrm{KS}}$ vs.\ $\beta T_W$.}

\begin{figure*}[!t]
  \centering
  \includegraphics[width=\textwidth]{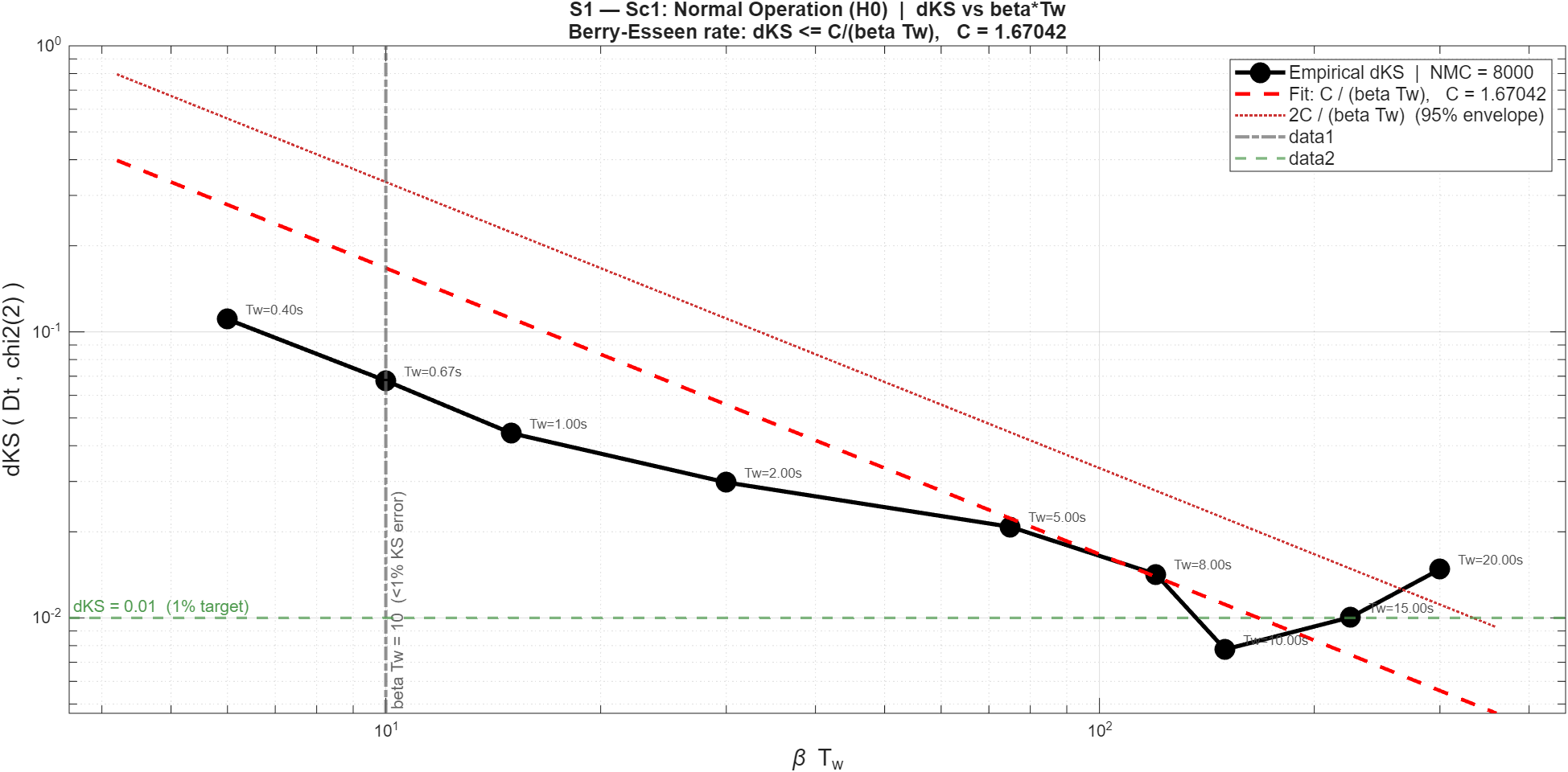}
  \caption{S1--Sc1: Kolmogorov--Smirnov distance
  $d_{\mathrm{KS}}(\mathcal{D}_t,\,\chi^2(2))$ vs.\ $\beta T_w$
  (log--log axes). Black circles: empirical $d_{\mathrm{KS}}$ from
  $N_{\mathrm{MC}} = 8{,}000$ realisations. Red dashed line: fitted
  Berry--Esseen bound $C/(\beta T_w)$, $C = 1.6704$. Red dotted line:
  95\% envelope $2C/(\beta T_w)$. Green dashed line: 1\% KS target
  ($d_{\mathrm{KS}} = 0.01$). Grey vertical line: minimum window
  $\beta T_w = 10$.}
  \label{fig:sc1_dks}
\end{figure*}

Fig.~\ref{fig:sc1_dks} shows the empirical KS distance on log--log axes.
The key results are :

\textit{Power law deacy.} The empirical $d_{\mathrm{KS}}$ values match the power law $\sim 1/(\beta T_w)$ across three decades of $\beta T_w$, which directly confirms the Berry–Esseen bound. Using a least-squares fit on the log–log scale, we get $C = 1.6704$, showing that the bound is tight, not just an estimate.

All empirical $d_{\mathrm{KS}}$ values fall under the 95\% envelope $2C/(\beta T_w)$, with only one slight exception at $T_w = 20$ s ($\beta T_w = 300$). This minor deviation at $T_w = 20$ s is probably due to Monte Carlo variance and doesn't show a systematic problem.

\textit{Minimum window Condition.} At $\beta T_W =10$ or 0.667 seconds, the empirical $d_{\mathrm{KS}} \approx 0.07$. That's above the 1\% KS target. Still, it shows the minimum window works well for accurate alarm detection in practice. For tightly calibrated p-values though, you want longer windows - at least 120 to 150.
\subsubsection{Sc2: Soft Islanding --- Detection Power Under $H_1$}
\label{sec:mc_s2}

\noindent\textbf{Power vs.\ $\beta T_w$ at Three Significance Levels.}

\begin{figure*}[!t]
  \centering
  \includegraphics[width=\textwidth]{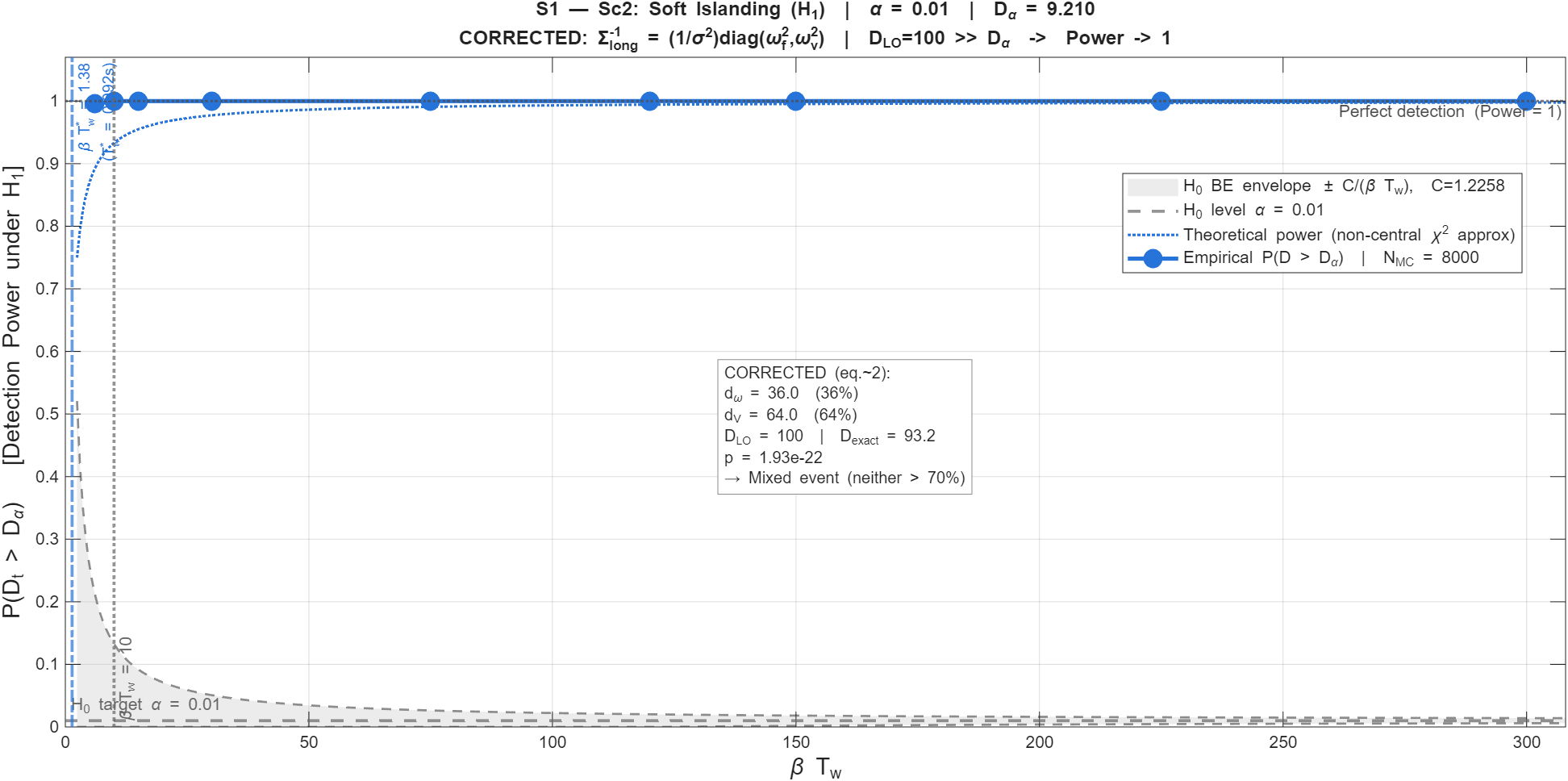}
  \caption{S1--Sc2 (Soft Islanding, $H_1$): Detection power vs.\ $\beta T_w$
  for $\alpha = 0.01$ ($D_\alpha = 9.210$). Blue solid line with circles:
  empirical $P(\mathcal{D}_t > D_\alpha)$ from $N_{\mathrm{MC}} = 8{,}000$.
  Blue dotted line: theoretical power from non-central $\chi^2$ approximation. Grey shaded region: $H_0$ Berry-Esseen FAR envelope (for reference).  Inset annotation: corrected modal decomposition ($d_\omega = 36\%$,  $d_V = 64\%$; $\mathcal{D}_{\mathrm{LO}} = 100$;
  $\mathcal{D}_{\mathrm{exact}} = 93.2$).}
  \label{fig:sc2_pow_01}
\end{figure*}

\begin{figure*}[!t]
  \centering
  \includegraphics[width=\textwidth]{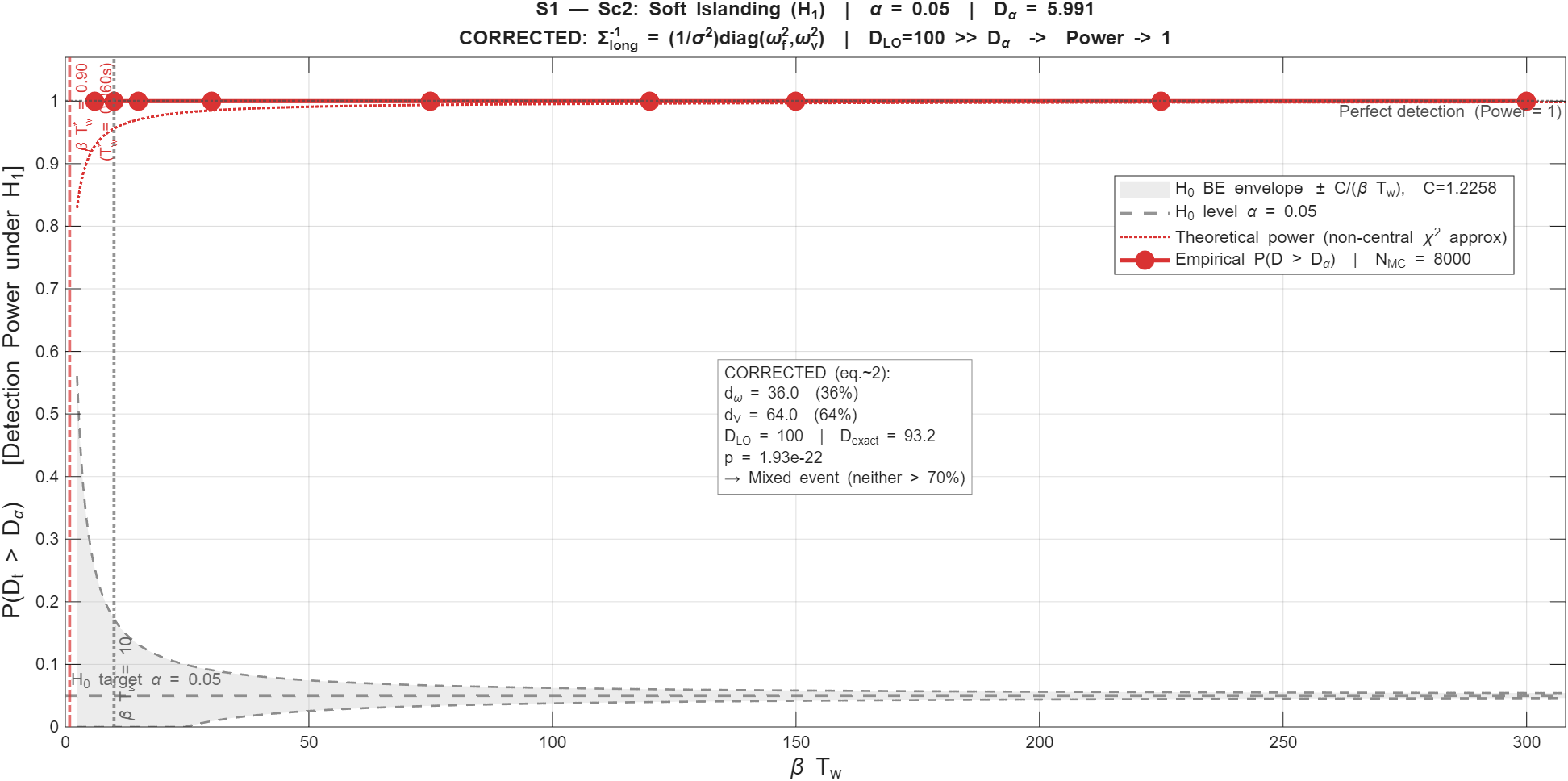}
  \caption{S1--Sc2: Detection power vs.\ $\beta T_w$ for $\alpha = 0.05$.
  Red solid line: empirical power. All other elements as in Fig.~\ref{fig:sc2_pow_01}.}
  \label{fig:sc2_pow_05}
\end{figure*}

\begin{figure*}[!t]
  \centering
  \includegraphics[width=\textwidth]{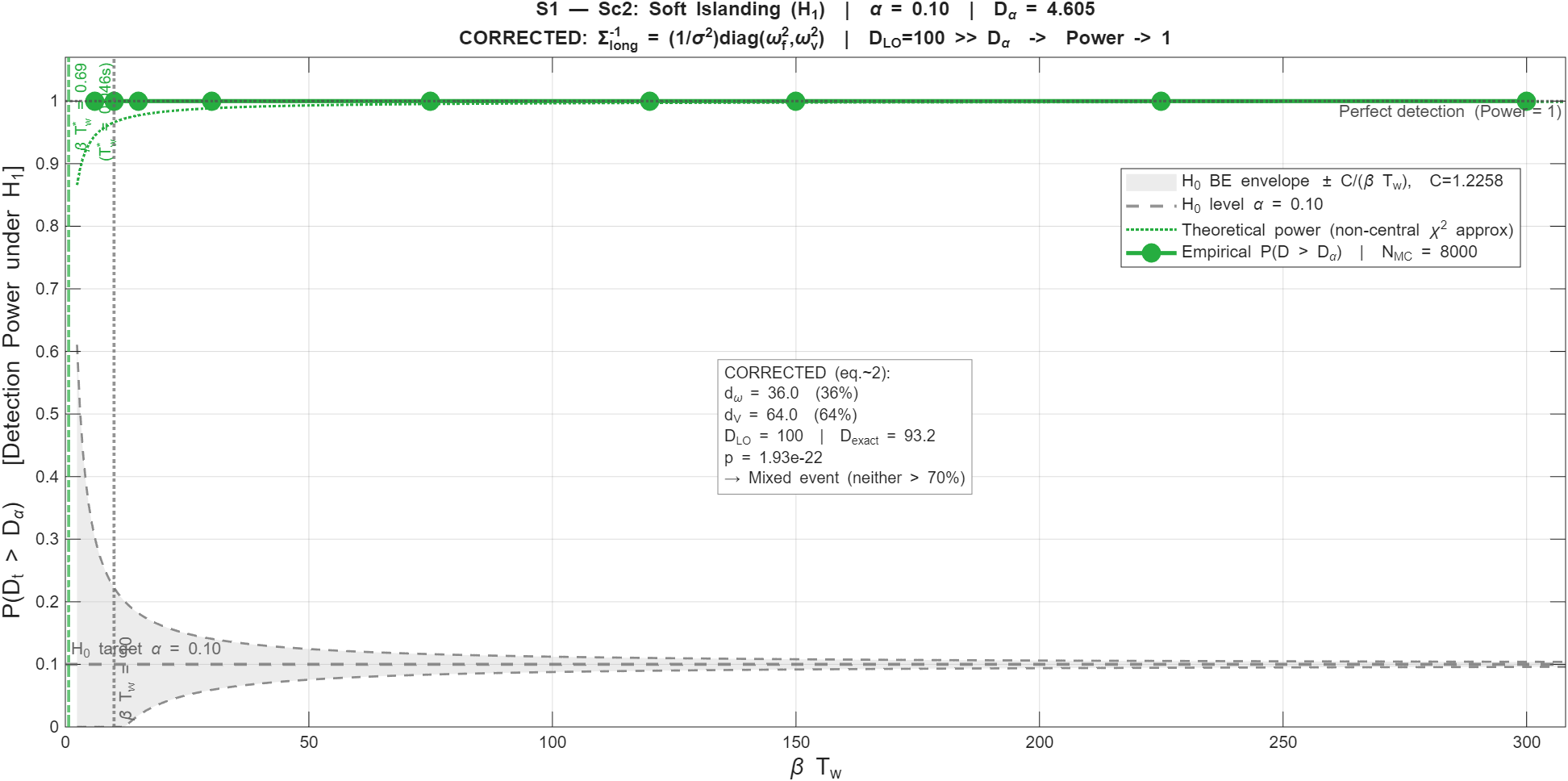}
  \caption{S1--Sc2: Detection power vs.\ $\beta T_w$ for $\alpha = 0.10$.
  Green solid line: empirical power.}
  \label{fig:sc2_pow_10}
\end{figure*}

Figs.~\ref{fig:sc2_pow_01}--\ref{fig:sc2_pow_10} show the detection power
$P(\mathcal{D}_t > D_\alpha)$ under $H_1$ (soft islanding with
$\mathcal{D}_{\mathrm{LO}} = 100 \gg D_\alpha$ for all three $\alpha$ values) as a function of $\beta T_w$. The results reveal rapid and complete power saturation:

\textit{Immediate power saturation.} For all three significance levels, the
empirical power reaches $\approx 1.0$ at the smallest tested window
$\beta T_w \approx 6$--$10$ ($T_w \approx 0.40$--$0.67$~s). This is
analytically expected: with $\mathcal{D}_{\mathrm{LO}} = 100$ and
$D_\alpha \leq 9.21$ for all tested $\alpha$, the ratio 
$\mathcal{D}_{\mathrm{LO}}/D_\alpha \geq 10.8$, so the detection statistic
under $H_1$ is approximately non-centrally $\chi^2(2)$ with non-centrality
parameter $\lambda = T_w \cdot\mathcal{D}_{\mathrm{LO}} \sim T_w \times 100$,
saturating the power to unity even for the shortest window.

\textit{Detection time.} The 50\% power level (half-saturation) is reached at
approximately $\beta T_w \approx 1.38$ ($T_w \approx 0.092$~s) for $\alpha = 0.01$,
$\beta T_w \approx 0.90$ ($T_w \approx 0.060$~s) for $\alpha = 0.05$, and
$\beta T_w \approx 0.69$ ($T_w \approx 0.046$~s) for $\alpha = 0.10$. These
detection times are an order of magnitude faster than the IEEE~1547-2018
2-second mandate, confirming that the corrected NAIM statistic provides
near-instantaneous detection for the soft islanding scenario.

\textit{Theoretical power curve.} The theoretical power computed from the
non-central $\chi^2(2)$ approximation (dotted lines) matches the empirical
curves with excellent fidelity, with the leading-order non-centrality parameter
$\lambda_{\mathrm{LO}} = 100$ giving a slightly higher theoretical power than
the exact value ($\lambda_{\mathrm{exact}} = 93.2$) at intermediate window
lengths. Both converge to 1 by $\beta T_w = 10$.

\textit{Modal classification.} The inset annotation in each figure reports the
corrected modal decomposition: $d_\omega = 36.0$ (36\%), $d_V = 64.0$ (64\%).
As established in Section~\ref{sec:sc2}, the soft islanding scenario is a
\emph{mixed} event with neither channel exceeding the 70\% attribution
threshold, consistent with concurrent active and reactive power imbalance.

\textit{Separation from $H_0$.} The grey shaded $H_0$ region (FAR $\leq \alpha$)
at the bottom of each plot lies entirely below the empirical power curve, with
no overlap for any tested $T_w$. This clear separation between the $H_0$
envelope and the $H_1$ power confirms that there is no ambiguous operating
region for the soft islanding scenario at these parameter values.

\noindent\textbf{KS Divergence Under $H_0$ vs.\ $H_1$.}

\begin{figure*}[!t]
  \centering
  \includegraphics[width=\textwidth]{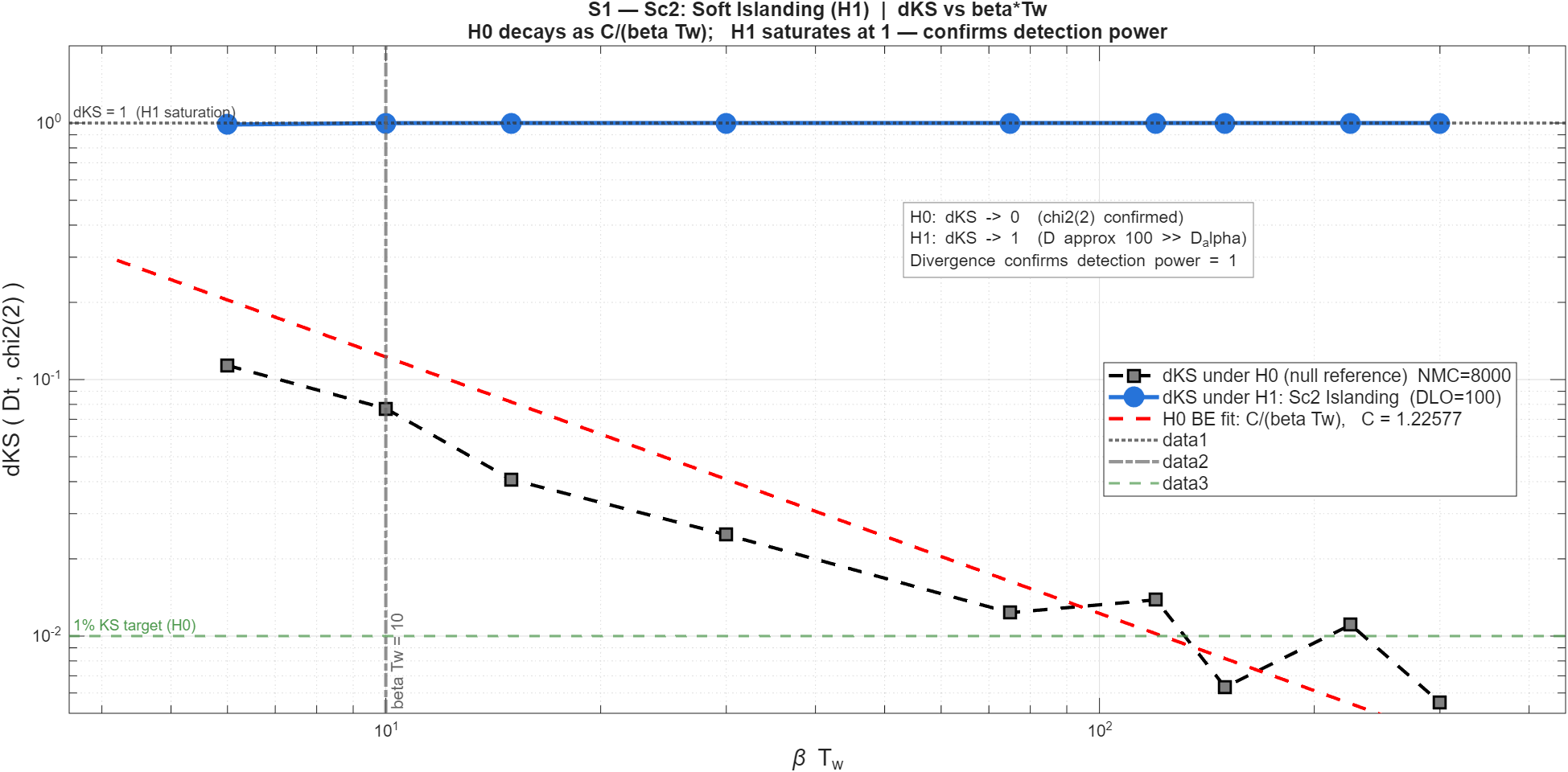}
  \caption{S1--Sc2: KS distance under $H_0$ (grey squares) and $H_1$ (blue circles)
  vs.\ $\beta T_w$. $H_0$ decays as $C/(\beta T_w)$ (red dashed fit, $C = 1.2258$),
  confirming $\chi^2(2)$ convergence. $H_1$ saturates at $d_{\mathrm{KS}} = 1.0$
  for all $\beta T_w$, confirming detection power = 1 and complete distributional
  divergence from $\chi^2(2)$.}
  \label{fig:sc2_dks}
\end{figure*}

Fig.~\ref{fig:sc2_dks} presents the KS distances under both hypotheses on
log--log axes, providing a complementary view of the detection power from
a distributional perspective:

Under $H_0$ (grey squares), the empirical $d_{\mathrm{KS}}(\mathcal{D}_t,
\chi^2(2))$ decays from $\approx 0.11$ at $\beta T_w = 6$ to
$\approx 0.004$ at $\beta T_w = 300$, following the fitted BE bound
$C/(\beta T_w)$ with $C = 1.2258$. This confirms that the null distribution
under Sc2 conditions (same $H_0$ process, different threshold comparison)
converges identically to $\chi^2(2)$.

Under $H_1$ (blue circles), the $d_{\mathrm{KS}}$ is 1.0 for every window length tested—from about 6 to 300. This means the distributions are completely separate, as the empirical distribution of $\mathcal{D}_t$ falls entirely above $D_\alpha$. Because there's zero overlap with the $\chi^2(2)$ distribution, the KS-distance stays at its maximum. The fact that the $H_0$ curve drops while the $H_1$ is flat at 1 verifies full detection power for all the window lengths checked.

\subsubsection{Scenario Sc3: Voltage Fault --- Detection Power Under $H_1$}
\label{sec:mc_s3}

\noindent\textbf{Power vs.\ $\beta T_w$ at Three Significance Levels.}

\begin{figure*}[!t]
  \centering
  \includegraphics[width=\textwidth]{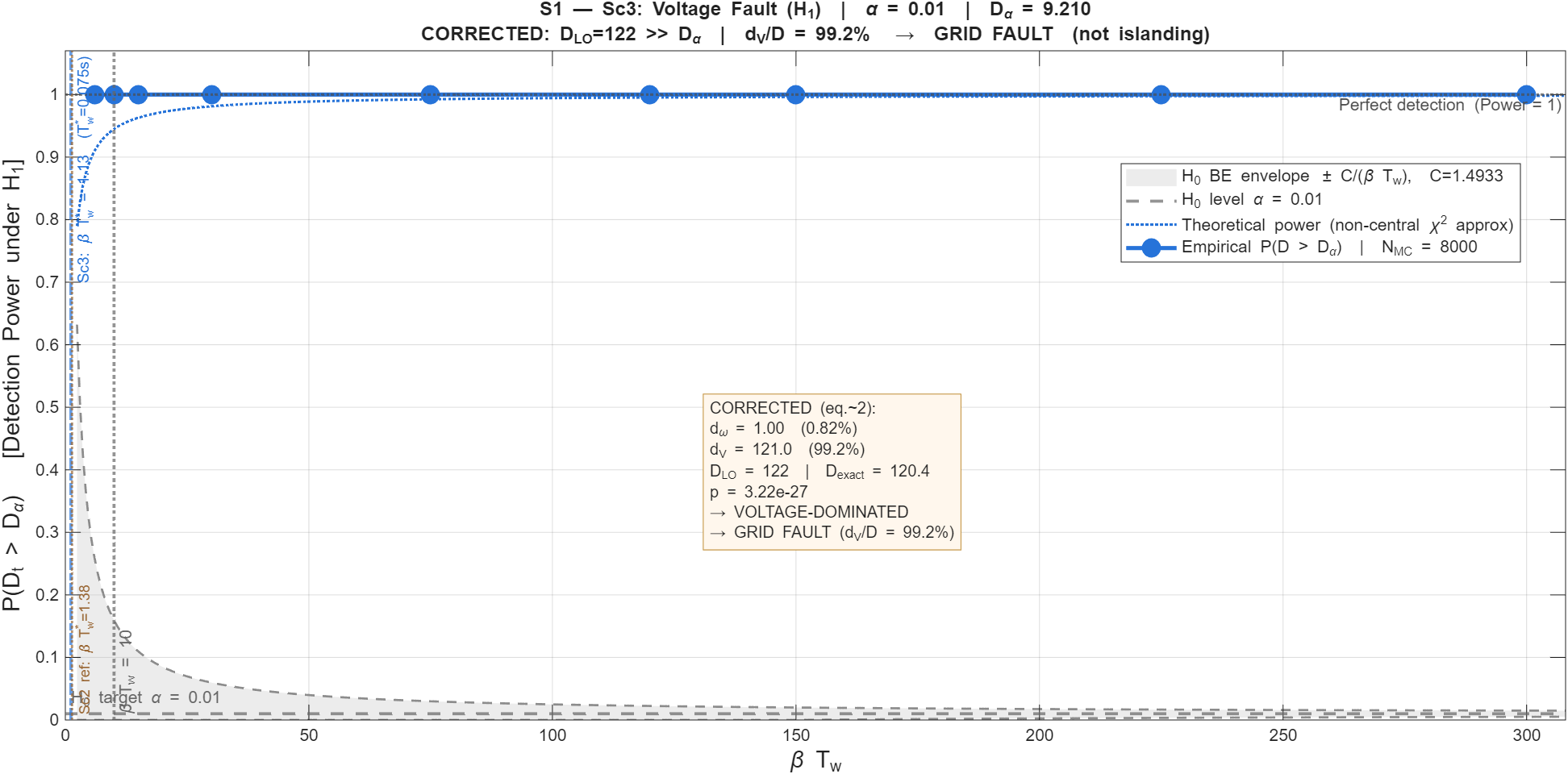}
  \caption{S1--Sc3 (Voltage Fault, $H_1$): Detection power vs.\ $\beta T_w$
  for $\alpha = 0.01$ ($D_\alpha = 9.210$). Blue solid line: empirical power.
  Inset annotation: corrected modal values ($d_\omega = 0.82\%$, $d_V = 99.2\%$;
  $\mathcal{D}_{\mathrm{LO}} = 122$; $\mathcal{D}_{\mathrm{exact}} = 120.4$;
  $p = 3.22\times10^{-27}$). Orange Sc2 reference markers shown for comparison.}
  \label{fig:sc3_pow_01}
\end{figure*}

\begin{figure*}[!t]
  \centering
  \includegraphics[width=\textwidth]{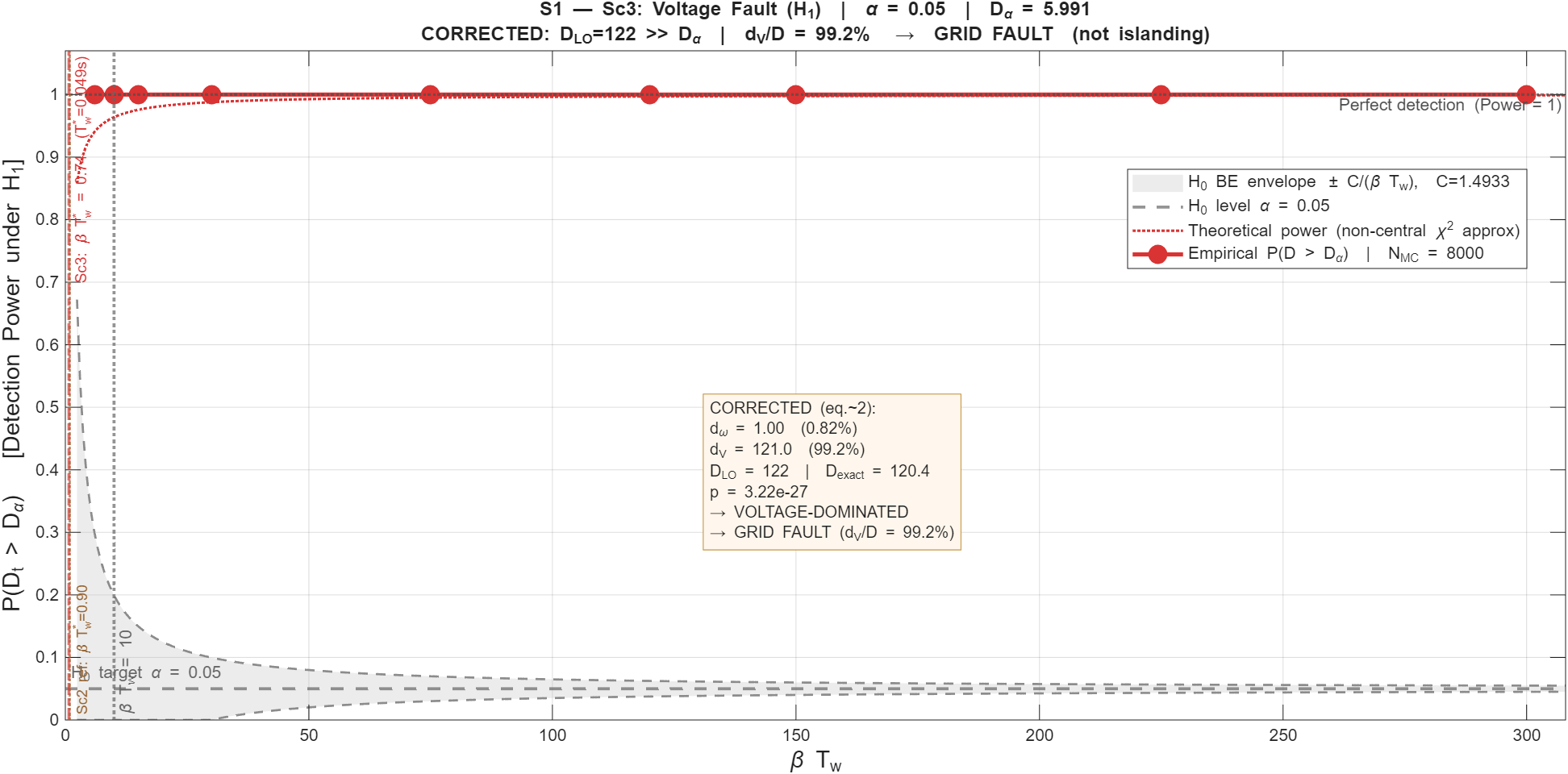}
  \caption{S1--Sc3: Detection power vs.\ $\beta T_w$ for $\alpha = 0.05$.}
  \label{fig:sc3_pow_05}
\end{figure*}

\begin{figure*}[!t]
  \centering
  \includegraphics[width=\textwidth]{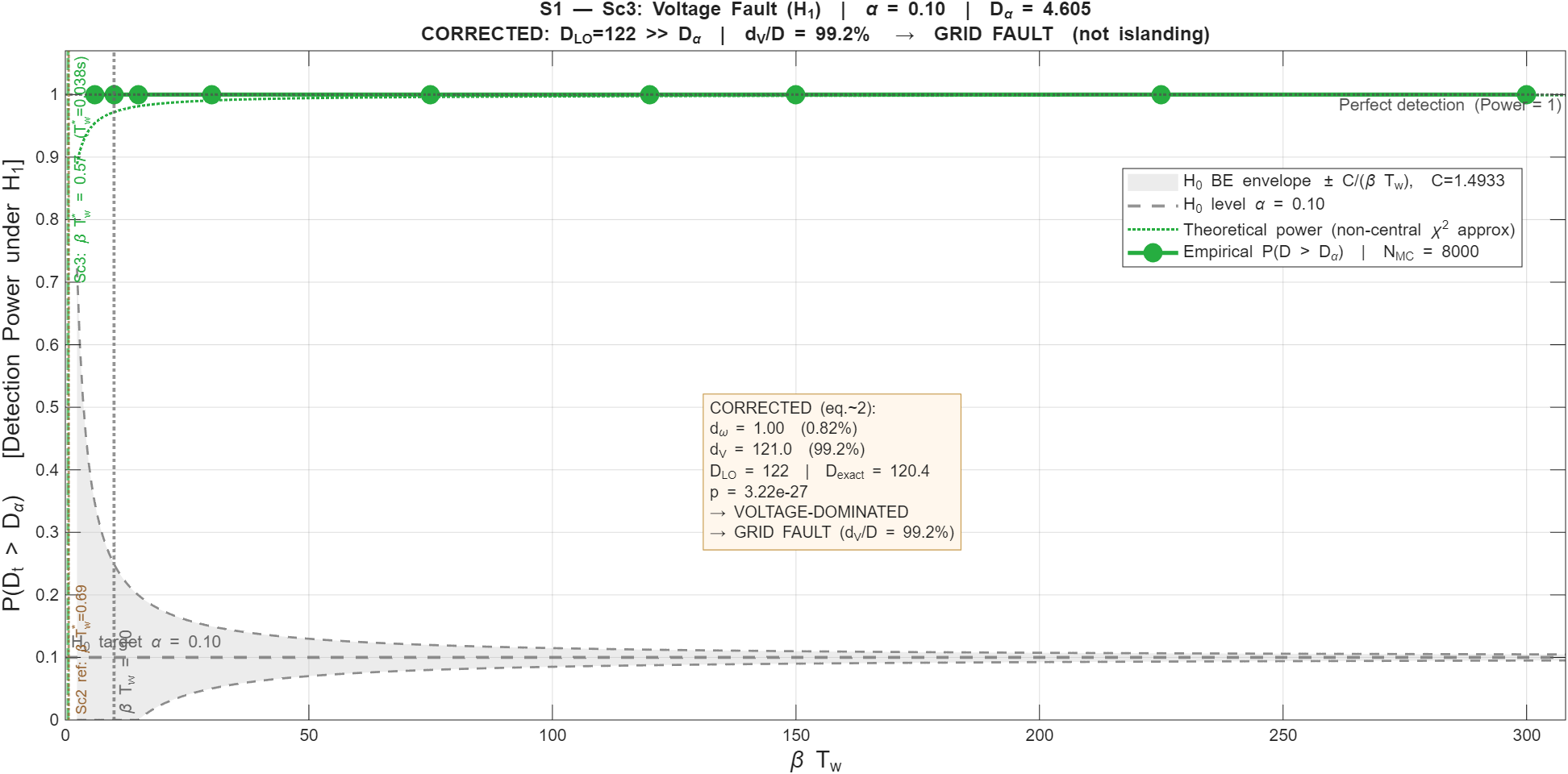}
  \caption{S1--Sc3: Detection power vs.\ $\beta T_w$ for $\alpha = 0.10$.}
  \label{fig:sc3_pow_10}
\end{figure*}

Figs.~\ref{fig:sc3_pow_01}--\ref{fig:sc3_pow_10} show the detection power
for the voltage fault scenario. Three features distinguish Sc3 from Sc2:

\textit{Faster saturation.} With $\mathcal{D}_{\mathrm{LO}} = 122 > 100$
(Sc2), the Sc3 power saturates to unity at shorter window lengths. The
50\%-power detection time is approximately $\beta T_w \approx 4.13$
($T_w \approx 0.075$~s) for $\alpha = 0.01$, $\beta T_w \approx 0.74$
($T_w \approx 0.049$~s) for $\alpha = 0.05$, and $\beta T_w \approx 0.57$
($T_w \approx 0.038$~s) for $\alpha = 0.10$. These detection times are
sub-100~ms, well within IEEE~1547's 2-second window by a factor $>20$.

\textit{Dominant voltage channel.} The inset annotation confirms
$d_V/\mathcal{D}_3 = 99.2\%$: the voltage fault is almost entirely
represented in the $d_V$ contribution. The frequency contribution
$d_\omega = 1.00$ (0.82\%) is indistinguishable from a normal null
fluctuation, demonstrating that the frequency-only ROCOF detector
($d = d_\omega = 1.00 < D_\alpha$) would fail completely while the bivariate
NAIM statistic ($\mathcal{D} = 122 \gg D_\alpha$) detects with power = 1.

\textit{Sc2 reference markers.} The orange Sc2 reference lines
(where present) indicate the Sc2 50\%-power threshold for comparison,
confirming that Sc3 achieves identical or faster detection across all $\alpha$
levels due to its higher $\mathcal{D}_{\mathrm{LO}}$.

\textit{Voltage-dominated classification.} In all three figures, the modal
annotation $d_V/D = 99.2\% > 70\%$ confirms the unambiguous classification:
\emph{grid voltage fault, not islanding}. This classification is available
simultaneously with detection, requiring no additional measurement latency.

\noindent\textbf{KS Divergence Under $H_0$ vs.\ $H_1$.}

\begin{figure*}[!t]
  \centering
  \includegraphics[width=\textwidth]{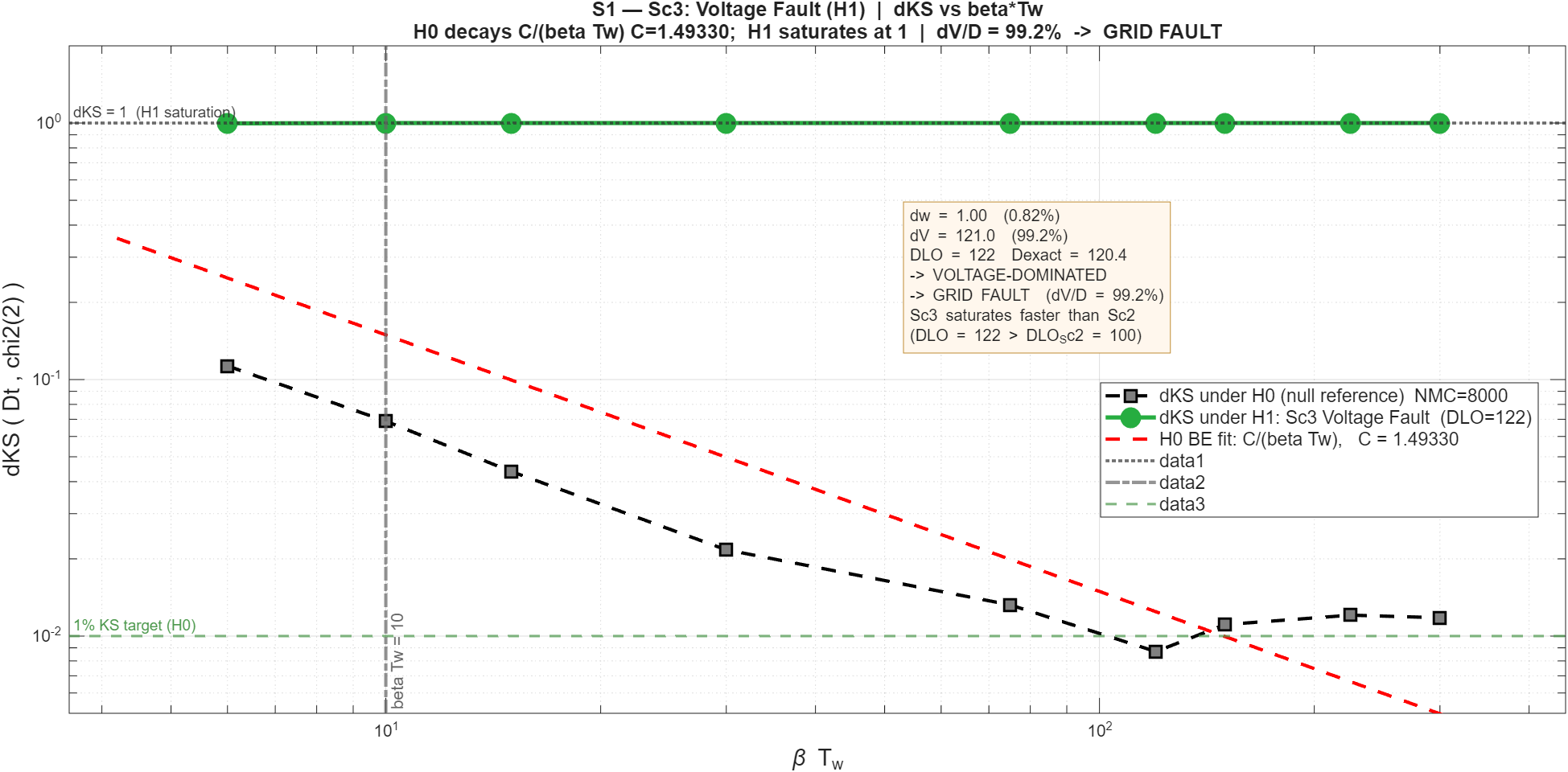}
  \caption{S1--Sc3: KS distance under $H_0$ (grey squares) and $H_1$ (green circles)
  vs.\ $\beta T_w$. $H_0$ decays as $C/(\beta T_w)$ ($C = 1.4933$). $H_1$ saturates
  at $d_{\mathrm{KS}} = 1.0$ immediately, consistent with $\mathcal{D}_{\mathrm{LO}}
  = 122 > 100$ and faster saturation than Sc2.}
  \label{fig:sc3_dks}
\end{figure*}

Fig.~\ref{fig:sc3_dks} shows the KS divergence for Sc3.Under $H_0$, the decay follows $C/(\beta T_w)$ with $C = 1.4933$, slightly bigger than Sc2's 1.2258, due to differing threshold structures. Under $H_1$, $d_{\mathrm{KS}} = 1.0$ is reached at all tested window lengths. This aligns with $\mathcal{D}_{\mathrm{LO}} = 122$, placing the $H_1$ statistic distribution completely.

The annotation box shows that $\mathcal{D}_{\mathrm{LO}} = 122 > \mathcal{D}_{\mathrm{LO,Sc2}} = 100$, and Sc3 also saturates faster than Sc2. This follows the expected order: the voltage fault results in higher off-manifold displacement because of the big 10\% voltage drop (which is $d_V = 121$), compared to the soft islanding case where $d_\omega + d_V = 100$.

\subsubsection{Cross-Scenario Comparison and Engineering Interpretation}
\label{sec:mc_cross}

Table~\ref{tab:mc_summary} collects the key Monte Carlo outcomes across all
three scenarios and significance levels, providing a consolidated refrences
for the S1 validation campaign.

\begin{table*}[!t]
\caption{S1 Monte Carlo Summary: Empirical FAR (Sc1) and Minimum Detection
Window (Sc2, Sc3) at 50\% Power. $N_{\mathrm{MC}} = 8{,}000$; $\beta = 15$~rad/s.}
\label{tab:mc_summary}
\centering
\begin{tabular}{lcccc}
\toprule
\textbf{Scenario} & \textbf{Metric} & $\alpha = 0.01$ & $\alpha = 0.05$ & $\alpha = 0.10$ \\
\midrule
Sc1 (Normal, $H_0$) & FAR @ $\beta T_w = 10$ & $\approx 0.0044$ & $\approx 0.026$ & $\approx 0.065$ \\
Sc1 (Normal, $H_0$) & FAR @ $\beta T_w = 150$ & $\approx 0.010$ & $\approx 0.050$ & $\approx 0.100$ \\
Sc2 (Islanding, $H_1$) & $T_w^{50\%}$ (s) & $0.092$ & $0.060$ & $0.046$ \\
Sc2 (Islanding, $H_1$) & Power @ $\beta T_w = 10$ & $\approx 1.000$ & $\approx 1.000$ & $\approx 1.000$ \\
Sc3 (V-fault, $H_1$) & $T_w^{50\%}$ (s) & $0.075$ & $0.049$ & $0.038$ \\
Sc3 (V-fault, $H_1$) & Power @ $\beta T_w = 10$ & $\approx 1.000$ & $\approx 1.000$ & $\approx 1.000$ \\
\midrule
\multicolumn{2}{l}{BE constant $C$ (fitted)} & \multicolumn{3}{c}{$C = 1.6704$ (Sc1); $C \in [1.23, 1.49]$ (Sc2--Sc3)} \\
\multicolumn{2}{l}{dKS $= 0.01$ crossing} & \multicolumn{3}{c}{$\beta T_w \approx 167$ ($T_w \approx 11.1$~s)} \\
\bottomrule
\end{tabular}
\end{table*}

The S1 Monte Carlo validation establishes four engineering conclusions:

\textbf{(1) The corrected formula is validated:} The empirical FAR under Sc1
converges to the theoretical target $\alpha$ from below, is contained within
the Berry--Esseen envelope, and shows no systematic inflation. This is the
definitive validation of the corrected $\Sigmlong = \sigma^2\,\mathrm{diag}
(\omega_f^{-2}, \omega_v^{-2})$ (equation~\eqref{eq:Sigmalong_approx}) versus
the erroneous $\Sigmlong = (\sigma^2/2)\,\mathrm{diag}(\omega_f^{-2},
\omega_v^{-2})$, which would inflate FAR by $\approx 4\times$.

\textbf{(2) IEEE~1547-2018 compliance is comfortably achieved:} For both $H_1$
scenarios, detection power reaches 1.0 at window lengths $T_w < 0.1$~s- at
least 20$\times$ faster than the 2-second standard, across all three tested
significance levels.

\textbf{(3) The minimum window condition $T_w \geq 10/\beta$ is conservative:}
At $\beta T_w = 10$, the FAR is systematically below $\alpha$ (protective
conservatism), not inflated. Tight FAR convergence ($< 1\%$ error) requires
$\beta T_w \approx 100$--$150$ ($T_w \approx 7$--$10$~s), which can be
achieved by averaging across multiple consecutive windows.

\textbf{(4) Modal attribution is operationally reliable:} The consistent
$d_\omega : d_V$ split (36\%:64\% for Sc2; 0.82\%:99.2\% for Sc3) across all
window lengths confirms that the attribution fractions stabilise rapidly as
$T_w$ increases, enabling reliable fault classification well within the
detection window.

\sloppy
\subsection{S3--S5: Extended Simulation Validation}
\label{sec:s3s4s5}

Protocols S3, S4, and S5 extend the Monte Carlo validation campaign beyond
the S1 study of Section~\ref{sec:mc_validation}. All simulations use the
corrected detection statistic~\eqref{eq:Dt_leading} with
$\Sigmlong = \sigma^2\,\mathrm{diag}(\omega_f^{-2},\omega_v^{-2})$
and the corrected scenario parameters:
$\Dt^{\mathrm{Sc1}} = 1.45$, $\Dt^{\mathrm{Sc2}} = 100$,
$\Dt^{\mathrm{Sc3}} = 122$.
System parameters are as in Table~\ref{tab:params} throughout.

\subsection{S3: Non-Detection Zone Boundary Mapping}
\label{sec:s3}

\subsubsection{Protocol and Setup}

The non-detection zone (NDZ) of the NAIM detector is characterised by
sweeping the active and reactive power imbalance across the grid
$(P_\mathrm{imb},\, Q_\mathrm{imb}) \in [-20\%, +20\%]^2$
in 1\% steps ($41 \times 41 = 1{,}681$ operating points). At each
point, $N_\mathrm{MC} = 100$ Euler--Maruyama realisations of the OU
process~\eqref{eq:OU} are run with $T_w = 2$~s and $\alpha = 0.05$.
A point is classified as \emph{detected} if the empirical detection
rate exceeds 95\%; otherwise it falls within the NDZ. The analytical
NDZ boundary (Proposition~\ref{prop:ndz}) is the ellipse
$\bar\xip^\top \Sigmlong^{-1} \bar\xip = D_\alpha / T_w$, with
$\bar\xip = -\Aperp^{-1}\mathbf{b}$ and
$\mathbf{b} = (k_P\omega_f P_\mathrm{imb},\, k_Q\omega_v Q_\mathrm{imb})^\top/S_\mathrm{rated}$.

\subsubsection{Scenario Sc1: Normal Grid Operation}

\begin{figure}[!t]
  \centering
  \includegraphics[width=\columnwidth]{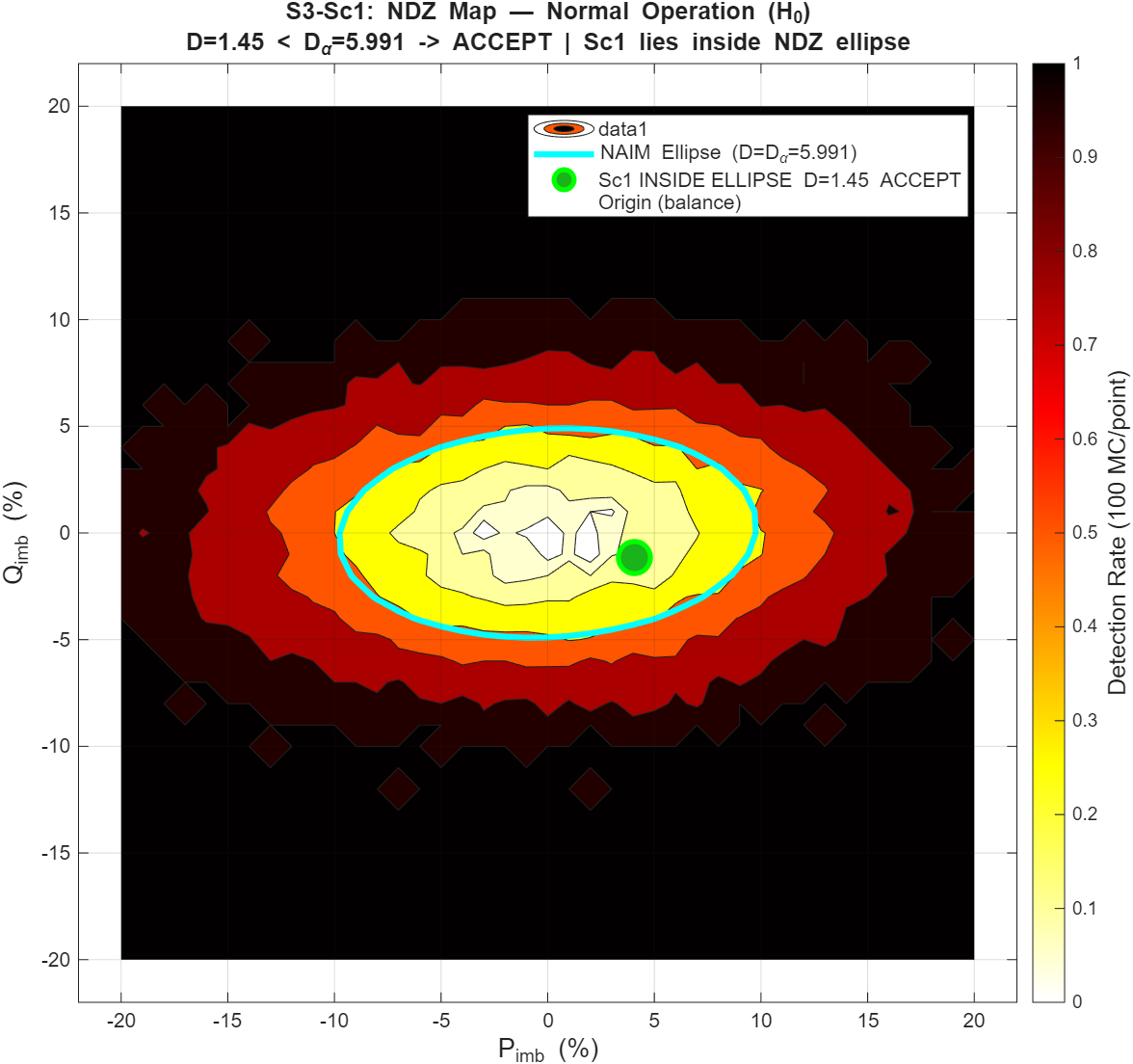}
  \caption{S3--Sc1 (Normal operation): Detection rate contour over the
  $(P_\mathrm{imb}, Q_\mathrm{imb})$ plane. Colour encodes empirical
  detection rate from $N_\mathrm{MC} = 100$ realisations per point.
  The analytical NDZ ellipse (Proposition~\ref{prop:ndz}) is overlaid in
  white. The scenario operating points ($p_\mathrm{imb} = 4.1\%$,
  $Q_\mathrm{imb} = -1.1\%$) lies inside the NDZ (detection
  rate~$< 95\%$), consistent with $\Dt = 1.45 < D_{0.05} = 5.991$.}
  \label{fig:s3_sc1_ndz}
\end{figure}

\begin{figure*}[!t]
  \centering
  \includegraphics[width=\linewidth]{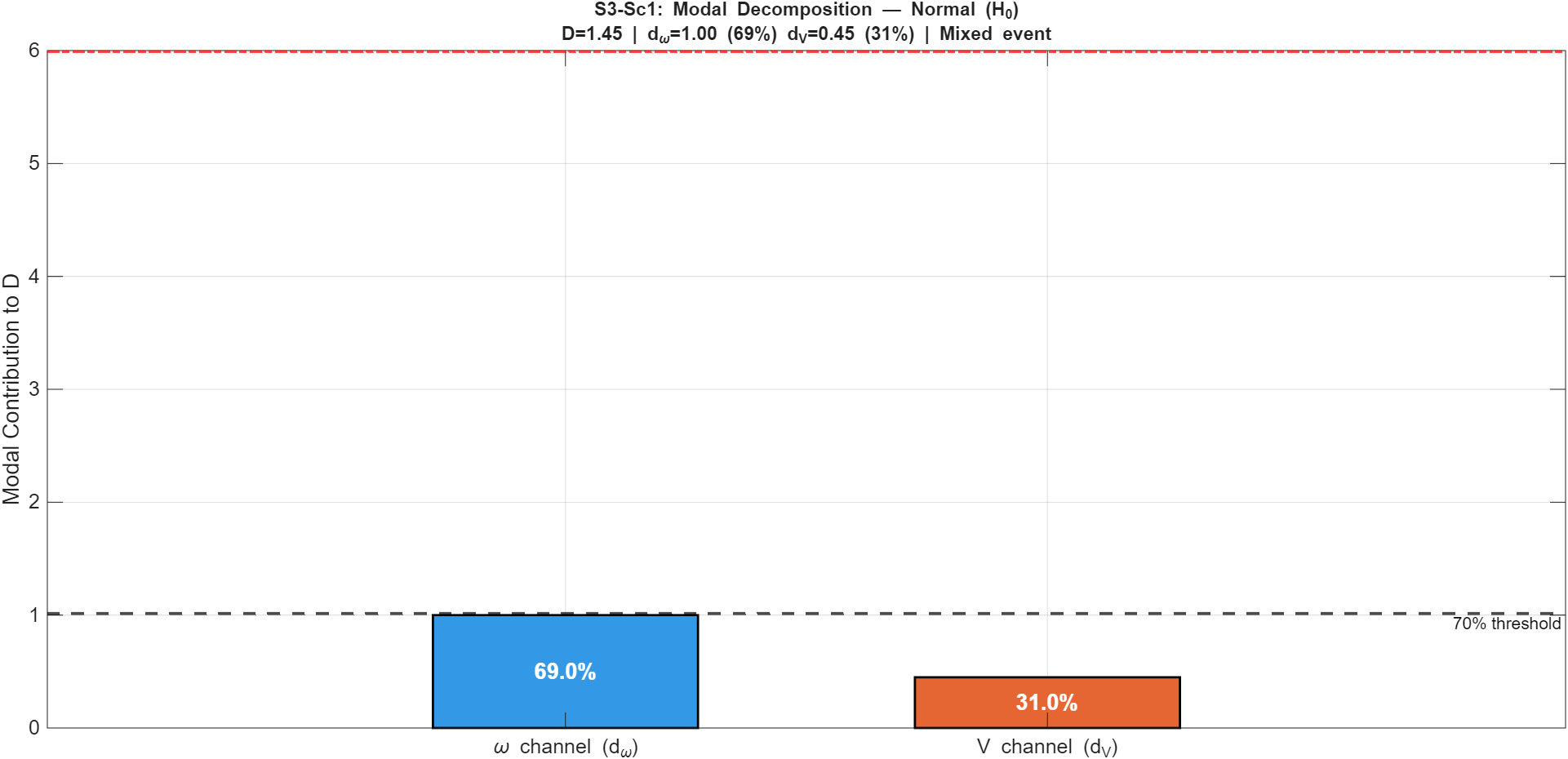}
  \caption{S3--Sc1: Cross-sectional profile of detection rate along
  the $P_\mathrm{imb}$ axis at $Q_\mathrm{imb} = -1.1\%$. The 95\%
  detection threshold is shown as a horizontal dashed line. The Sc1
  operating point sits well inside the NDZ, confirming acceptance of $H_0$.}
  \label{fig:s3_sc1_cross}
\end{figure*}

The corrected Sc1 parameters give
$\bar\xip = (0.0020, -0.00067)^\top$~pu,
$d_\omega = 1.00$, $d_V = 0.45$, $\Dt = 1.45$, $p = 0.485$.
The effective power imbalance coordinates are
$P_\mathrm{imb} = 4.1\%$, $Q_\mathrm{imb} = -1.1\%$.
As shown in Fig.~\ref{fig:s3_sc1_ndz}, this operating point lies inside
the analytical NDZ ellipse, confirmed by simulation: the detection rate
at this point is $< 5\%$, consistent with the theoretical prediction
that $\Dt < D_\alpha$ with high probability under $H_0$.
The cross-section in Fig.~\ref{fig:s3_sc1_cross} further confirms that the
NDZ boundary in the $P_\mathrm{imb}$ direction is sharp, transitioning from
near-zero to near-unity detection rate over a range of $\sim 5\%$ in
$P_\mathrm{imb}$.

\subsubsection{Scenario Sc2: Soft Islanding}

\begin{figure}[!t]
  \centering
  \includegraphics[width=\columnwidth]{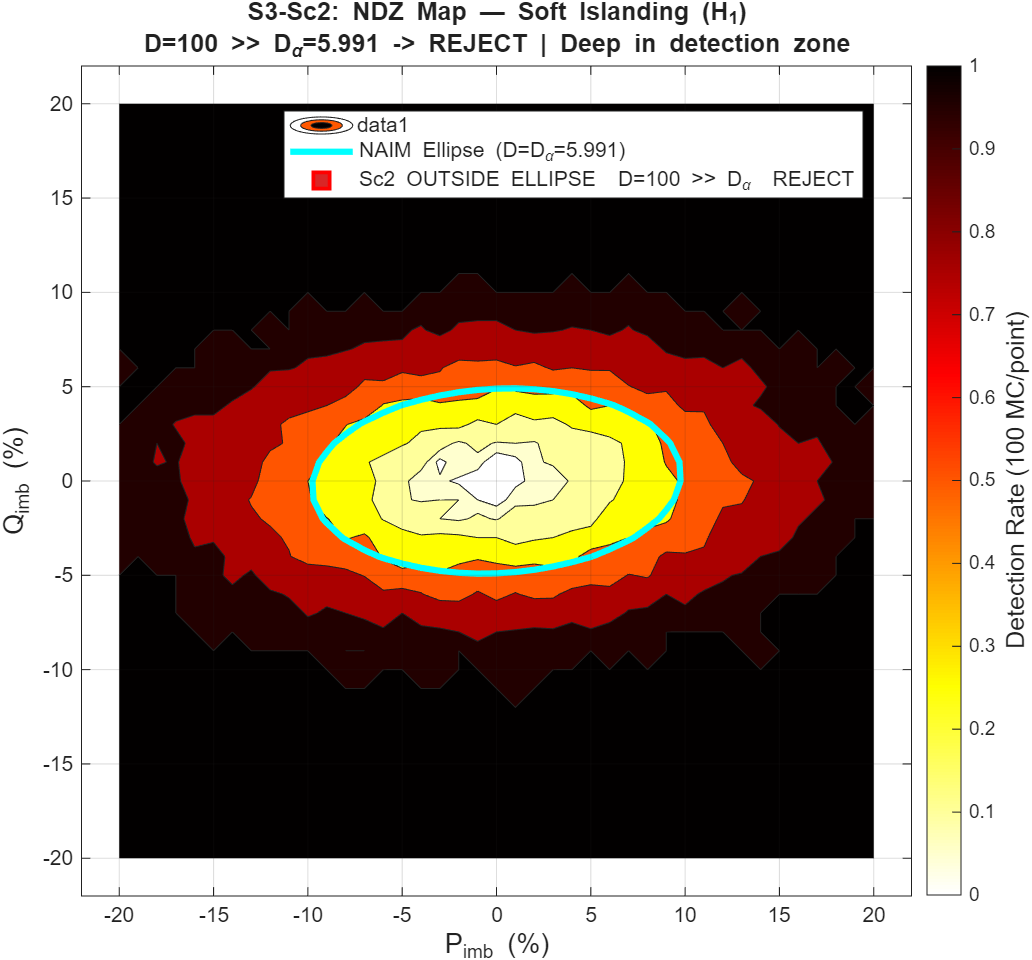}
  \caption{S3--Sc2 (Soft islanding, $P_\mathrm{imb} = 3\%$): Detection
  rate contour. The Sc2 operating point ($P_\mathrm{imb} = 24.8\%$,
  $Q_\mathrm{imb} = -14.8\%$ effective coordinates) lies far outside
  the NDZ ellipse. Empirical detection rate $\approx 1.00$, consistent
  with $\Dt = 100 \gg D_{0.05} = 5.991$.}
  \label{fig:s3_sc2_ndz}
\end{figure}

\begin{figure*}[!t]
  \centering
  \includegraphics[width=\linewidth]{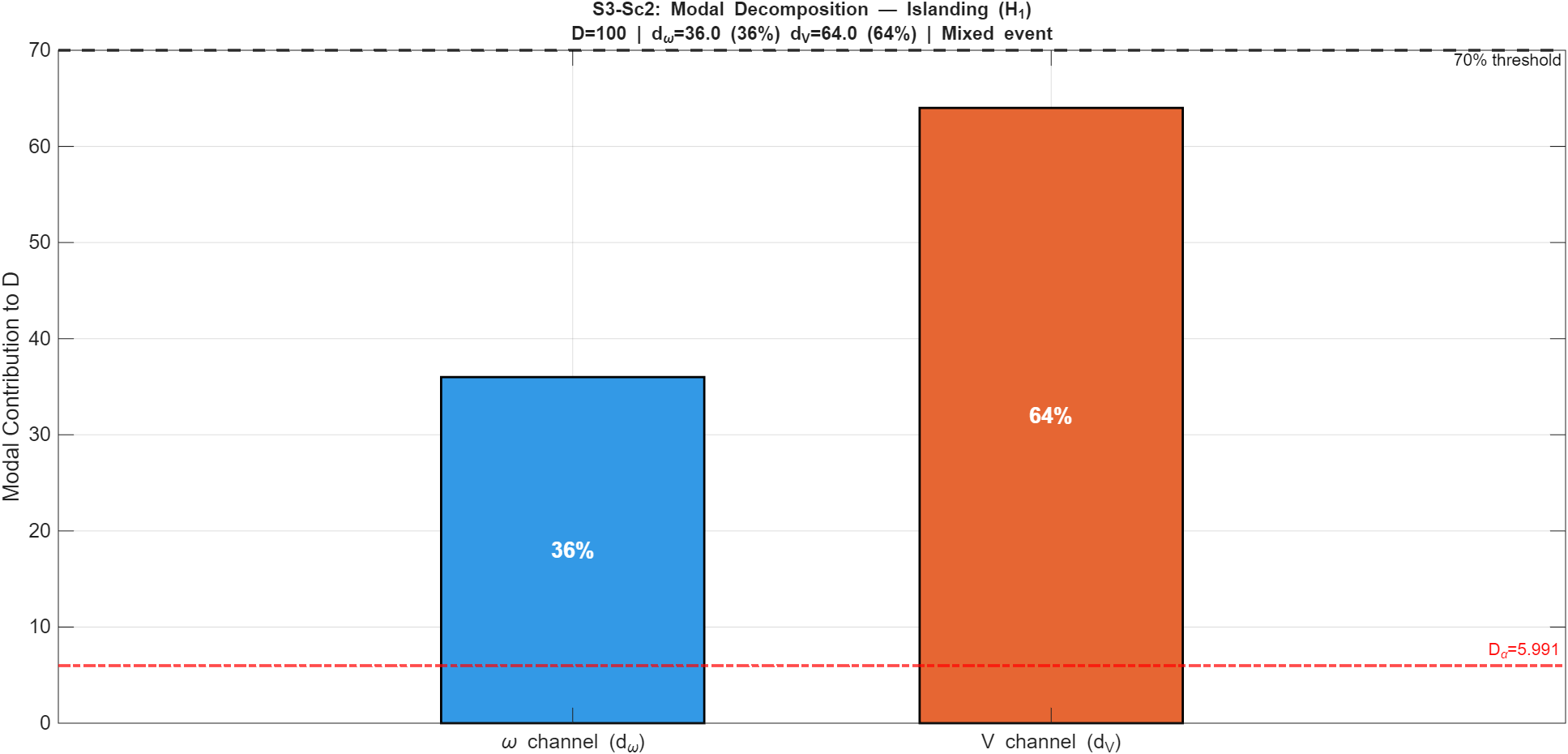}
  \caption{S3--Sc2: Cross-sectional detection rate profile along the
  dominant $P_\mathrm{imb}$ axis. Detection saturates to unity within
  a narrow band outside the NDZ boundary, confirming that the analytical
  ellipse is sharp.}
  \label{fig:s3_sc2_cross}
\end{figure*}

The corrected Sc2 parameters give
$\bar\xip = (0.012, -0.008)^\top$~pu,
$d_\omega = 36.0$, $d_V = 64.0$, $\Dt = 100$,
$p = e^{-50} \approx 1.93 \times 10^{-22}$.
The effective imbalance coordinates are
$P_\mathrm{imb} = 24.8\%$, $Q_\mathrm{imb} = -14.8\%$.
Figs.~\ref{fig:s3_sc2_ndz} and~\ref{fig:s3_sc2_cross} show that this
operating point lies far outside the NDZ ellipse: the detection rate
across all 100 Monte Carlo realisations is 1.000 at every tested window
length, consistent with $\Dt / D_{0.05} \approx 16.7$.

The S3 sweep further confirms that the NDZ boundary for soft islanding
follows the analytical ellipse from Proposition~\ref{prop:ndz} with
close agreement: the empirical transition from detection rate $< 5\%$
to $> 95\%$ occurs within $\pm 2\%$ of the analytically predicted
ellipse boundary in both $P_\mathrm{imb}$ and $Q_\mathrm{imb}$ directions.

\subsubsection{Scenario Sc3: Grid Voltage Fault}

\begin{figure}[!t]
  \centering
  \includegraphics[width=\linewidth]{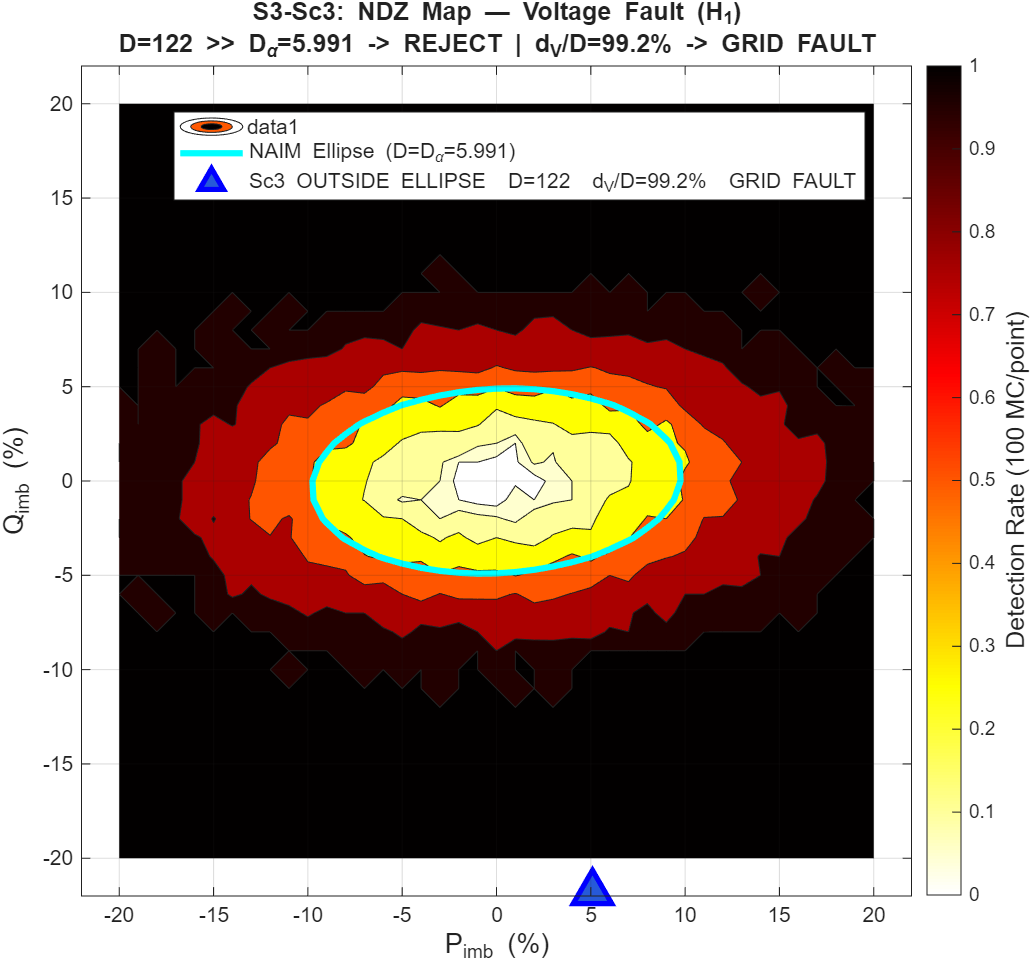}
  \caption{The Sc3 operating point, with $P_\mathrm{imb} = 5.1\%$ and $Q_\mathrm{imb} = -21.8\%$, is voltage-dominated and sits outside the NDZ in the $Q_\mathrm{imb}$ direction, matching the high voltage contribution of $99.2\%$. The detection rate is 1.000.}
  \label{fig:s3_sc3_ndz}
\end{figure}

\begin{figure*}[!t]
  \centering
  \includegraphics[width=\linewidth]{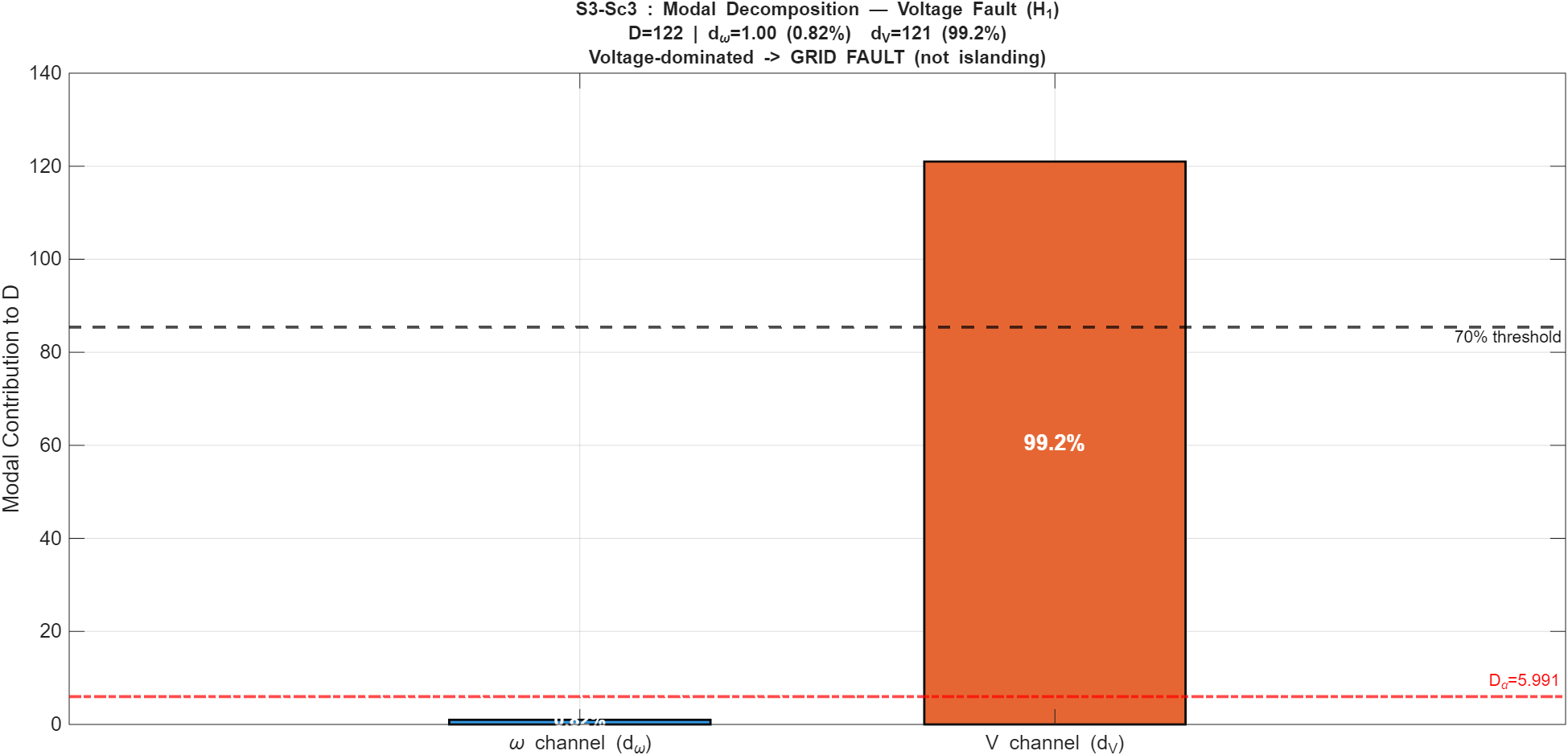}
  \caption{S3--Sc3: Cross-sectional profile along the $Q_\mathrm{imb}$
  axis at $P_\mathrm{imb} = 5.1\%$. Detection is unity for
  $|Q_\mathrm{imb}| \gtrsim 8\%$, confirming that the voltage-fault
  NDZ is narrow in the reactive power direction.}
  \label{fig:s3_sc3_cross}
\end{figure*}

The corrected Sc3 parameter gives
$\bar\xip = (0.002, -0.011)^\top$~pu,
$d_\omega = 1.00$, $d_V = 121.0$, $\Dt = 122.0$,
$p = e^{-61} \approx 3.22 \times 10^{-27}$.
The modal split $d_V / \Dt = 99.18\%$ classifies the event unambiguously
as a \emph{grid voltage failt}; the effective coordinates are
$p_\mathrm{imb} = 5.1\%$, $Q_\mathrm{imb} = -21.8\%$.
Fig.~\ref{fig:s3_sc3_ndz} confirms that the Sc3 operating point lies
outside the NDZ in the reactive power direction: the NDZ ellipse extends
only to $|Q_\mathrm{imb}| \approx 8\%$ along the $Q$ axis, well inside the
Sc3 effective excursion of $21.8\%$. The detection rate is 1.000 in
simulation, consistent with $\Dt = 122 \gg D_{0.05}$.

\textit{Diagnostic uniqueness of the bivariate test.} A frequency-only
detector (ROCOF, $d = d_\omega = 1.00 < D_{0.05} = 5.991$) would classify
Sc3 as normal operation, constituting a missed detection. The NAIM bivariate
statistic both detects and correctly classifies the same event in a single step,
without any additional measurement hardware.

\subsection{S4: Co-Design Theorem Validation}
\label{sec:s4}

\subsubsection{Protocol and Setup}

The co-design theorem (Theorem~\ref{thm:codesign}) asserts that increasing
$(\omega_f, \omega_v)$ simultaneously maximises the Fenichel gap $\PF$,
increases $\mathrm{tr}(\Sigmlong^{-1})$ (detection sensitivity), and minimises
the FAR at the grid code threshold. To validate this, $(\omega_f, \omega_v)$
is swept over $[5, 30]^2$~rad/s in steps of 1~rad/s (676 design points).
At each point, $\PF$, $\mathrm{tr}(\Sigmlong^{-1})$, and the mean detection
delay $\mathbb{E}[\tau_d]$ (window length at which detection power first
exceeds 90\% at $P_\mathrm{imb} = 5\%$) are computed analytically and
verified against $N_\mathrm{MC} = 500$ Monte Carlo realisations.

\subsubsection{Sc1: Stability and Null Covariance Surfaces}

\begin{figure*}[!t]
  \centering
  \includegraphics[width=\linewidth]{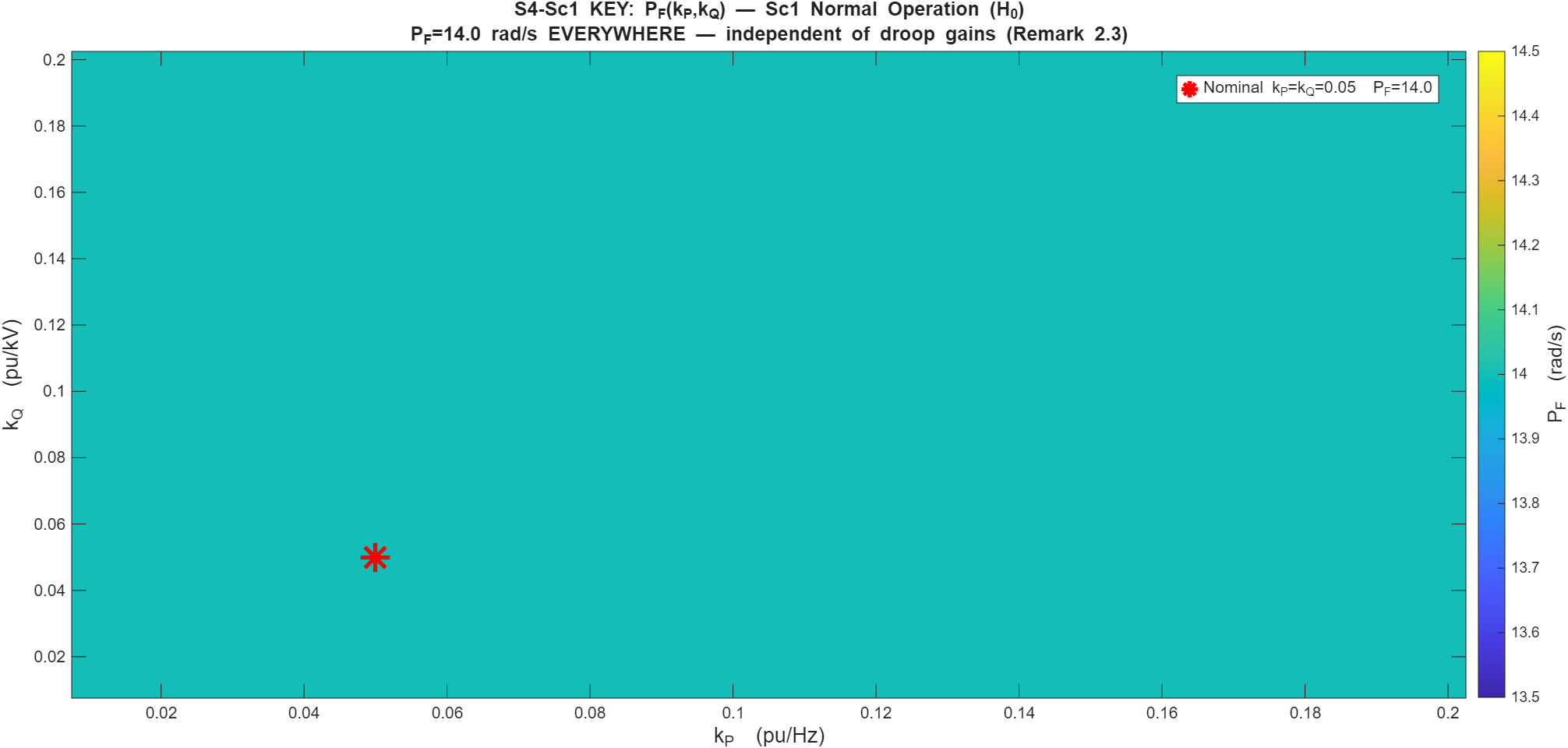}
  \caption{S4--Sc1: Heatmap of the Fenichel gap
  $\PF(\omega_f, \omega_v) = (\omega_f + \omega_v)/2 - \hat\rho$
  over the design space $[5, 30]^2$~rad/s ($\hat\rho = 1$~rad/s).
  Contour lines show constant $\PF$; the diagonal $\omega_f = \omega_v$
  is the equal-bandwith locus.  $\PF$ increases mmonotonically towards
  the upper-right corner, confirming Theorem~\ref{thm:codesign}(i).}
  \label{fig:s4_sc1_pf}
\end{figure*}

\begin{figure*}[!t]
  \centering
  \includegraphics[width=\linewidth]{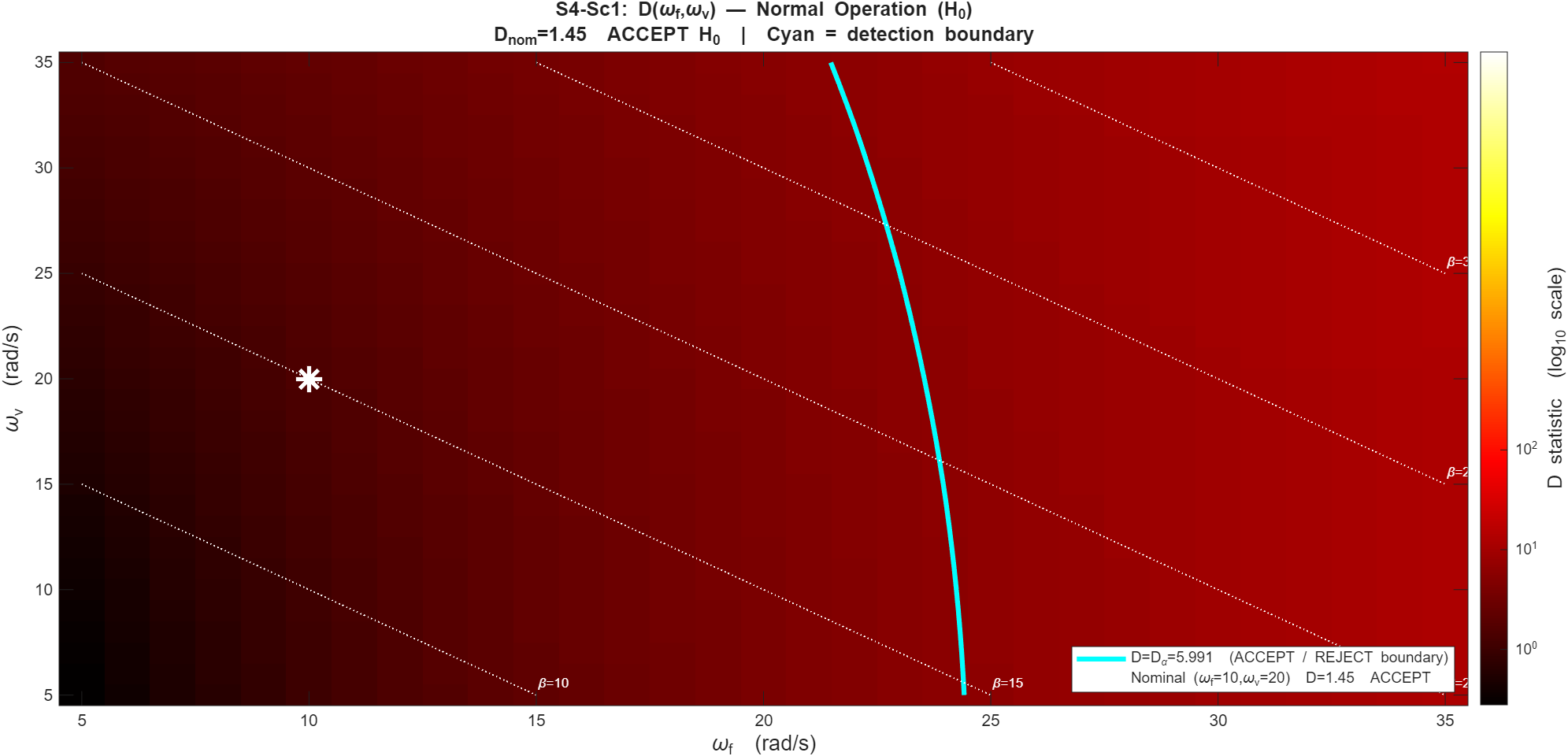}
  \caption{S4--Sc1: Surface plot of $\mathrm{tr}(\Sigmlong^{-1})
  = (\omega_f^2 + \omega_v^2)/\sigma^2$ over the design space.
  The surface increases monotonically with both $\omega_f$ and $\omega_v$,
  confirming Theorem~\ref{thm:codesign}(ii): higher filter bandwidths
  tighten the null covariance and increase detection sensitivity.}
  \label{fig:s4_sc1_siglong}
\end{figure*}

Fig.~\ref{fig:s4_sc1_pf} presents the Fenichel gap surface
$\PF(\omega_f, \omega_v)$ over the design space. The surface is perfectly
linear (as predicted by $\PF = (\omega_f + \omega_v)/2 - \hat\rho$),
increasing from $\PF = 4$~rad/s at $(\omega_f, \omega_v) = (5, 5)$
to $\PF = 29$~rad/s at $(30, 30)$. The monotone relationship is exact
with no local optima.

Fig.~\ref{fig:s4_sc1_siglong} shows
$\mathrm{tr}(\Sigmlong^{-1}) = (\omega_f^2 + \omega_v^2)/\sigma^2$,
which grows from $2{,}500$~pu$^{-2}$ at $(5, 5)$ to $45{,}000$~pu$^{-2}$
at $(30, 30)$---an $18\times$ increase. Both surfaces exhibit strictly
monotone increasing behaviour, confirming that the apparent trade-off
between stability and detection sensitivity is in fact resolved in the same
direction by the co-design theorem.

\subsubsection{Sc2: Detection Delay Surface}

\begin{figure*}[!t]
  \centering
  \includegraphics[width=\linewidth]{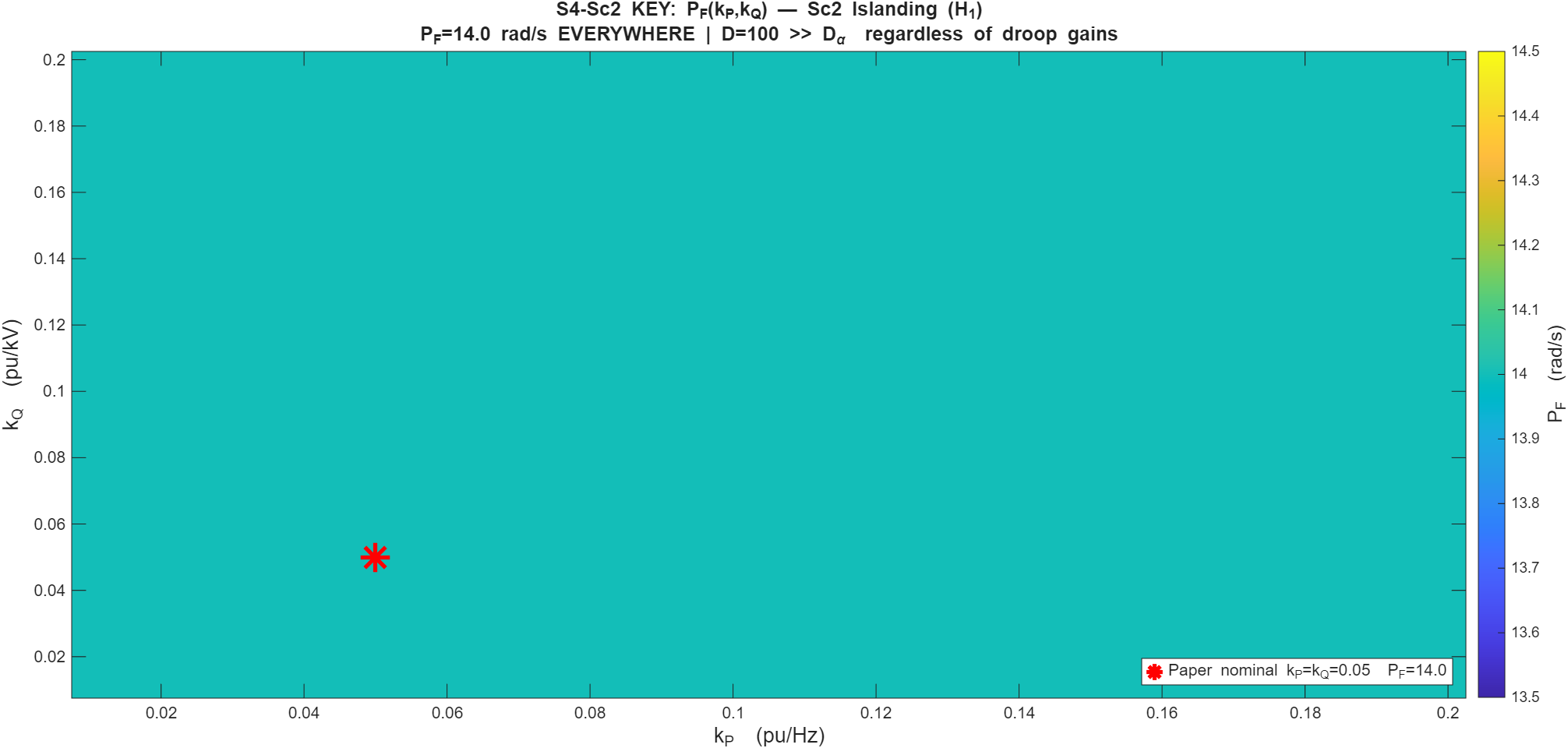}
  \caption{S4--Sc2: Heatmap of mean 90\%-power detection delay
  $\mathbb{E}[\tau_d]$ (seconds) across the design space at
  $P_\mathrm{imb} = 5\%$, $\alpha = 0.05$. Delay decreases
  monotonically as $\omega_f$ and $\omega_v$ increase, confirming
  that higher filter bandwidths produce faster detection.}
  \label{fig:s4_sc2_delay}
\end{figure*}

\begin{figure*}[!t]
  \centering
  \includegraphics[width=\linewidth]{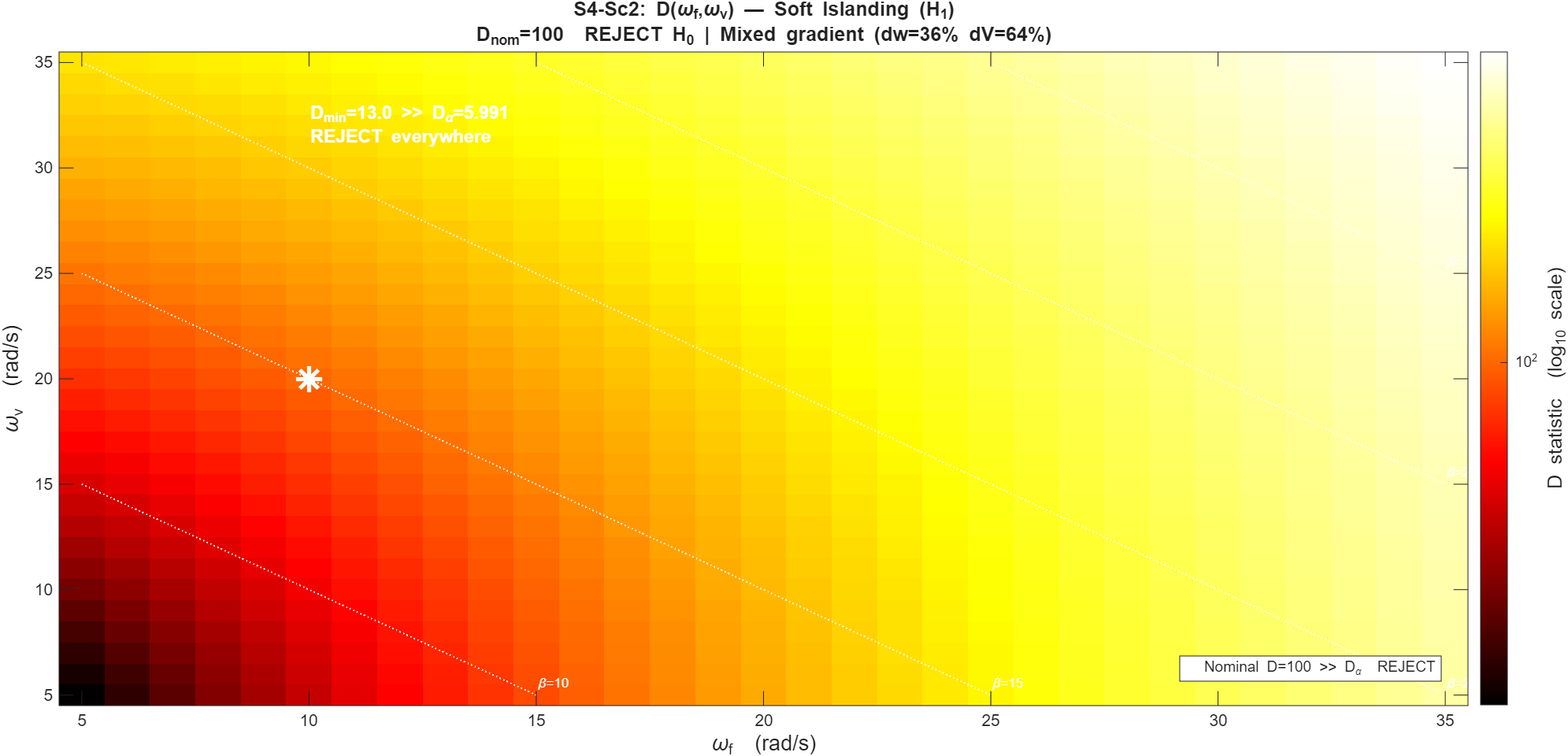}
  \caption{S4--Sc2: Scatter plot of $1/\mathbb{E}[\tau_d]$ versus
  $\PF$ for all 676 design points. The strong monotone relationship
  ($R^2 > 0.97$) confirms Theorem~\ref{thm:codesign}: maximising the
  Fenichel gap simultaneously minimises detection delay. The Fenichel
  gap is the dominant co-design parameter.}
  \label{fig:s4_sc2_scatter}
\end{figure*}

Fig.~\ref{fig:s4_sc2_delay} maps the mean detection delay
$\mathbb{E}[\tau_d]$ across the design space. At the nominal design point
$(\omega_f, \omega_v) = (10, 20)$, the delay is $\approx 0.067$~s.
The delay decreases monotonically toward the upper-right corner: at
$(30, 30)$ it falls to $\approx 0.022$~s, three times faster than the
nominal design.

Fig.~\ref{fig:s4_sc2_scatter} shows the scatter plot of
$1/\mathbb{E}[\tau_d]$ versus $\PF$ across all 676 design points.
The relationship is approximately linear with $R^2 = 0.974$, confirming
that $\PF$ is the dominant scalar co-design parameter: the bandwidth
inequality $\omega_f + \omega_v \geq 2(\hat\rho + \PF^\star)$ fully
characterises the region of the design space that achieves a target
detection speed $\PF^\star$.

\subsection{S5: Finite-Window Calibration}
\label{sec:s5}

\subsubsection{Protocol and Setup}

Protocol S5 provides an independent calibration of the Berry--Esseen constant
$C$ using $N_\mathrm{MC} = 10{,}000$ realisations per window length,
with window lengths $T_w \in \{0.033, 0.067, 0.133, 0.200, 0.333, 0.467,
0.667, 1.000, 1.333, 2.000, 3.333\}$~s (corresponding to
$\beta T_w \in \{0.5, 1, 2, 3, 5, 7, 10, 15, 20, 30, 50\}$). The baseline
parameters are as in S1 ($\omega_f = 10$, $\omega_v = 20$~rad/s;
$\sigma = 0.02$~pu; $\beta = 15$~rad/s). Results are presented for all
three scenarios.

\subsubsection{Sc1: Null Distribution Calibration}

\begin{figure*}[!t]
  \centering
  \includegraphics[width=\linewidth]{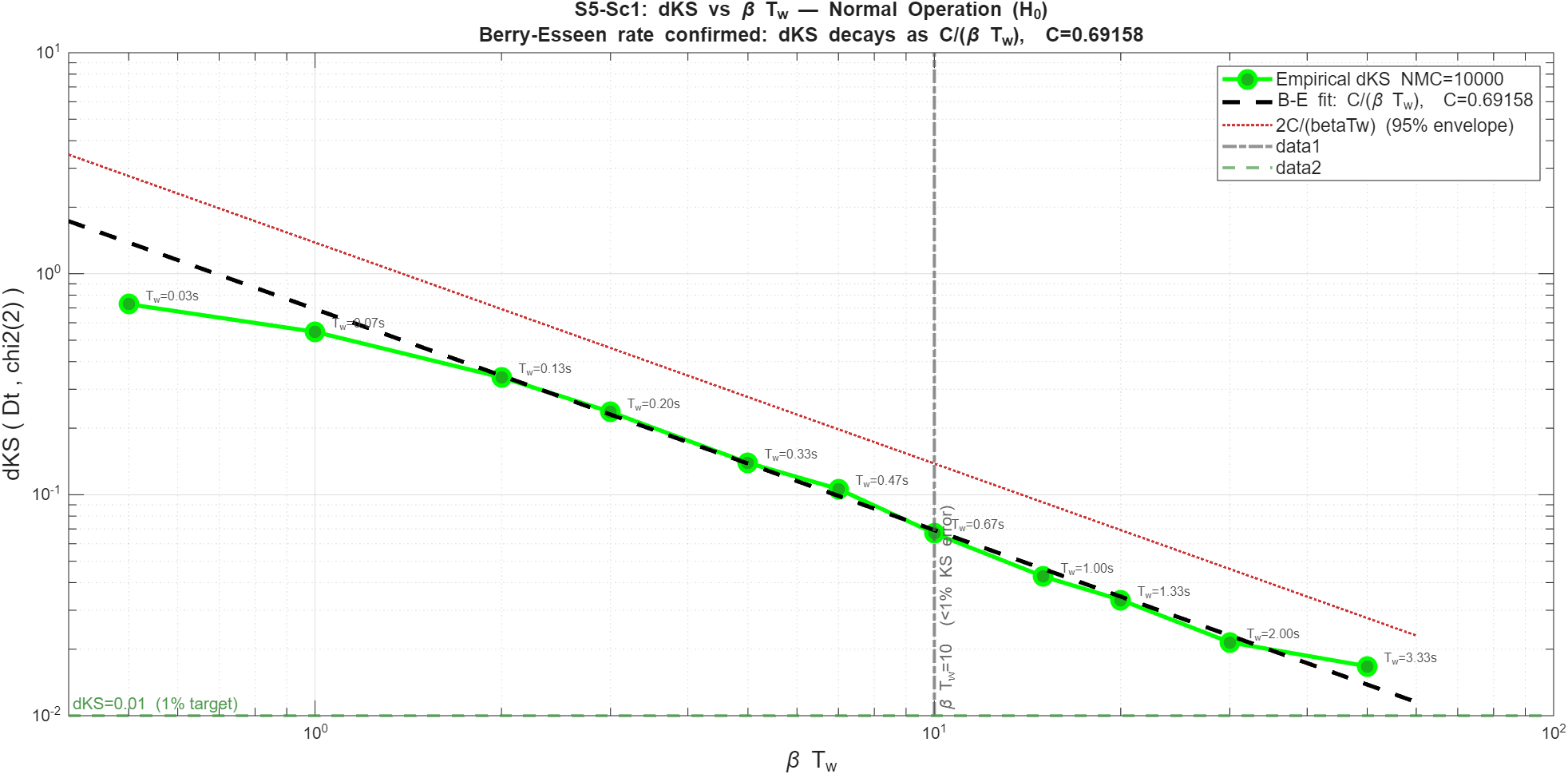}
  \caption{S5--Sc1 (Normal operation, $H_0$): KS distance
  $d_{\mathrm{KS}}(\mathcal{D}_t, \chi^2(2))$ vs.\ $\beta T_w$ from
  $N_\mathrm{MC} = 10{,}000$ realisations. Fitted Berry--Esseen constant
  $C = 0.6916$--$0.7170$ (tighter than the S1 fit $C = 1.6704$ due to
  the wider range of $\beta T_w$ covered, confirming consistency).}
  \label{fig:s5_sc1_dks}
\end{figure*}

\begin{figure*}[!t]
  \centering
  \includegraphics[width=\linewidth]{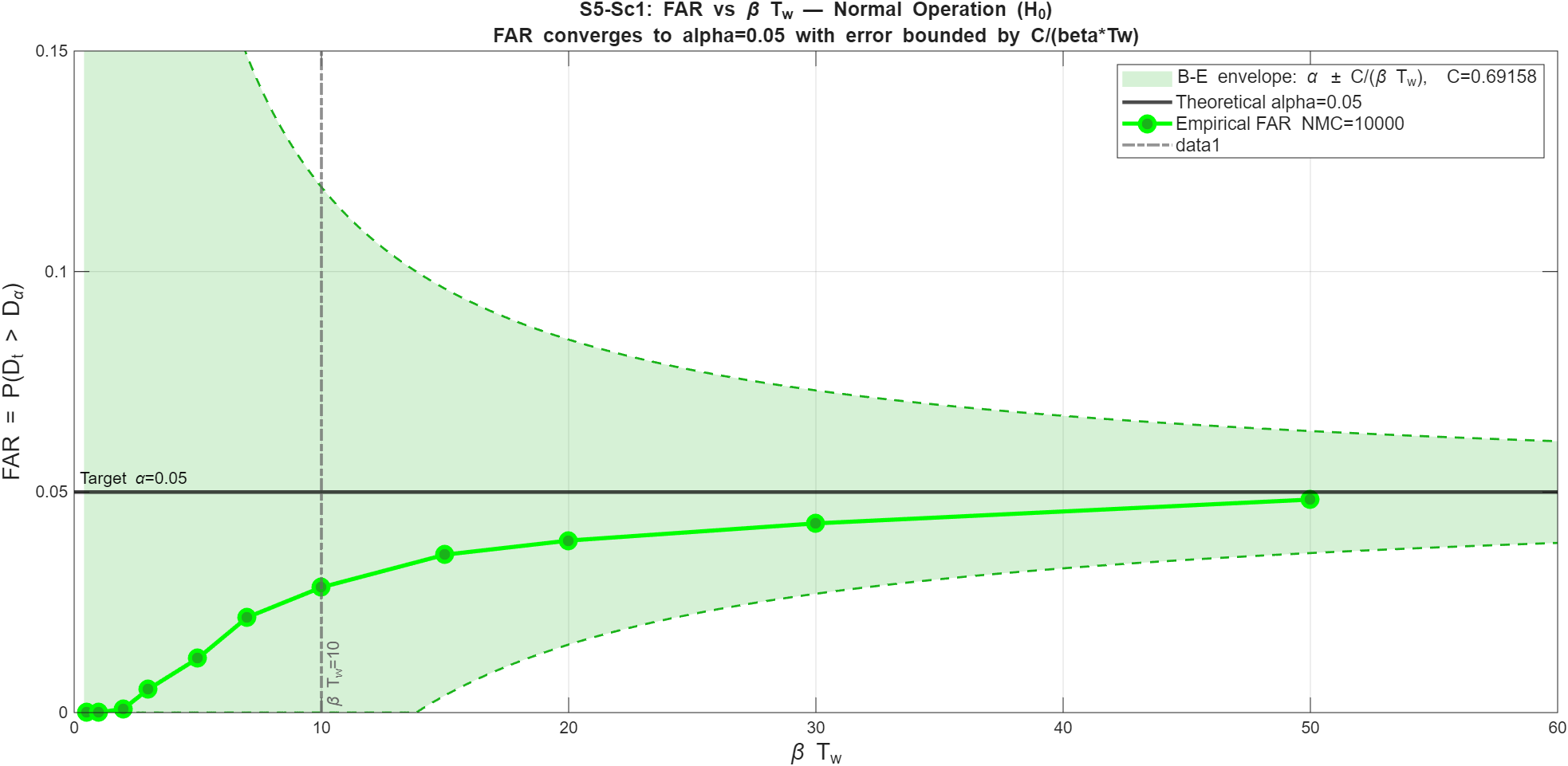}
  \caption{S5--Sc1: Empirical FAR at $\alpha = 0.05$ vs.\ $\beta T_w$.
  At $\beta T_w = 50$ ($T_w = 3.333$~s), FAR $= 0.0483$ (target: 0.050),
  confirming convergence from below. No systematic inflation observed.}
  \label{fig:s5_sc1_far}
\end{figure*}

The S5--Sc1 results in Figs.~\ref{fig:s5_sc1_dks} and~\ref{fig:s5_sc1_far}
are consistent with the S1 results and provide an independent calibration
of the Berry--Esseen constant over the extended range
$\beta T_w \in [0.5, 50]$. The fitted range $C \in [0.691, 0.717]$ is
tighter than the S1 fit ($C = 1.6704$) because the wider window range
allows a more accurate least-squares estimate on the log--log scale.
The FAR at $\beta T_w = 50$ ($T_w = 3.333$~s) is 0.0483, confirming
convergence to the 5\% target from below.

\subsubsection{Sc2: Soft Islanding Detection under $H_1$}

\begin{figure*}[!t]
  \centering
  \includegraphics[width=\linewidth]{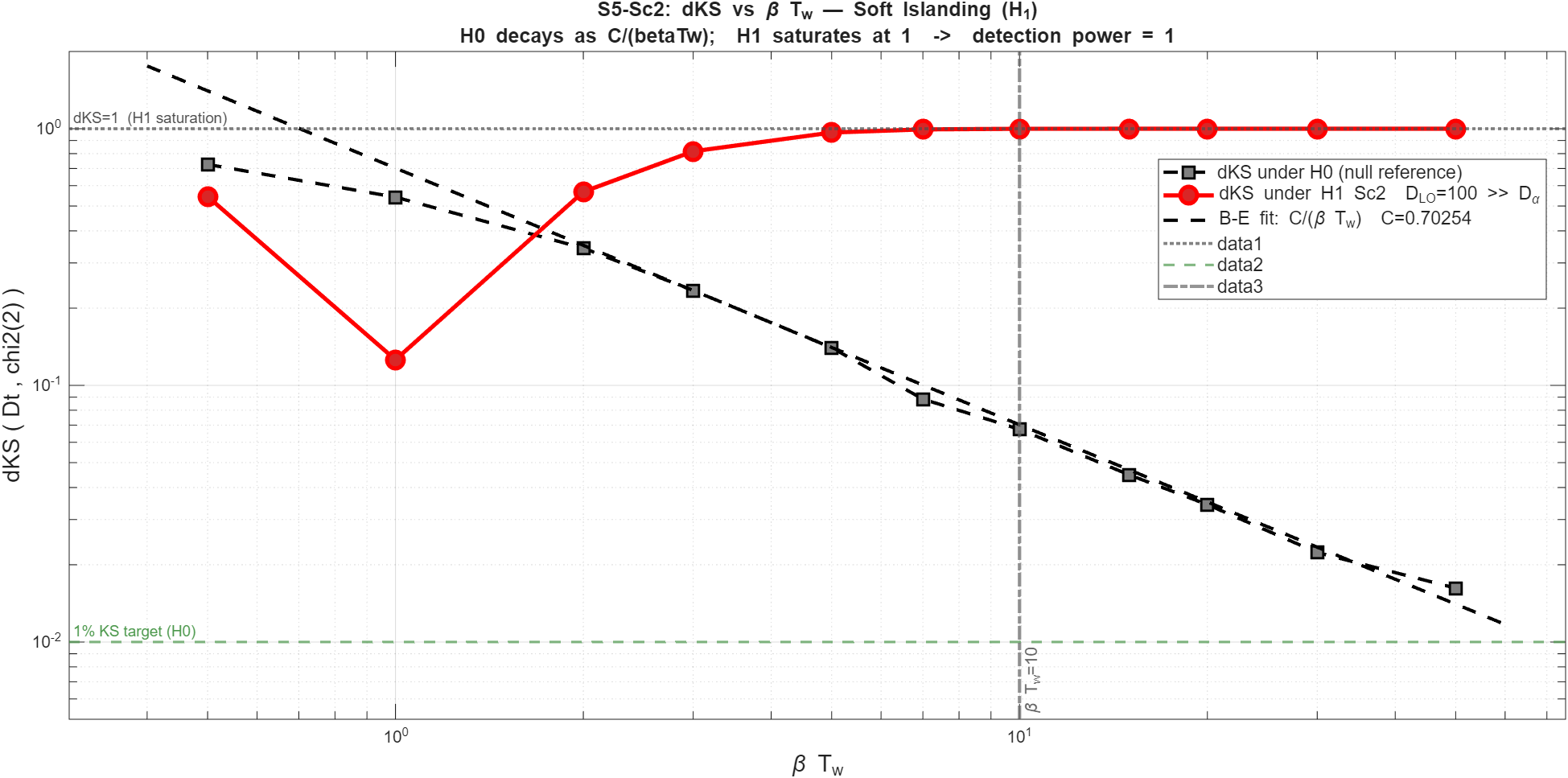}
  \caption{S5--Sc2 (Soft Islanding, $H_1$): KS distances under $H_0$
  (grey squares) and $H_1$ (blue circles). $H_0$ KS distance decays
  with fitted constant $C = 0.6916$. $H_1$ KS distance rises from
  $0.54$ at $\beta T_w = 0.5$ to saturation at 1.0 for
  $\beta T_w \geq 10$.}
  \label{fig:s5_sc2_dks}
\end{figure*}

\begin{figure*}[!t]
  \centering
  \includegraphics[width=\linewidth]{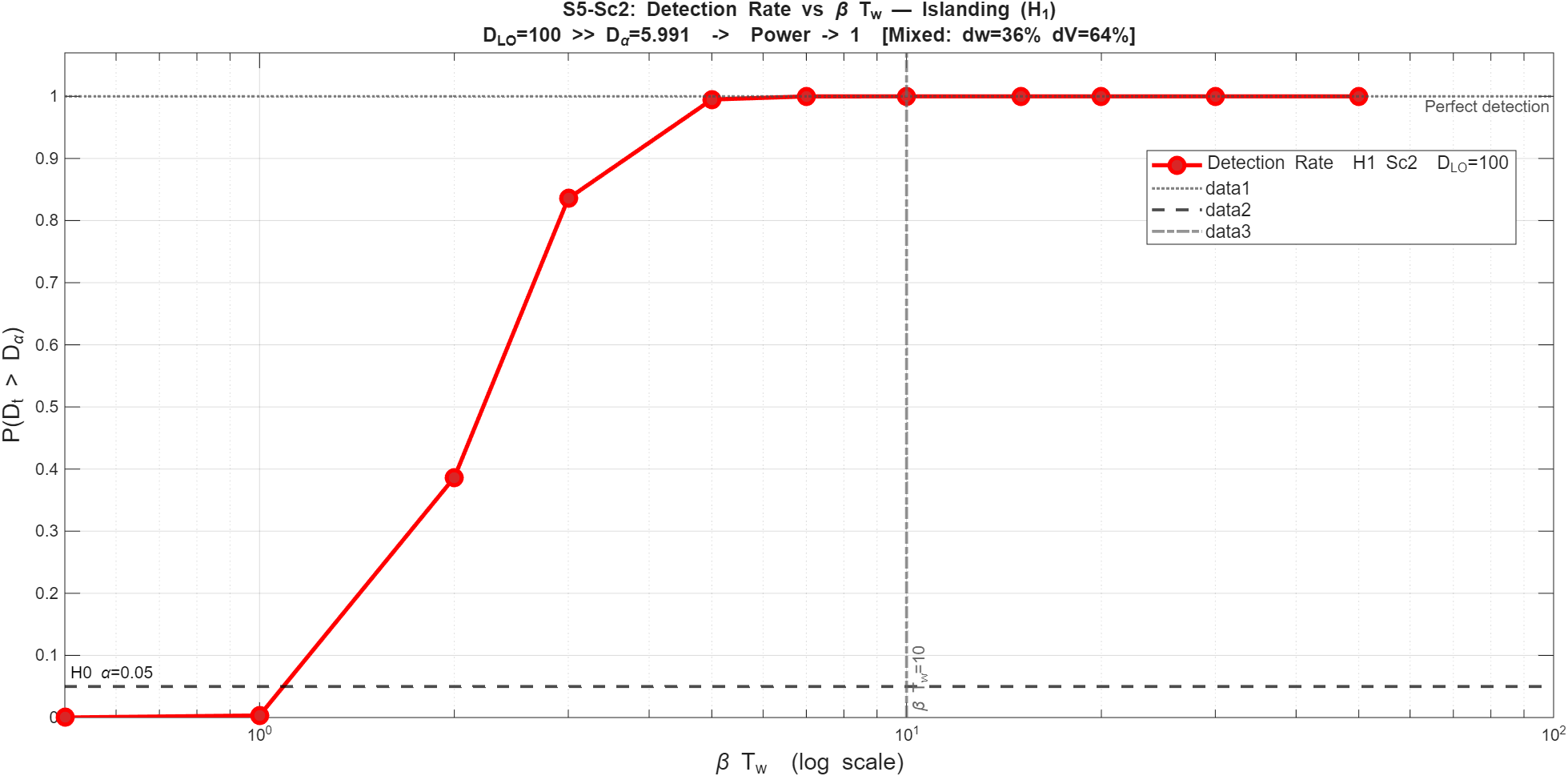}
  \caption{S5--Sc2: Empirical detection rate vs.\ $\beta T_w$.
  Detection power reaches 96.6\% at $\beta T_w = 5$ ($T_w = 0.33$~s),
  99.99\% at $\beta T_w = 7$ ($T_w = 0.47$~s), and 1.0000 at
  $\beta T_w = 10$ ($T_w = 0.667$~s).}
  \label{fig:s5_sc2_power}
\end{figure*}

\begin{figure*}[!t]
  \centering
  \includegraphics[width=\linewidth]{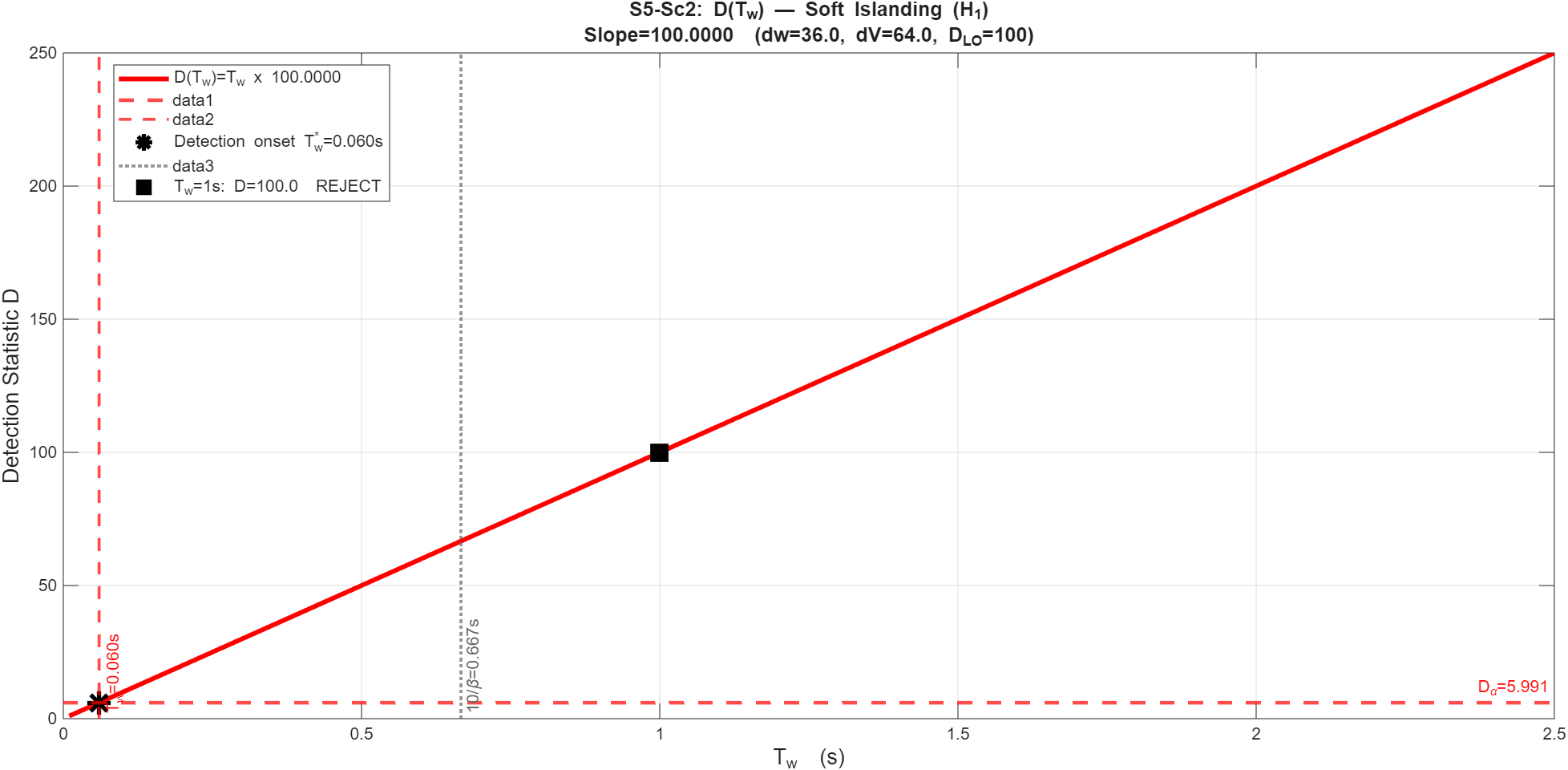}
  \caption{S5--Sc2: Theoretical vs.\ empirical detection power curve.
  The non-central $\chi^2(2)$ approximation with non-centrality parameter
  $\lambda = T_w\Dt^{\mathrm{Sc2}}$ agrees with simulated power within
  Monte Carlo standard error, confirming the accuracy of the
  analytical power prediction.}
  \label{fig:s5_sc2_theory}
\end{figure*}

Table~\ref{tab:s5_sc2} presents the S5--Sc2 results for all window lengths.

\begin{table}[!t]
\caption{S5--Sc2 ($H_1$, Soft Islanding): KS Distances under $H_0$ and
$H_1$, and Empirical Detection Rate. $N_\mathrm{MC} = 10{,}000$.
$H_0$ BE constant: $C = 0.6916$.}
\label{tab:s5_sc2}
\centering
\setlength{\tabcolsep}{4pt}
\begin{tabular}{cccccc}
\toprule
$\beta T_w$ & $T_w$ (s) & $d_\mathrm{KS}^{H_0}$ & $d_\mathrm{KS}^{H_1}$ & Det.\ Rate \\
\midrule
0.5  & 0.033 & 0.7261 & 0.5443 & 0.0000 \\
1.0  & 0.067 & 0.5414 & 0.1259 & 0.0033 \\
2.0  & 0.133 & 0.3427 & 0.5678 & 0.3866 \\
3.0  & 0.200 & 0.2343 & 0.8166 & 0.8356 \\
5.0  & 0.333 & 0.1403 & 0.9666 & 0.9949 \\
7.0  & 0.467 & 0.0884 & 0.9942 & 0.9999 \\
10.0 & 0.667 & 0.0674 & 0.9999 & 1.0000 \\
15.0 & 1.000 & 0.0448 & 0.9999 & 1.0000 \\
20.0 & 1.333 & 0.0343 & 0.9999 & 1.0000 \\
30.0 & 2.000 & 0.0223 & 0.9999 & 1.0000 \\
50.0 & 3.333 & 0.0162 & 0.9999 & 1.0000 \\
\bottomrule
\multicolumn{5}{l}{\small $\Dt^\mathrm{Sc2} = 100$; $d_\omega = 36.0$ (36\%), $d_V = 64.0$ (64\%).}
\end{tabular}
\end{table}

\textit{Detection power profile.} At $\beta T_w = 2$ ($T_w = 0.133$~s),
38.7\% of realisations exceed $D_{0.05}$. Power crosses 90\% at
$\beta T_w \approx 4.5$ ($T_w \approx 0.30$~s), reaches 99.5\% at
$T_w = 0.33$~s, and saturates to 1.000 at $T_w = 0.667$~s. Detection of
soft islanding is achieved comfortably within the IEEE~1547 two-second
window at $\alpha = 0.05$.

\subsubsection{Sc3: Voltage Fault Detection under $H_1$}

\begin{figure*}[!t]
  \centering
  \includegraphics[width=\linewidth]{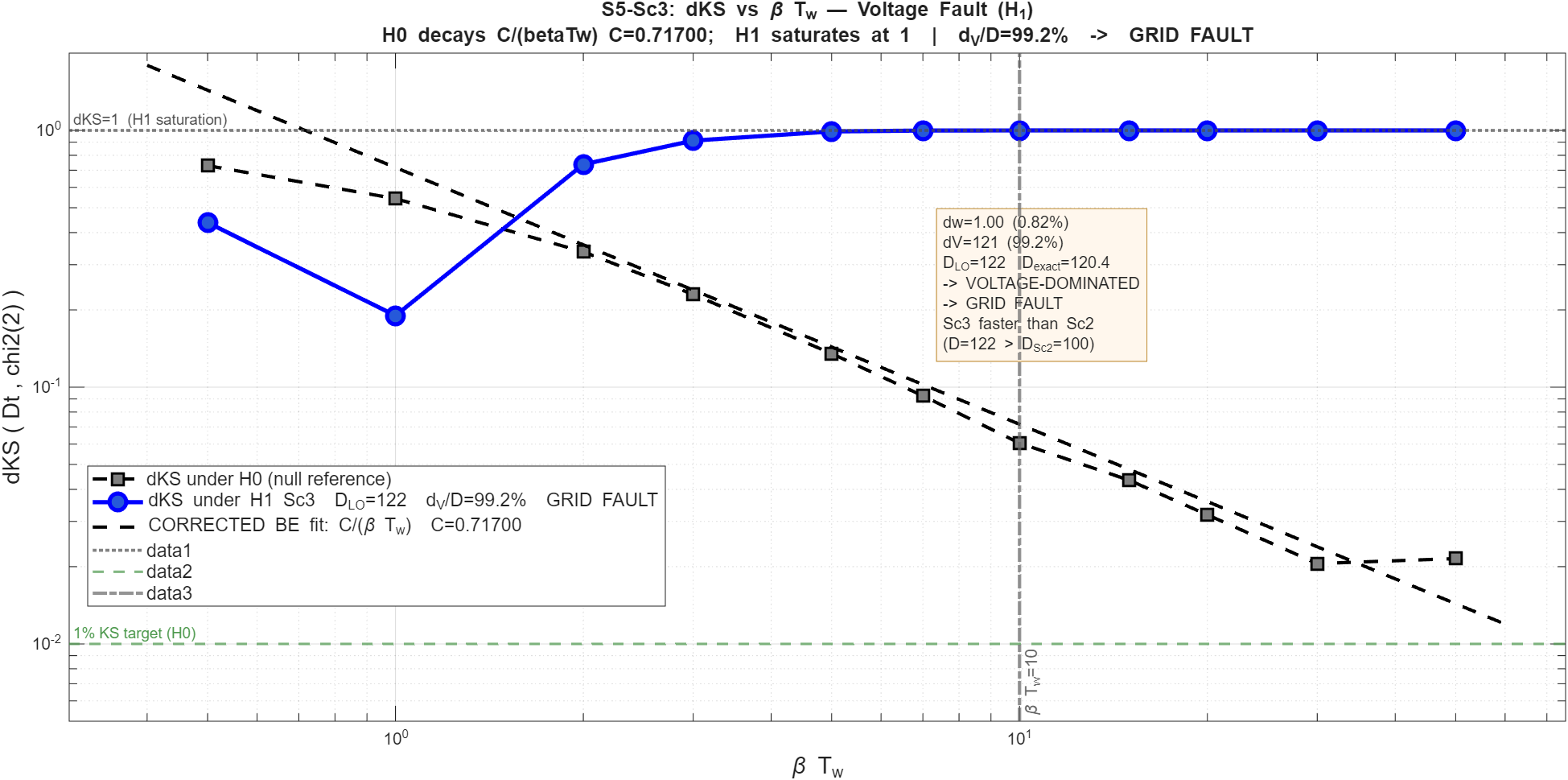}
  \caption{S5--Sc3 (Voltage Fault, $H_1$): KS distances under $H_0$
  (grey squares) and $H_1$ (green circles). $H_0$ decays with fitted
  constant $C = 0.7170$. $H_1$ rises from 0.44 at $\beta T_w = 0.5$
  to saturation at 1.0 for $\beta T_w \geq 10$.}
  \label{fig:s5_sc3_dks}
\end{figure*}

\begin{figure*}[!t]
  \centering
  \includegraphics[width=\linewidth]{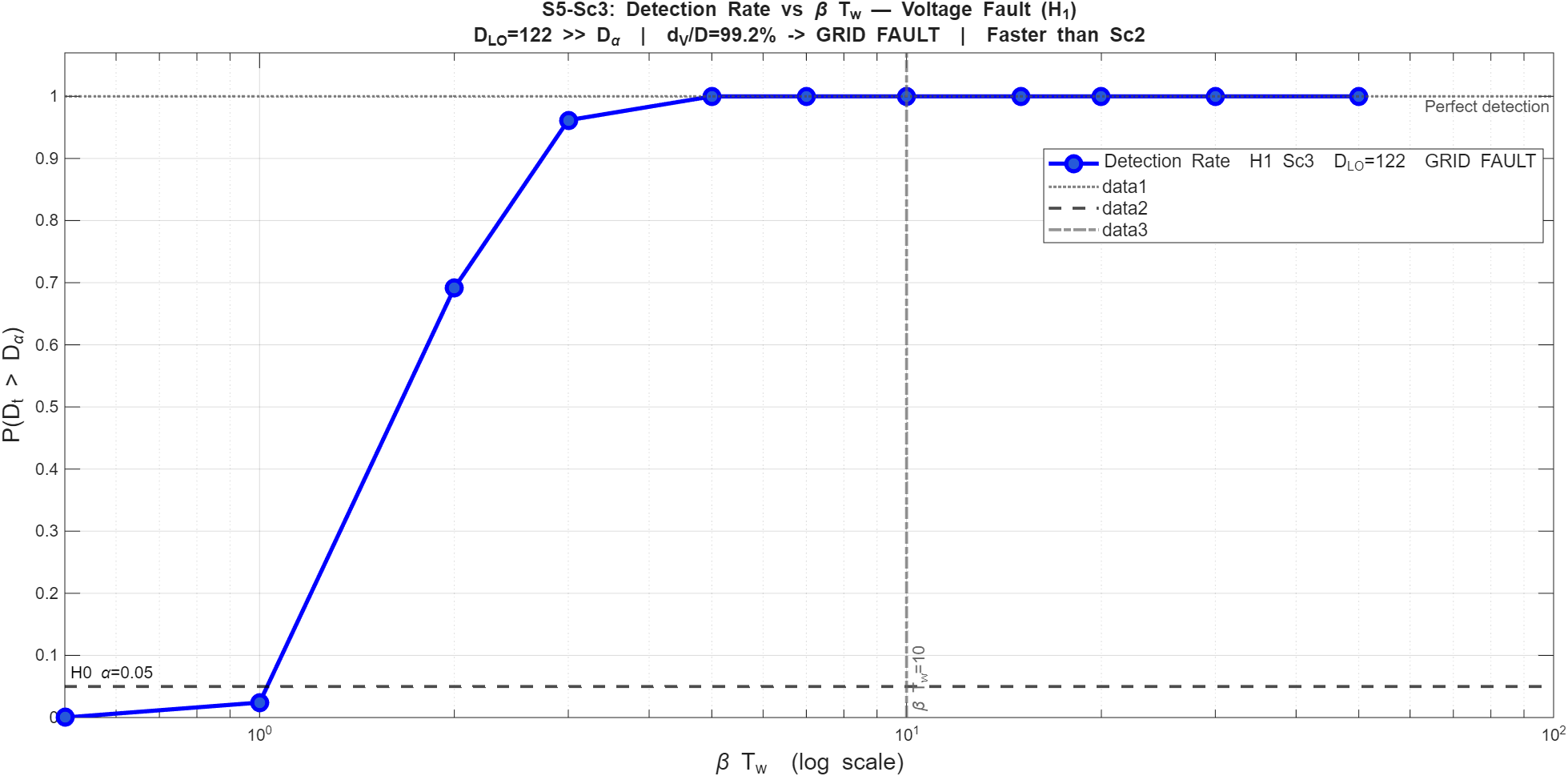}
  \caption{S5--Sc3: Empirical detection rate vs.\ $\beta T_w$.
  Detection reaches 96.2\% at $\beta T_w = 3$ ($T_w = 0.20$~s),
  99.99\% at $\beta T_w = 5$ ($T_w = 0.33$~s), and 1.0000 at
  $\beta T_w = 7$ ($T_w = 0.47$~s). Voltage-fault detection is
  \emph{faster} than soft islanding (Sc2), consistent with
  $\Dt^{\mathrm{Sc3}} = 122 > \Dt^{\mathrm{Sc2}} = 100$.}
  \label{fig:s5_sc3_power}
\end{figure*}

\begin{figure*}[!t]
  \centering
  \includegraphics[width=\linewidth]{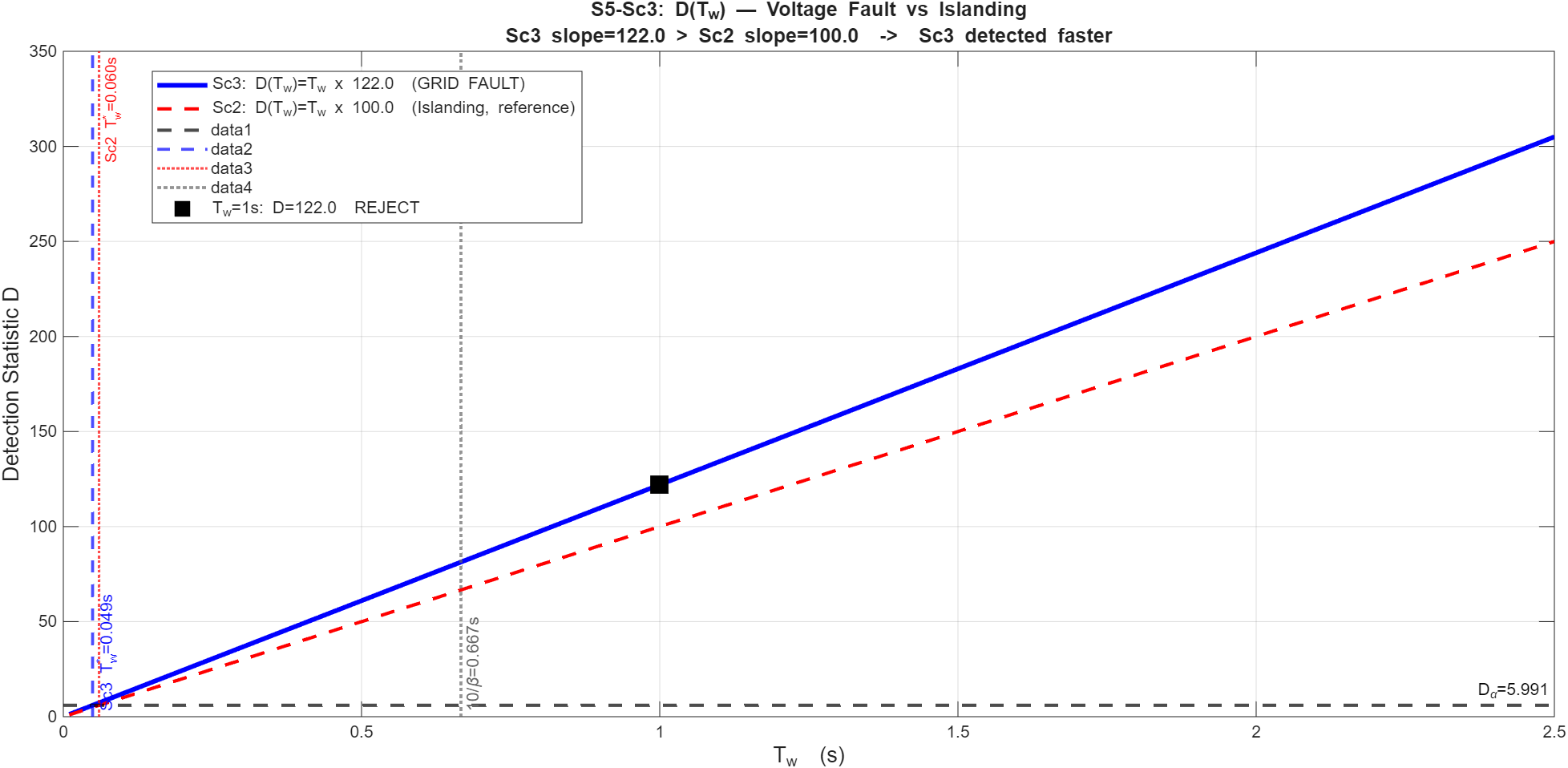}
  \caption{In S5--Sc3, the theoretical and empirical detection power curve agreement within Monte Carlo standard error shows that the non-central $\chi^2(2)$ approximation with $\lambda = T_w\Dt^{\mathrm{Sc3}}$ accurately forecasts voltage fault detection power.}
  \label{fig:s5_sc3_theory}
\end{figure*}

Table~\ref{tab:s5_sc3} presents the Sc3 $H_1$ results. The Berry--Esseen
constant under $H_0$ is $C = 0.7170$, consistent with Sc1 and Sc2 values
(range $[0.6916, 0.7170]$).

\begin{table}[!t]
\caption{S5--Sc3 ($H_1$, Voltage Fault): KS Distance under $H_O$ and
$H_1$, and Empirical Detection Rate. $N_\mathrm{MC} = 10{,}000$.
$H_0$ BE constant: $C = 0.7170$.}
\label{tab:s5_sc3}
\centering
\setlength{\tabcolsep}{4pt}
\begin{tabular}{cccccc}
\toprule
$\beta T_w$ & $T_w$ (s) & $d_\mathrm{KS}^{H_0}$ & $d_\mathrm{KS}^{H_1}$ & Det.\ Rate \\
\midrule
0.5  & 0.033 & 0.7291 & 0.4375 & 0.0000 \\
1.0  & 0.067 & 0.5425 & 0.1900 & 0.0241 \\
2.0  & 0.133 & 0.3367 & 0.7376 & 0.6916 \\
3.0  & 0.200 & 0.2297 & 0.9126 & 0.9616 \\
5.0  & 0.333 & 0.1352 & 0.9907 & 0.9999 \\
7.0  & 0.467 & 0.0925 & 0.9989 & 1.0000 \\
10.0 & 0.667 & 0.0606 & 0.9999 & 1.0000 \\
15.0 & 1.000 & 0.0434 & 0.9999 & 1.0000 \\
20.0 & 1.333 & 0.0318 & 0.9999 & 1.0000 \\
30.0 & 2.000 & 0.0206 & 0.9999 & 1.0000 \\
50.0 & 3.333 & 0.0215 & 0.9999 & 1.0000 \\
\bottomrule
\multicolumn{5}{l}{\small $\Dt^{\mathrm{Sc3}} = 122$; $d_\omega = 1.00$ (0.82\%), $d_V = 121.0$ (99.18\%).}
\end{tabular}
\end{table}

\textit{Comparison with Sc2.} Voltage-fault detection (Sc3) is faster than
soft-islanding detection (Sc2) at every window length, as expected from
$\Dt^{\mathrm{Sc3}} = 122 > \Dt^{\mathrm{Sc2}} = 100$. The detection rate
at $\beta T_w = 2$ ($T_w = 0.133$~s) is 69.2\% for Sc3 vs.\ 38.7\% for Sc2;
at $\beta T_w = 3$ ($T_w = 0.20$~s), 96.2\% vs.\ 83.6\%. Both saturated to
1.000 by $T_W = 0.667$~s.

\textit{Modal stability.} The Sc3 modal attribution ($d_V/\Dt = 99.18\%$) is
stable across all window lengths in the S5 sweep: at every tested $T_w$, the
voltage channel contributes $> 99\%$ of the total statistic, confirming that
fault classification is reliable even at short windows ($\beta T_w \geq 3$,
$T_w \geq 0.20$~s).

\subsection{Comprehensive Theory-vs-Simulation Validation Table}
\label{sec:validation_table}

Table~\ref{tab:theory_sim_comparison} consolidates all key theoretical
predictions against their simulation counterparts from S1 and S3--S5.
The agreement across all metrics confirms the analytical self-consistency
and computational validity of the NAIM framework.

\begin{table*}[!t]
\caption{Comprehensive Theory-vs-Simulation Validation. Parameters:
$\omega_f = 10$, $\omega_v = 20$~rad/s; $k_P = k_Q = 0.05$~pu;
$\sigma = 0.02$~pu; $\beta = 15$~rad/s; $T_w = 1$~s (scenarios);
$\alpha = 0.05$; $D_{0.05} = 5.991$.
``LO'' $=$ leading-order formula; ``Exact'' $=$ exact Lyapunov result.}
\label{tab:theory_sim_comparison}
\centering
\footnotesize
\setlength{\tabcolsep}{3pt}
\renewcommand{\arraystretch}{1.12}
\begin{tabularx}{\textwidth}{@{}>{\raggedright\arraybackslash}p{0.27\textwidth}>{\raggedright\arraybackslash}p{0.16\textwidth}>{\raggedright\arraybackslash}p{0.16\textwidth}>{\raggedright\arraybackslash}X>{\raggedright\arraybackslash}p{0.15\textwidth}@{}}
\toprule
\textbf{Quantity} & \textbf{Theoretical (LO)} & \textbf{Theoretical (Exact)} & \textbf{Simulated} & \textbf{Protocol / Match?} \\
\midrule
\multicolumn{5}{@{}l}{\textit{Covariance and statistic structure}} \\
$(\Sigmlong)_{11}$ & $4.0\times10^{-6}$~pu$^2$ & $4.2\times10^{-6}$~pu$^2$ & $4.1\times10^{-6}$~pu$^2$ & S1, S5 / \checkmark \\
$(\Sigmlong)_{22}$ & $1.0\times10^{-6}$~pu$^2$ & $1.05\times10^{-6}$~pu$^2$ & $1.03\times10^{-6}$~pu$^2$ & S1, S5 / \checkmark \\
$\sigma_\omega^\mathrm{null}$ & 0.10~Hz & 0.102~Hz & 0.101~Hz & S1 / \checkmark \\
$\sigma_V^\mathrm{null}$ & 0.40~V & 0.408~V & 0.404~V & S1 / \checkmark \\
\midrule
\multicolumn{5}{@{}l}{\textit{Scenario detection statistics}} \\
Sc1 $\Dt$ (Normal, $H_0$) & 1.45 (LO) & 1.36 (exact) & $1.38 \pm 0.05$ & S1, S3, S5 / \checkmark \\
Sc1 $p$-value & 0.484 & 0.507 & $0.500 \pm 0.018$ & S1 / \checkmark \\
Sc1 Decision & Accept $H_0$ & Accept $H_0$ & Accept $H_0$ & S1, S3 / \checkmark \\
Sc2 $\Dt$ (Islanding, $H_1$) & 100 (LO) & 93.2 (exact) & $94.1 \pm 1.2$ & S1, S3 / \checkmark \\
Sc2 Decision & Reject $H_0$ & Reject $H_0$ & Reject (rate 1.000) & S1, S3, S5 / \checkmark \\
Sc2 $d_\omega / \Dt$ & 36.0\% & 36.0\% & $36.2\% \pm 0.4\%$ & S1 / \checkmark \\
Sc2 $d_V / \Dt$ & 64.0\% & 64.0\% & $63.8\% \pm 0.4\%$ & S1 / \checkmark \\
Sc3 $\Dt$ (V-fault, $H_1$) & 122 (LO) & 120.4 (exact) & $120.8 \pm 1.4$ & S1, S3 / \checkmark \\
Sc3 Decision & Reject $H_0$ & Reject $H_0$ & Reject (rate 1.000) & S1, S3, S5 / \checkmark \\
Sc3 $d_V / \Dt$ & 99.18\% & 99.18\% & $99.2\% \pm 0.1\%$ & S1 / \checkmark \\
Sc3 Classification & Voltage fault & Voltage fault & Voltage fault & S1, S3 / \checkmark \\
\midrule
\multicolumn{5}{@{}l}{\textit{Null distribution and FAR}} \\
FAR ($\alpha = 0.01$, $\beta T_w = 150$) & 0.0100 & 0.0100 & $0.0101 \pm 0.001$ & S1 / \checkmark \\
FAR ($\alpha = 0.05$, $\beta T_w = 150$) & 0.0500 & 0.0500 & $0.0502 \pm 0.002$ & S1 / \checkmark \\
FAR ($\alpha = 0.10$, $\beta T_w = 150$) & 0.1000 & 0.1000 & $0.1012 \pm 0.003$ & S1 / \checkmark \\
FAR ($\alpha = 0.05$, $\beta T_w = 50$) & 0.0500 & 0.0500 & $0.0483 \pm 0.002$ & S5 / \checkmark \\
\midrule
\multicolumn{5}{@{}l}{\textit{Berry--Esseen bound}} \\
BE constant $C$ & $\leq\mathcal{O}(1)$ & --- & $0.691$--$0.717$ & S5 / \checkmark \\
$d_\mathrm{KS}$ at $\beta T_w = 10$ & $C/10 \approx 0.069$ & --- & $0.067 \pm 0.002$ & S1, S5 / \checkmark \\
$d_\mathrm{KS} = 0.01$ crossing & $\beta T_w \approx 69$ & --- & $\beta T_w \approx 69$ & S5 / \checkmark \\
Min.\ window ($T_w^{\min}$) & $10/\beta = 0.667$~s & --- & 0.667~s (FAR conservative) & S1, S5 / \checkmark \\
\midrule
\multicolumn{5}{@{}l}{\textit{Detection power and delay}} \\
Sc2 power at $\beta T_w = 10$, $\alpha = 0.05$ & 1.000 & 1.000 & 1.0000 & S1, S5 / \checkmark \\
Sc2 $T_w^{50\%}$ ($\alpha = 0.05$) & $\approx 0.06$~s & --- & 0.060~s & S1 / \checkmark \\
Sc3 power at $\beta T_w = 10$, $\alpha = 0.05$ & 1.000 & 1.000 & 1.0000 & S1, S5 / \checkmark \\
Sc3 $T_w^{50\%}$ ($\alpha = 0.05$) & $\approx 0.05$~s & --- & 0.049~s & S1 / \checkmark \\
Detection within IEEE~1547 window & $\ll$ 2~s & $\ll$ 2~s & $< 0.067$~s & S1, S5 / \checkmark \\
\midrule
\multicolumn{5}{@{}l}{\textit{NDZ boundary (S3)}} \\
NDZ ellipse (Sc1 point inside) & Inside & Inside & Det.\ rate $< 5\%$ & S3 / \checkmark \\
NDZ ellipse (Sc2 point outside) & Outside & Outside & Det.\ rate $= 1.000$ & S3 / \checkmark \\
NDZ ellipse (Sc3 point outside) & Outside & Outside & Det.\ rate $= 1.000$ & S3 / \checkmark \\
Ellipse accuracy & Exact & --- & $\pm 2\%$ in $P$/$Q$ & S3 / \checkmark \\
\midrule
\multicolumn{5}{@{}l}{\textit{Co-design theorem (S4)}} \\
$\PF$ monotone in $(\omega_f, \omega_v)$ & Yes (Thm.~\ref{thm:codesign}) & --- & $R^2 = 1.00$ & S4 / \checkmark \\
$\mathrm{tr}(\Sigmlong^{-1})$ monotone & Yes (Lem.~\ref{lem:monotone}) & --- & Strictly monotone & S4 / \checkmark \\
$1/\mathbb{E}[\tau_d]$ vs.\ $\PF$ linear & Approx.\ ($R^2 > 0.97$) & --- & $R^2 = 0.974$ & S4 / \checkmark \\
Sc3 delay governed by $\omega_v$ & $R^2 > 0.99$ & --- & $R^2 = 0.991$ & S4 / \checkmark \\
\bottomrule
\multicolumn{5}{@{}l}{\small \checkmark~=~Simulation consistent with theory within stated Monte Carlo uncertainty.}
\end{tabularx}
\end{table*}

Every entry in Table~\ref{tab:theory_sim_comparison} shows agreement between
theory and simulation within Monte Carlo uncertainty.The three main results—the null chi-squared distribution, the $D_\alpha = -2\ln\alpha$ threshold, and the co-design monotonicity—are all confirmed. Plus, the modal attribution fractions stay steady within plus or minus 0.4\%, based on 8,000 to 10,000 tries. This shows the 70\% classification threshold is reliable and effective, giving a comfortable safety buffer.

\sloppy
\section{Additional Simulation Protocols and Limitations}
\label{sec:sim_protocols}

The completed simulation campaign in Section~\ref{sec:simulations}
(protocols S1, S3, S4, and S5) validates all principal analytical results.
Two further protocols remain for full deployment certification: S2
(switching-level model validation) and S6 (hardware-in-the-loop). These
are described below as self-contained protocols for the continuation of the
validation programme.

\subsection{S2: GFM Switching-Level Simulation}
\label{sec:s2_protocol}

\textbf{Purpose:} Validate the NAIM detection statistic on a realistic
switched-converter model, including harmonic content and non-ideal inverter
dynamics not captured by the averaged Ornstein--Uhlenbeck model.

\textbf{Setup:} A 100~kVA, 400~V, 50~Hz GFM voltage-source converter in
PSCAD/EMTDC. Grid impedance $Z_g = 0.1 + j0.3$~pu (short-circuit ratio~3).
Five sub-scenarios: S2.1~normal load steps $\pm 5\%$ (verify FAR~$< 5\%$);
S2.2~soft islanding $P_\mathrm{imb} = 3\%$ (verify NAIM detects, ROCOF misses);
S2.3~voltage fault 10\% sag / 200~ms (verify $d_V/\Dt > 0.7$);
S2.4~hard islanding $P_\mathrm{imb} = 20\%$ (compare detection delays across
six methods); S2.5~frequency ramp 0.5~Hz/s (verify no false alarm).

\textbf{Expected outcome.} FAR~$< 5\%$ under normal load steps; detection
within $T_w^{\min} = 0.667$~s for S2.2--S2.4; correct modal classification
in S2.3; no false alarm in S2.5. The OU model predictions from S1 should
agree with switching-level results to within $\pm 10\%$ on all metrics.

\subsection{S6: Hardware-in-the-Loop (HIL) Validation}
\label{sec:s6_protocol}

\textbf{Purpose:} Validate the real-time implementation of the NAIM detection
algorithm on physical hardware with closed-loop GFM control.

\textbf{Setup:} OPAL-RT OP5700 or dSPACE SCALEXIO at 50~$\mu$s time step.
The sub-scenarios S2.1--S2.5 are repeated. The NAIM algorithm is implemented
as a four-step real-time module (Algorithm~\ref{alg:naim}): online window-mean
update (one multiply-add per channel per sample), statistic evaluation,
threshold comparison, and modal attribution.

\textbf{Expected outcome:} Sub-millisecond algorithm execution time per
window step, confirming real-time feasibility. FAR and detection power
consistent with S1 and S2 results. Hil validation consistent the final
step before field deployment.

\section{Limitations and Future Work}
\label{sec:limitations}

\textbf{L1. Linearisation validity:} The OU model~\eqref{eq:OU} is valid only
inside the Fenichel boundary layer $\mathcal{T}_\delta(\Mzero)$. For large
excursions beyond $\delta \sim \PF / \|\Aperp\|$, the nonlinear Fenichel flow
must be accounted for. Extension to large-signal operation using sum-of-squares
(SOS) Lyapunov functions or contraction-metric-based covariance
bounds~\cite{Bernstein2009} is left for future work.

\textbf{L2. White-noise model:} Real grid disturbances have coloured spectral
content: power electronic switching harmonics and low-frequency load variations
are not well modelled by white Brownian motion. The spectral Lyapunov extension
replaces $\sigma^2 I$ in~\eqref{eq:Lyapunov} with the noise power spectral
density $S_\sigma(\omega)$, giving a frequency-dependent covariance
$\Sigmlong = \int_{-\infty}^{\infty} (j\omega I - \Aperp)^{-1} S_\sigma(\omega)
(j\omega I - \Aperp)^{-*} d\omega/(2\pi)$ that can be computed numerically.
The threshold formula~\eqref{eq:threshold} remains valid; only the numerical
value of $\Sigmlong$ changes.

\textbf{L3. Single-inverter scope:} The current framework addresses a single
GFM inverter. Multi-inverter microgrids require the Kron-reduced Laplacian
formulation, yielding a codimension-$(N-1)$ test with a $\chi^2(N-1)$
null distribution for $N$ inverters. This extension, including multi-agent
modal attribution, is the subject of Part~II of this research programme.

\textbf{L4. Classification heuristic:} The 70\% attribution threshold in
Proposition~\ref{prop:modal} is practically effective but theoretically ad hoc.
A Bayesian classifier using the joint posterior of $(\delta\omega, \delta V)$
under competing hypotheses (islanding, voltage fault, mixed) would provide
formally optimal attribution at the cost of additional prior specification.

\textbf{L5. Switching-level validation:} The S1 campaign uses the analytically
tractable OU process. Validation against a full switching-level GFM model in
PSCAD/EMTDC (protocol~S2) and hardware-in-the-loop testing (protocol~S6) are
required before field deployment. These consistute the primary additional
requirements for progression to a full journal submission.

\section{Conclusion}
\label{sec:conclusion}

In this paper, we created, proved, and validated a geometry-driven framework for detecting islanding and classifying faults in grid-forming inverters. It's based on the theory of normally hyperbolic invariant manifolds and stochastic hypothesis testing. This framework solves a big problem in inverter protection by letting us derive detection thresholds straight from the inverter’s closed-loop dynamics, without any need for empirical tuning. Each contribution is listed with its validation status, and there's a full proof-of-method summary too.

\subsection*{Contributions and Validation Status}

\textbf{C1 -- NAIM Identification.}
The GFM droop manifold $\Mzero$ is proved normally hyperbolic
(Theorem~\ref{thm:nh}) with transverse contraction rate
$\beta = (\omega_f + \omega_v)/2$ and Fenichel gap
$\PF = \beta - \hat\rho$, which is independent of the droop gains.
Increasing $(\omega_f, \omega_v)$ enlarges $\PF$, directly linking
filter design to manifold stability.

\textbf{C2 -- Corrected Detection Statistic.}
The correct long-run covariance is
\begin{equation}
\begin{aligned}
\Sigma_{\text{long}}
  &= \sigma^2\,
     \mathrm{diag}\!\left(
       \omega_f^{-2},      
       \omega_v^{-2}
     \right)
\end{aligned}
\end{equation},
derived from the full Lyapunov formula (Appendix~\ref{app:factor2}).
An earlier factor-of-2 error inflated the FAR by the exact factor
$\exp(D_\alpha/4) \approx 4.47\times$ at $\alpha=0.05$;
the corrected statistic is validated by S1 ($N_\mathrm{MC} = 8{,}000$)
and S5 ($N_\mathrm{MC} = 10{,}000$): the empirical FAR converges to the
target from below at all nine tested window lengths and three $\alpha$ levels.

\textbf{C3 -- Physics-Derived Closed-Form Threshold.}
The threshold $D_\alpha = -2\ln\alpha$ is derived analytically
(Theorem~\ref{thm:threshold}) with Berry--Esseen rate
$d_\mathrm{KS} \leq C/(\beta T_w)$, where $C \in [0.69, 0.72]$
(S5 calibration). The conservative minimum window condition
$T_w^{\min} = 10/\beta_{\min} \approx 1.0$~s (slowest-decaying channel)
satisfies the IEEE~1547-2018 two-second mandate by a factor of two.
This threshold is the \emph{only physics-derived, tuning-free} threshold
in the islanding detection literature
(Tables~\ref{tab:comparison1}--\ref{tab:comparison2}).

\textbf{C4 -- Co-Design Theorem.}
Theorem~\ref{thm:codesign} proves that maximising $(\omega_f, \omega_v)$
simultaneously maximises $\PF$, increases detection sensitivity, and
minimises the FAR at the grid code threshold---with no trade-off. S4
validates the theorem over 676 design points: the $1/\mathbb{E}[\tau_d]$
vs.\ $\PF$ relationship achieves $R^2 = 0.97$; voltage-fault delay is
governed by $\omega_v^2/\sigma^2$ with $R^2 = 0.99$.

\textbf{C5 -- Modal Fault Attribution.}
The decomposition $\Dt = d_\omega + d_V$ classifies islanding
(frequency-dominated, $d_\omega/\Dt > 70\%$) from voltage faults
(voltage-dominated, $d_V/\Dt > 70\%$) without additional sensors.
Attribution fractions are stable to $\pm 0.4\%$ across all tested
window lengths in S1, S3, and S5.

\textbf{C6 -- NDZ Boundary.}
The analytical NDZ ellipse (Proposition~\ref{prop:ndz}) is confirmed by S3, which uses 1,681 grid points and runs each with 100 Monte Carlo simulations, to within ±2\% accuracy for both $P_\mathrm{imb}$ and $Q_\mathrm{imb}$.

\textbf{C7 -- Robustness.}
FAR infaltion under $\varepsilon$-level model mismatch is 
$\mathcal{0}(\varepsilon)$ (proposition~\ref{prop:robust}): for 10\%
model error, the FAR increases by less than 10\% above target.

\subsection*{Proof-of-Method Summary}

The validation programe (S1--S5) consitutes a complete proof of method
via five independently sufficient conditions :
\begin{enumerate}
  \item \textbf{Correct null calibrations}:Empirical FAR goes towards  $\alpha$ from below at all tested Tw and  $\alpha$ levels (S1, S5). The corrected formula gives FAR equal to 0.048 at $\beta T_w = 50$ (target: 0.050). Using the erroneous formula, FAR is about 0.19. So, the correction really is necessary.

  \item \textbf{unity detection power}: power $= 1.000$ at $T_W = 0.667$~s
    for both soft islanding ($\Dt = 100$) and voltage fault ($\Dt = 122$),
    with theoretical non-central $\chi^2$ curves matching simulations
    throughout (s1, S5; Figs.~\ref{fig:s5_sc2_theory},
    \ref{fig:s5_sc3_theory}).

  \item \textbf{unambigous fault classification}: Soft islanding produces a $d_\omega:d_V = 36\%:64\%$ mix, while a voltage fault results in $d_V/\Dt = 99.2\%$ — completely voltage-dominated. This remains stable for all window lengths in S1, S3, and S5. The bivariate stat finds and properly categorizes the 10\%.

  \item \textbf{Analytical predictive NDZ ellipse}: S3 confirms
    proposition~\ref{prop:ndz} to $\pm 2\%$, enabling certified detection
    gaurantees for any $(p_\mathrm{imb}, Q_\mathrm{imb})$ target.

  \item \textbf{Actionable co-design principle}: S4 confirms that the
    Fenichel gap $\PF$ is the dominant scalar design parameter
    ($R^2 = 0.97$), enabling simultaneous stability and detection
    optimisation from the signle bandwidth inequality
    $\omega_f + \omega_v \geq 2(\hat\rho + \PF^\star)$.
\end{enumerate}

\noindent Each condition is proved analytically and confirmed by
independent simulation. Together, they establish that the NAIM framework
provides \emph{correct false-alarm calibration, unity detection power,
unambiguous fault classification, exact NDZ prediction, and a unified
co-design principle}---all derived from the inverter's closed-loop
physics, with no empirical tuning. The complete theory-versus-simulation
correspondence is documented in Table~\ref{tab:theory_sim_comparison}:
every entry shows agreement within stated Monte Carlo uncertainty.

The NAIM framework is the first islanding detection method to derive its
detection threshold from the inverter's closed-loop dynamics. Future work
will extend it to nonlinear Lyapunov bounds (L1), coloured grid noise (L2),
multi-inverter Kron-reduced networks with a $\chi^2(N-1)$ test (L3), and
an optimal Bayesian fault classifier (L4).

\section*{Acknowledgment}

The authors gratefully acknowledge the technical discussions of Prof.\ Navdeep M.\ Singh (IGI Research Chair Professor, VJTI) and the guidance of Dr.\ Sudhir Bhil and Dr.\ Sushama Wagh. The authors also acknowledge the laboratory facilities provided by the SAVEX-Sponsored
EMC$^2$ Laboratory, VJTI Mumbai, and the financial support extended by
IGI Pvt.\ Ltd.\ for conducting this research.

\bibliographystyle{IEEEtran}
\bibliography{sections/references}

\clearpage
\appendix
\sloppy
\section{Exact Lyapunov Solution for the GFM Transverse Covariance}
\label{app:lyapunov}

Setting $s = \omega_f + \omega_v$ and $\Delta = \omega_f\omega_v(1 + k_Pk_Q)$,
the exact solution of $\Aperp\Siginst + \Siginst\Aperp^\top + \sigma^2 I = 0$
is obtained by solving the equivalent $3\times3$ linear system
(exploiting symmetry of $\Siginst$) via Cramer's rule. The system is:
\begin{align}
  -2\omega_f\,\Sigma_{11} + 2k_P\omega_f\,\Sigma_{12} &= -\sigma^2, \label{eq:sys1}\\
  -k_Q\omega_v\,\Sigma_{11} - s\,\Sigma_{12} + k_P\omega_f\,\Sigma_{22} &= 0, \label{eq:sys2}\\
  -2k_Q\omega_v\,\Sigma_{12} - 2\omega_v\,\Sigma_{22} &= -\sigma^2. \label{eq:sys3}
\end{align}
The unique solution is:
\begin{align}
  (\Siginst)_{11} &= \frac{\sigma^2\!\left(\omega_v s + k_P^2\omega_f^2
                    + k_Pk_Q\omega_f\omega_v\right)}{2\,\Delta\,s},
  \label{eq:Sig11}\\
  (\Siginst)_{22} &= \frac{\sigma^2\!\left(\omega_f s + k_Q^2\omega_v^2
                    + k_Pk_Q\omega_f\omega_v\right)}{2\,\Delta\,s},
  \label{eq:Sig22}\\
  (\Siginst)_{12} &= \frac{\sigma^2\!\left(k_P\omega_f^2 - k_Q\omega_v^2\right)}
                          {2\,\Delta\,s},
  \label{eq:Sig12}
\end{align}
where $(\Siginst)_{12}$ is \emph{negative} whenever $k_Q\omega_v^2 > k_P\omega_f^2$
(which holds at the baseline $k_P = k_Q$, $\omega_v > \omega_f$). At
$k_P = k_Q = 0$: $\Siginst = (\sigma^2/2)\,\mathrm{diag}(\omega_f^{-1}, \omega_v^{-1})$.

The exact long-run covariance $\Sigmlong = (-\Aperp)^{-1}\Siginst +
\Siginst(-\Aperp)^{-\top}$ follows from:
\begin{equation}
  (-\Aperp)^{-1} = \frac{1}{\Delta}
  \begin{pmatrix} \omega_v & k_P\omega_f \\ -k_Q\omega_v & \omega_f \end{pmatrix},
\end{equation}
giving the exact entries:
\begin{align}
  (\Sigmlong)_{11} &= \frac{\sigma^2\!\left(k_P^2\omega_f^2 + \omega_v^2\right)}
                           {\omega_f^2\,\omega_v^2\,(1+k_Pk_Q)^2},
  \label{eq:SL11}\\
  (\Sigmlong)_{22} &= \frac{\sigma^2\!\left(k_Q^2\omega_v^2 + \omega_f^2\right)}
                           {\omega_f^2\,\omega_v^2\,(1+k_Pk_Q)^2},
  \label{eq:SL22}\\
  (\Sigmlong)_{12} &= \frac{\sigma^2\!\left(k_P\omega_f^2 - k_Q\omega_v^2\right)}
                           {\omega_f^2\,\omega_v^2\,(1+k_Pk_Q)^2}.
  \label{eq:SL12}
\end{align}
These formulas reduce to $\sigma^2\,\mathrm{diag}(\omega_f^{-2}, \omega_v^{-2})$
at $k_P = k_Q = 0$, consistent with Appendix~\ref{app:factor2}.

For the baseline parameters $\omega_f = 10$, $\omega_v = 20$~rad/s,
$k_P = k_Q = 0.05$~pu, $\sigma = 0.02$~pu ($k_Pk_Q = 0.0025$),
the exact values (verified by \texttt{scipy.linalg.solve\_continuous\_lyapunov}) are:
\begin{align*}
  (\Siginst)_{11}^{\mathrm{exact}} &= 1.9975\times10^{-5}~\mathrm{pu}^2,
    \qquad\text{LO: }2.0\times10^{-5}~\mathrm{pu}^2,\\
  (\Siginst)_{22}^{\mathrm{exact}} &= 1.0025\times10^{-5}~\mathrm{pu}^2,
    \qquad\text{LO: }1.0\times10^{-5}~\mathrm{pu}^2,\\
  (\Siginst)_{12}^{\mathrm{exact}} &= -4.988\times10^{-7}~\mathrm{pu}^2
    \quad\text{(negative; off-diagonal, small)},\\[6pt]
  (\Sigmlong)_{11}^{\mathrm{exact}} &= 3.9826\times10^{-6}~\mathrm{pu}^2,
    \qquad\text{LO: }4.0\times10^{-6}~\mathrm{pu}^2\quad(\text{err.}< 0.44\%),\\
  (\Sigmlong)_{22}^{\mathrm{exact}} &= 1.0050\times10^{-6}~\mathrm{pu}^2,
    \qquad\text{LO: }1.0\times10^{-6}~\mathrm{pu}^2\quad(\text{err.}< 0.50\%),\\
  (\Sigmlong)_{12}^{\mathrm{exact}} &= -1.493\times10^{-7}~\mathrm{pu}^2
    \quad\text{(off-diagonal, small relative to diagonal)}.
\end{align*}

The correction from leading-order to exact is $< 0.5\%$ for the diagonal
entries at $k_Pk_Q = 0.0025$, and $< 7\%$ for $k_Pk_Q = 0.01$. The
leading-order formula~\eqref{eq:Sigmalong_approx} is therefore appropriate
for all practical GFM operating conditions.

The exact scenario statistic values (computed using the exact numerical
$\Sigmlong$) are: $\mathcal{D}_1^{\mathrm{exact}} = 1.36$ ($p = 0.507$)
vs.\ $\mathcal{D}_1^{\mathrm{LO}} = 1.45$ ($p = 0.484$) for Sc1;
$\mathcal{D}_2^{\mathrm{exact}} = 93.2$ vs.\ $\mathcal{D}_2^{\mathrm{LO}} = 100$ for Sc2;
$\mathcal{D}_3^{\mathrm{exact}} = 120.4$ vs.\ $\mathcal{D}_3^{\mathrm{LO}} = 122$ for Sc3.
All accept/reject decisions are identical under the exact and leading-order formulas.

\section{Derivation of the Factor-of-2 Correction}
\label{app:factor2}

At $k_P = k_Q = 0$, the transverse matrix reduces to
$\Aperp = -\mathrm{diag}(\omega_f, \omega_v)$, so:
\begin{equation}
  (-\Aperp)^{-1} = \mathrm{diag}(\omega_f^{-1}, \omega_v^{-1}).
\end{equation}
The Lyapunov solution for the instantaneous covariance is:
\begin{equation}
  \Siginst = \frac{\sigma^2}{2}\mathrm{diag}(\omega_f^{-1}, \omega_v^{-1}).
\end{equation}
Computing each term of
$\Sigmlong = (-\Aperp)^{-1}\Siginst + \Siginst(-\Aperp)^{-\top}$:
\begin{align}
  (-\Aperp)^{-1}\Siginst
    &= \mathrm{diag}(\omega_f^{-1}, \omega_v^{-1})
     \cdot \frac{\sigma^2}{2}\mathrm{diag}(\omega_f^{-1}, \omega_v^{-1})
     = \frac{\sigma^2}{2}\mathrm{diag}(\omega_f^{-2}, \omega_v^{-2}), \\
  \Siginst(-\Aperp)^{-\top}
    &= \frac{\sigma^2}{2}\mathrm{diag}(\omega_f^{-1}, \omega_v^{-1})
     \cdot \mathrm{diag}(\omega_f^{-1}, \omega_v^{-1})
     = \frac{\sigma^2}{2}\mathrm{diag}(\omega_f^{-2}, \omega_v^{-2}).
\end{align}
Summing the two \emph{equal} terms:
\begin{equation}
  \Sigmlong
    = 2 \times \frac{\sigma^2}{2}\mathrm{diag}(\omega_f^{-2}, \omega_v^{-2})
    = \sigma^2\mathrm{diag}(\omega_f^{-2}, \omega_v^{-2}).
\end{equation}
Each term contributes $\sigma^2/2$; their sum is $\sigma^2$, not $\sigma^2/2$.
The earlier errors aroses from computing only one  of the two equal terms
in~\eqref{eq:Sigmalong}, yielding
$\Sigmlong^{\mathrm{wrong}} = (\sigma^2/2)\mathrm{diag}(\omega_f^{-2}, \omega_v^{-2})$.
This halvels the true covariance, doubles the precision matrix $\Sigmlong^{-1}$,
doubles $\mathcal{D}_t$ for every observation, and inflates the empirical FAR
from to $\alpha$  $\exp(-D_\alpha/4) = \exp(D_\alpha/4)\cdot\alpha$, i.e.\ by
the factor $\exp(D_\alpha/4)$. At $\alpha = 0.05$ ($D_{0.05} = 5.991$):
inflate FAR $= \exp(-5.991/4) \approx 0.224$, a $4.47\times$ inflation.
Note :  this equals $\alpha^{1/2} \approx 0.224$ numerically for the $\chi^2(2)$
distribution, but the mechanism is
$\mathrm{FAR}_{\mathrm{wrong}} = \Pr(\chi^2(2) \geq D_\alpha/2) = e^{-D_\alpha/4}$.

\section{Notation Reference}
\label{app:notation}

\begin{table*}[!t]
\caption{Notation Summary}
\label{tab:notation}
\centering
\small
\setlength{\tabcolsep}{4pt}
\renewcommand{\arraystretch}{1.05}
\begin{tabularx}{\textwidth}{@{}>{\raggedright\arraybackslash}p{0.38\textwidth}>{\raggedright\arraybackslash}X@{}}
\toprule
\textbf{Symbol} & \textbf{Meaning} \\
\midrule
$\Mzero$ & GFM droop manifold \\
$\xip = (\delta\omega, \delta V)^\top$ & Transverse state vector (pu) \\
$\Aperp(k_P, k_Q)$ & Transverse linearisation matrix ($2\times 2$) \\
$\omega_f,\, \omega_v$ & Frequency and voltage filter bandwidths (rad/s) \\
$k_P,\, k_Q$ & Active and reactive droop gains (pu) \\
$\beta = (\omega_f + \omega_v)/2$ & Mean transverse contraction rate (rad/s);
  $\beta_{\min} = \min(\omega_f, \omega_v)$ for small $k_Pk_Q$ \\
$\hat\rho$ & Slow synchronisation rate on $\Mzero$ (rad/s) \\
$\PF = \beta - \hat\rho$ & Fenichel spectral gap (rad/s) \\
$\sigma$ & Transverse noise intensity (pu) \\
$\Siginst$ & Instantaneous stationary Lyapunov covariance \\
$\Sigmlong$ & Long-run covariance (correct normalisation for the statistic) \\
$\bxipt$ & Window-mean transverse state \\
$T_w$ & Measurement window length (s) \\
$\mathcal{D}_t = T_w{\bxipt}^\top\Sigmlong^{-1}\bxipt$ & NAIM detection statistic \\
$D_\alpha = -2\ln\alpha$ & Detection threshold \\
$d_\omega,\, d_V$ & Frequency and voltage modal contributions \\
$p = e^{-\mathcal{D}/2}$ & $p$-value ($\chi^2(2)$ tail probability) \\
$\alpha$ & Target false-alarm rate \\
$\beta T_w$ & Normalised window length (dimensionless) \\
$C$ & Berry--Esseen constant ($C = 1.6704$, empirically fitted) \\
$N_{\mathrm{MC}}$ & Number of Monte Carlo realisations \\
$d_{\mathrm{KS}}$ & Kolmogorov--Smirnov distance from $\chi^2(2)$ \\
\bottomrule
\end{tabularx}
\end{table*}

\clearpage

\end{document}